\DeclareMathOperator*{\argmin}{arg\,min}
\newcommand{\nocontentsline}[3]{}
\newcommand{\tocless}[2]{\bgroup\let\addcontentsline=\nocontentsline#1{#2}\egroup}
\newtheorem{proposition}{Proposition}
\newtheorem{lemma}{Lemma}
\newtheorem{problem}{Problem}
\newtheorem{corollary}{Corollary}
\newtheorem{assumption}{Assumption}
\newtheorem*{definition}{Definition}
\title{
\textsc{Compressing over-the-counter markets}\footnote{\tiny We are grateful to Tuomas Peltonen for extensive comments and discussions at all stages. We also thank for comments and suggestions: Darrell Duffie, Per Sjoberg, Mark Yallop, Mathias Dewatripont, Andre Sapir, Rama Cont, Michael Gofman, Hoaxiang Zhu, Stephen O'Connor, Marco Pagano, Diana Iercosan (Discussant), Javier Suarez, Antoine Mandel, Kartik Anand (Discussant), Deborah Lucas, Andrew Lo, Hans Degryse, Mike Mariathasan, Jorge Abad, Inaki Aldasoro, Christoph Aymanns, Bernard Fortz, Daniel Gros, Renaud Lambiotte, Steven Ongena, Marc Chesney, Ashkan Nikeghbali, Paulina Przewoska, Roberto Stok, Olaf Weeken, Benjamin Vandermarliere, Xianglin Flora Meng, Lucio Biaise, Nordine Abidi (Discussant) and participants to the ESRB Joint Expert Group on Interconnectedness meetings, De Nederlandsche Bank research seminar, the ESRB research seminar, the joint ESRB/BoF/RiskLab SRA conference, the LSE Systemic Risk Centre Conference on Systemic Risk Modeling, the BENET 2016 conference, the 2017 AEA/ASSA Chicago meeting, the Macro Connection MIT Media Lab meeting, the NaXys seminar, the Imperial College Mathematical Finance seminar, the WEAI/IBEFA Conference, the MIT Institute for Data, Systems and Society seminar, the SIAM Workshop on Networks, the Second ESRB Annual Conference, the 3rd IWH-FIN-FIRE Workshop, the HEC Montr\'eal Workshop of Measurement and Control of Systemic Risk, the Worcester Polytechnic Institute seminar on Mathematical Finance, the St. Gallen Finance Seminar, the KUL Finance Seminar and the Belgian Financial Research Forum. The empirical part of this work was produced while the authors were visiting the European Systemic Risk Board. We therefore acknowledge the outstanding support from the ESRB Secretariat and the research programme on derivatives led by the Joint Expert Group on Interconnectedness of the ESRB, chaired by Laurent Clerc and Alberto Giovannini. The authors declare no relevant or material financial interests that relate to the research described in this paper.}\\
{\normalsize  First Version: May, 2017 ~~~ This Version: April, 2019}
}
\date{}
\author{
	% \vspace{-0.6cm}
  \textsc{Marco D'Errico} \vspace{-0.8cm}\\
  European Systemic Risk Board Secretariat,\vspace{-0.5cm}\\ European Central Bank
  \and
  % \vspace{-0.6cm}
  \textsc{Tarik Roukny}\thanks{Corresponding author: tarik.roukny@kuleuven.be.\\ ~~Warmoesberg 26, 1000 Brussel - Belgium\\ Tel.: +32 16 32 65 11}\vspace{-0.8cm}\\
  Katholiek Universiteit Leuven\vspace{-0.5cm}\\
  Massachusetts Institute of Technology
}
\begin{document}

\newgeometry{left=20mm,right=20mm}

\maketitle
% \thispagestyle{empty}
% \pagenumbering{arabic}
\vspace{-1.4cm}
\newpage
	\begin{abstract}
		\vspace{-0.3cm}
		\linespread{0.1}{
		Over-the-counter markets are at the center of the postcrisis global reform of the financial system. We show how the size and structure of such markets can undergo rapid and extensive changes when participants engage in portfolio compression, a post-trade netting technology. Tightly-knit and concentrated trading structures, as featured by many large over-the-counter markets, are especially susceptible to reductions of notional and reconfigurations of network structure resulting from compression activities. Using transaction-level data on credit-default-swaps markets, we estimate reduction levels consistent with the historical development observed in these markets since the Global Financial Crisis. Finally, we study the effect of a mandate to centrally clear over-the-counter markets. When participants engage in both central clearing and portfolio compression, we find large netting failures if clearinghouses proliferate. Allowing for compression across clearinghouses by-and-large offsets this adverse effect.}

% \vspace{0.5cm}
\noindent	\textbf{Keywords}: over-the-counter trading, multilateral netting, derivatives, networks, financial regulation, central clearing

\noindent	\textbf{JEL codes}: G20, G28, G15, C61, L14
% C61, D53, D85, G01, G10, G12
	\end{abstract}

\restoregeometry

\clearpage
%%%%%%%%%%%%%%%%%%%%%%%%%%%%%%%%%%%%%%%%%%%%%%%%%%%%%%%%
%%%%%%%%%%%%%%%%%%%%%%%%%%%%%%%%%%%%%%%%%%%%%%%%%%%%%%%%

\setcounter{page}{1}
\restoregeometry

\section{Introduction}\label{sec:intro}
%%%%%%%%%%%%%%%%%%%%%%%%%%%%%%%%%%%%%%%%%%%%%%%%%%%%%%%%%%%%%%%%%%%%%%%%%%%%%%%%%%%%%%
%%%%%%%%%%%%%%%%%%%%%%%%%%%%%%%%%%%%%%%%%%%%%%%%%%%%%%%%%%%%%%%%%%%%%%%%%%%%%%%%%%%%%%

	Over-the-counter (OTC) markets held a central role during the Global Financial Crisis (GFC). As a result, several jurisdictions mandated major regulatory reforms including central clearing, increased capital requirements and enhanced trading transparency. Underpinning these initiatives was the need to contain counterparty risk emerging from excessive leverage and the lack of position transparency.\footnote{Cases such as the American International Group (AIG) have illustrated how the opacity of OTC markets generates a counterparty risk externality. \cite{acharya2014counterparty} show that this risk spills over from bilateral interdependencies and prevents the establishment of contracts with an adequately priced default risk premium. This externality, in turn, incentivises market participants to take on short positions with inefficiently large default risk. In general, it is too costly or infeasible in many realistic OTC market settings to fully internalize counterparty risks because it requires market participants to have the full information of the position of their counterparties.}

	 % In particular, the lack of position transparency characterising such markets 
	% In turn, a set of major regulatory reforms was introduced to limit the factors contributing to systemic risk. In particular, 
	% Major regulatory mandates including centralised clearing as well as capital and collateral requirements were implemented in order to mitigate counterparty risk inherent to such markets.
	% \footnote{In particular, \cite{acharya2014counterparty} show how the structure of OTC markets }

	A most notable effect of these reforms was the downsizing of several OTC markets. For example, the market for Credit-Default-Swaps (CDS) featured a remarkable reduction in size - from USD 61.2 trillion outstanding at end-2007 to USD 8.3 trillion outstanding at mid 2018.\footnote{See the Bank of International Settlement Statistics on OTC derivatives outstanding: \url{https://www.bis.org/statistics/derstats.htm}} In principle, this $86\%$ reduction could reflect a lowering in trading activity. However, several sources have instead attributed its origin to the massive adoption of a risk management technique prompted by the new regulation, namely portfolio compression.\footnote{Several reports from the Bank of International Settlement have provided evidence that the size reduction of OTC markets was limitedly associated with changes in transaction volumes. Instead, they have shown that it coincided with increased use of portfolio compression. See for example \cite{vause2010counterparty,schrimpf2015outstanding,aldasoro2018credit}. The private sector has also reported similar insights (see for example \cite{kaya2013reforming}). According to \cite{isda2015impact} - the International Swaps and Derivatives Association report - portfolio compression is responsible for a reduction of 67\% in total gross notional of Interest Rate Swaps. Media outlet such as The Economist and Financial Times have also addressed this issue at different points in time. See for example \cite{economist2008great,economist2009number,financialtimes2015compression,financialtimes2016otc}. Finally, TriOptima, a leader in the compression business, reports over one quadrillion USD in notional elimination through their service. Continuous updates are reported online \url{http://www.trioptima.com/services/triReduce.html}.}

	Portfolio compression is a netting mechanism which allows participants to coordinate contract replacements in order to reduce their portfolio of mutual obligations while maintaining the same underlying market risks. Figure \ref{fig-trilateral-compression} illustrates the process in a stylized example. Recognizing the need to limit excessive gross exposures,\footnote{According to \cite{cecchetti2009central}: ``Before the crisis, market participants and regulators focused on net risk exposures, which were judged to be comparatively modest. In contrast, less attention was given to the large size of their gross exposures. But the crisis has cast doubt on the apparent safety of firms that have small net exposures associated with large gross positions. As major market-makers suffered severe credit losses, their access to funding declined much faster than nearly anyone expected. As a result, it became increasingly difficult for them to fund market-making activities in OTC derivatives markets – and when that happened, it was the gross exposures that mattered.''} policy makers have broadly supported the adoption of portfolio compression (see Section~\ref{sec:background}). Importantly, other elements of the postcrisis regulatory environment have indirectly accelerated the private demand for portfolio compression. In fact, in addition to reducing counterparty risk, this technology allows participants to reduce both capital and margin requirements \citep{duffie2017financial}. Albeit the impact, analytical and empirical analysis have been so far limited due to the lack of adequate data and the opacity of the practice.

	In this paper, we present an analytical framework that explains how the size of OTC markets can be subject to large and rapid reductions when participants engage in portfolio compression cycles. We apply our framework to transaction-level data and estimate reduction ranges at par with the levels exhibited by the CDS markets. In addition, we study the interplay between central clearing - another major regulatory reform - and portfolio compression. We find large netting failures when clearinghouses proliferate and show that multilateral compression across clearinghouses can by-and-large compensate this adverse effect. 

	In our model, a market consists of a network of outstanding and fungible obligations among market participants. Netting opportunities exist when at least one participant intermediates the same obligation. The total amount of notional eligible for compression, henceforth \textit{market excess}, is further determined by the existence and length of chains of intermediation in the market. The exact fraction of excess that can be compressed is bounded by individual portfolio preferences and potential regulatory constraints. We study a spectrum of benchmark preference settings, resulting in so-called \textit{compression tolerances}, and investigate their feasibility and efficiency. These benchmarks differ in the extent to which participants will accept changes in their original sets of counterparty relationships. For instance, dealer banks may be indifferent vis-\`a-vis changes in their trades with other dealers while being conservative on the trading relationships they have established with their customers. We obtain a ranking of the full spectrum, highlighting a trade-off between the efficiency of a compression cycle and the degree of tolerances set by portfolio preferences: higher netting opportunities arise when participants are less conservative in their original sets of counterparties.

	Next, we estimate the levels of excess and compression efficiency empirically using a unique transaction-level dataset. To the best of our knowledge, this is the first calibrated analysis of the potential impact of a market-wide adoption of portfolio compression. We use a dataset consisting of all CDS contracts bought and sold by legal entities based in the European Union (EU) and all their counterparties. First, we find that the majority of markets exhibit levels of excess accounting for $75\%$ or more of their total notional size. Furthermore, even the most conservative scenario, in which all participants preserve their original trading relationships, eliminates on average more than $85\%$ of the excess in markets, for a total of at least two thirds of their original size.\footnote{These results and statistics are in line with evidence provided by several reports. See for example \cite{vause2010counterparty} for CDS globally and \cite{occ2016} for US derivatives.}

	These results are explained by the observed tightly-knit structure of the intra-dealer segment which allows for large excess elimination while preserving counterparty trading relationships. Nevertheless, we find that the efficiency of a conservative compression is impaired if market participants seek to bilaterally net out their positions beforehand. This effect is dampened when compression preferences are relaxed in the intra-dealer segment.

	\begin{figure}
			\centering
			\begin{subfigure}[t]{0.3\textwidth}
            
            	\begin{tikzpicture}
						\begin{scope}[every node/.style={circle,thick,draw}]
							\node (i) at (0,0) {i};
							\node (j) at (0,3) {j};
							\node (k) at (1.5,1.5) {k};
							\node (l) at (3.5,1.5) {l};
						\end{scope}
						\begin{scope}[every node/.style={fill=white,circle},
						    every edge/.style={draw=red,very thick}]
						    \path [->] (i) edge node {$5$} (j);
						    \path [->] (j) edge node {$10$} (k);
						    \path [->] (k) edge node {$20$} (i);
						    \path [->] (k) edge node {$10$} (l);
						\end{scope}
					\end{tikzpicture}
                \caption{Before Compression}
            \end{subfigure}
            \begin{subfigure}[t]{0.3\textwidth}
            	\begin{tikzpicture}
						\begin{scope}[every node/.style={circle,thick,draw}]
							\node (i) at (0,0) {i};
							\node (j) at (0,3) {j};
							\node (k) at (1.5,1.5) {k};
							\node (l) at (3.5,1.5) {l};
						\end{scope}
						\begin{scope}[every node/.style={fill=white,circle},
						    every edge/.style={draw=blue,very thick}]
						    % \path [->] (A) edge node {$0$} (B);
						    \path [->] (j) edge node {$5$} (k);
						    \path [->] (k) edge node {$15$} (i);
					    \end{scope}
					    \begin{scope}[every node/.style={fill=white,circle},
						    every edge/.style={draw=red,very thick}]
						    \path [->] (k) edge node {$10$} (l);
						\end{scope}
					\end{tikzpicture}
                \caption{After Compression}
            \end{subfigure}
			\caption{A graphical example of portfolio compression. Panel (a) exhibits a market consisting of 4 market participants (i, j, k, l) with short and long positions on the same asset with different notional values. The aggregate gross notional of the market is 45. Panel (b) shows a possible compression solution to the market: by eliminating the obligations between i, j and k and generating two new obligations, the net position of each participant is unchanged while the gross positions of i, j and k have been reduced by 5. In aggregate, market size has been reduced by 15 units.}\label{fig-trilateral-compression}
	\end{figure}

	Finally, we run a stylized study of market excess and compression efficiency when participants adopt both portfolio compression and central clearing. Despite the multilateral netting opportunities brought by centralization, clearing also duplicates the notional value of each obligations. The effect of central clearing on market excess is therefore ambiguous, in particular, when multiple clearinghouses exist. When clearing takes place with one single central clearing counterparty (CCP), our empirically calibrated results indicate that this setting is dominated by - but close to - multilateral compression without central clearing. A proliferation of CCPs significantly and systematically increases the gap. Markets with several CCPs prevent large netting opportunities among common clearing members. Remarkably, we find that such effects are by-and-large reduced when a compression mechanism exist across CCPs, that is, when members of several clearinghouses can compress beyond their bilateral exposure to each clearinghouse independently. 

	The results of this paper contribute to several strands of the literature as well as ongoing policy debates. 

	Empirical studies including \cite{shachar2012exposing,benos2013structure,peltonen2014network,derrico2016flow,abad2016shedding,ali2016systemic} show that OTC markets are characterized by large concentration of notional within the intra-dealer segment. In particular, \cite{derrico2016flow} observe that in the global CDS market, intermediaries form a strongly connected structure which entails several closed chains of intermediation. The authors also show that between 70\% and 80\% of the notional in CDS markets is in the intra-dealer market across all reference entities. \cite{atkeson2013market} report that, in the US, on average, about 95\% of OTC derivatives gross notional is concentrated in the top five banks. \cite{abad2016shedding} report similar levels for interest rate swap, CDS and foreign exchange markets in the EU segment. This paper contributes by proposing a well defined measure of the market-level gap between gross and net notional. This so-called excess indicator, in turn, corresponds to the maximum amount of notional eligible for compression. Importantly, our results show that an explicit modeling of the entire network of bilateral obligation is necessary to estimate the efficiency of portfolio compression. Two distinct networks may exhibit the same level of excess while offering largely different opportunities for compression. We find that it is the combination of high notional concentration and long cycles of intermediation that allows for large reductions of excess even under conservative preferences.

	Theoretical analyses of OTC markets have addressed trading frictions and prices with a focus on the role of intermediaries \citep[see][]{duffie2005over,lagos2009liquidity,gofman2016efficiency,babus2017endogenous}. In particular, \cite{atkeson2015entry} and \cite{babus2013trading} find equilibrium conditions to observe large concentration in few market participants. On the one hand, \cite{babus2013trading} show the importance of the centrality of a dealer to reduce trading costs. On the other hand \cite{atkeson2015entry} show that only large participants can enter the market as dealers in order to make intermediation profits. More recently, studies have analyzed counterpary risk pricing in OTC markets. \cite{acharya2014counterparty} and \cite{frei2017managing} show that transparency and trade size limits usually prevent efficient risk pricing, respectively. While we study arbitrary sets of trading relationships, our results show that, under realistic assumptions, the adoption of post-trade technologies can largely impact the size of dealers - and the market as a whole - thus making such markets prone to rapid structural mutations when participants coordinate. This result is particularly revelant in lights of the role played by large and mispriced positions held by OTC dealers during the GFC as discussed by \cite{cecchetti2009central}. 

	The study of post-trade services has so far mainly focused on the costs and benefits of central clearing. \cite{duffie2011does} provide the ground work of this strand of works. The authors show that, while central clearing helps to reduce exposures at the asset class level, clearing heterogeneous asset classes removes the benefits of netting. \cite{cont2014central} explore the effect of heterogeneity across asset classes and show that a more risk sensitive approach to asset classes can alleviate the need to concentrate all netting activities in one single CCP. \cite{duffie2015central} and \cite{ghamami2017does} study the impact of clearing on collateral and capital requirements and show that trading costs can be higher or lower depending on the proliferation of CCPs and the extent to which netting opportunities can be exploited. The results of this paper on the effect of multiple CCPs provide a quantitative assessment of the loss in netting efficiencies and its impact on market excess. Furthermore, the finding that compression across CCPs vastly removes netting inefficiencies shows that multilateral compression among CCPs can address the trade-off between full centralization and efficiency losses introduced by \cite{duffie2011does}. 

	Regarding the theory of portfolio compression, \cite{o2014optimizing} stands as the main theoretical contribution. The author numerically analyzes the performances of different versions of compression algorithm on a synthetic network where all banks are connected. The author shows that, if performed optimally, compression mitigates counterparty risk. Our work differs in two main ways. First, we study sparse and concentrated market structures which correspond to a realistic setting distinguishing dealers from customers. In addition, we provide analytical solutions to the necessary and sufficient conditions for compression as a function of a spectrum of portfolio preferences. Finally, we apply our framework to transaction-level data and identify bounds of compression for each preference setting in OTC derivatives markets.

	Finally, our work relates to the growing stream of literature highlighting the important relationship between financial interconnectedness, stability and policy making \cite[see][]{allen2008networks,yellen2013interconnectedness}. These works explore the role of interdependencies on the propagation of distress \citep{allen2000financial,elliott2014financial,acemoglu2015systemic} and regulatory oversight \citep{alvarez2015mandatory,roukny2016interconnectedness,erol2017network,bernard2017bail}. Our paper shows how post-trade practices can affect the network of outstanding positions in financial markets. This matters both for the stability of such markets and for the tools required by policy makers to assess and address market stability. Compression reconfigures counterparty risk and intermediation chains which have held a central role in the propagation of distress during the 2007-2009 financial crises \citep{haldane2009rethinking,ecb2009counterpartycds}. 
	% Compression also affects market size dynamics thus distorting the aggregate view of market liquidity. 

	The rest of the paper is organized as follows. We provide an overview of the institutional background in Section \ref{sec:background}. In Section~\ref{sec:market}, we present our stylized model of OTC market and the  analysis of market excess. Section~\ref{sec:compression} presents a mathematical definition of portfolio compression; introduces benchmark preference settings; identifies feasibility and efficiency levels of each approach. In Section~\ref{sec:empiric} and~\ref{sec:compress-empirics}, we report the results of our empirical analysis of excess and compression efficiency in real OTC derivatives markets. In Section~\ref{sec:ccp}, we complement our framework with the addition of central clearing and study the impact on excess. Last, we conclude and discuss avenues for further research. The appendices provide proofs of the propositions and lemmas, additional results as well as the analytical details for the algorithms used in the paper. 

%%%%%%%%%%%%%%%%%%%%%%%%%%%%%%%%%%%%%%%%%%%%%%%%%%%%%%%%%%%%%%%%%%%%%%%%%%%%%%%%%%%%%%
%%%%%%%%%%%%%%%%%%%%%%%%%%%%%%%%%%%%%%%%%%%%%%%%%%%%%%%%%%%%%%%%%%%%%%%%%%%%%%%%%%%%%%
\section{Institutional Background\label{sec:background}}
%%%%%%%%%%%%%%%%%%%%%%%%%%%%%%%%%%%%%%%%%%%%%%%%%%%%%%%%%%%%%%%%%%%%%%%%%%%%%%%%%%%%%%
%%%%%%%%%%%%%%%%%%%%%%%%%%%%%%%%%%%%%%%%%%%%%%%%%%%%%%%%%%%%%%%%%%%%%%%%%%%%%%%%%%%%%%

	In contrast to centrally organized markets where quotes are available to all market participants and exchange rules are explicit, participants in OTC markets trade bilaterally and have to engage in search and bargaining processes. The decentralized nature of these markets makes them opaque as market information is often limited for most agents \citep{duffie2012dark}. In particular, the size, complexity and opacity of OTC derivatives markets have been a key target of the major regulatory reforms following the after-crisis meeting of the G-20 in September 2009. The summit resulted in a commitment to ``\textit{make sure our regulatory system for banks and other financial firms reins in the excesses that led to the crisis}"\footnote{Art. 16 of the Leader's Statement of the Pittsburgh Summit}. This initiative prompted two major financial regulatory reforms: the Dodd-Frank act in the US and the European Market Infrastructure Regulation (EMIR) in Europe. Such reforms include mandatory clearing of specific asset classes and standardized trading activity reports. In addition, the completion of the Basel III accords led to a general increases in capital and collateral requirements, especially regarding uncleared over-the-counter transactions.\footnote{Formally, the Markets in Financial Instrument Regulation (MiFIR) defines portfolio compression as follows: ``Portfolio compression is a risk reduction service in which two or more counterparties wholly or partially terminate some or all of the derivatives submitted by those counterparties for inclusion in the portfolio compression and replace the terminated derivatives with another derivatives whose combined notional value is less than the combined notional value of the terminated derivatives'' (see MiFIR, EU Regulation No 600/2014, Article 2 (47)). A similar definition is provided under the Dodd Franck act (see CFTC Regulation 23.500(h)).} This set of policy changes generated a large demand for novel services to accommodate the renewed regulatory environment \citep{fsb2017review}. In particular, efficient post-trade portfolio management became crucial to large financial institutions \citep{duffie2017financial}.

	Portfolio compression is a post-trade mechanism which exploits multilateral netting opportunities to reduce counterparty risk (i.e., gross exposures) while maintaining similar market risk (i.e., net exposures). The netting of financial agreements is a general process that can encompass different mechanisms. For example, \textit{close-out} netting is a bilateral operation that takes place after the default of one counterparty in order to settle payments on the net flow of obligations. In this respect, portfolio compression can be formally defined as a \textit{multilateral novation netting technique} that does not require the participation of a central clearinghouse. Rather than rejecting the participation of a central clearinghouse, this definition states that compression can be achieved even in the absence of a central counterparty. This distinction is relevant as \textit{multilateral netting} has often been equated only with central clearing. For sake of clarity and consistency with the current industry practices, we choose to articulate to remainder of the paper using the wordings related to \textit{compression}. 
	
	Over the last decade, the adoption of portfolio compression in derivatives markets has reportedly brought major changes. According to \cite{isda2015impact} - the International Swaps and Derivatives Association report - portfolio compression is responsible for a reduction of 67\% in total gross notional of Interest Rate Swaps. \cite{aldasoro2018credit} attributes the reduction of Credit Default Swap notional to a sixth of the levels exhibited a decade before to an extensive use of portfolio compression after the crisis. TriOptima, a leader in the compression business, reports over one quadrillion USD in notional elimination through their services.\footnote{Continuous updates are reported in \url{http://www.trioptima.com/services/triReduce.html}. Last check June 2017.}

	The mechanism of portfolio compression can also be seen as a multilateral deleveraging process operated without capital injection nor forced asset sales. Under the capital and collateral requirements resulting from the regulatory reforms, market participants engaging in portfolio compression are able to alleviate capital and collateral needs while preserving their capital structure and net market balances.\footnote{For instance, capital requirements under the Basel framework are computed including gross derivatives exposures \citep{basel2016capital}} Overall, we observe that the growing adoption of compression services has been driven by both incentives to improve risk management and adapt to the new regulatory requirements. 	

	In practice, multilateral netting opportunities can be identified only once portfolio information is obtained from several participants. However, it is individually undesirable for competing financial institutions to disclose such information among each other. Third-party service providers typically come at play to maintain privacy and provide guidance to optimize the outcome. To run a full compression cycle, compression services (i) collect data provided by their clients, (ii) reconstruct the web of obligations amongst them, (iii) identify optimal compression solutions and (iv) generate individual portfolio modification instructions to each client independently. 

	Portfolio compression has, in general, received a global regulatory support. For example, under the European Market Infrastructure Regulation (EMIR), institutions that trade more than 500 contracts with each other are required to seek to compress their trades at least twice a year.\footnote{See Article 14 of Commission Delegated Regulation (EU) No 149/2013 of 19 December 2012 supplementing Regulation (EU) No 648/2012 of the European Parliament and of the Council with regard to regulatory technical standards on indirect clearing arrangements, the clearing obligation, the public register, access to a trading venue, non-financial counterparties, and risk mitigation techniques for OTC derivatives contracts not cleared by a CCP (OJ L 52, 23.2.2013, p. 11- ``Commission Delegated Regulation on Clearing Thresholds'' or ``RTS'')} However, research on portfolio compression has been limited. In-depth analyses on the impact of portfolio compression for both markets micro-structure and financial stability has been lacking. 
%%%%%%%%%%%%%%%%%%%%%%%%%%%%%%%%%%%%%%%%%%%%%%%%%%%%%%%%%%%%%%%%%%%%%%%%%%%%%%%%%%%%%%
%%%%%%%%%%%%%%%%%%%%%%%%%%%%%%%%%%%%%%%%%%%%%%%%%%%%%%%%%%%%%%%%%%%%%%%%%%%%%%%%%%%%%%
\section{The model\label{sec:market}}
%%%%%%%%%%%%%%%%%%%%%%%%%%%%%%%%%%%%%%%%%%%%%%%%%%%%%%%%%%%%%%%%%%%%%%%%%%%%%%%%%%%%%%
%%%%%%%%%%%%%%%%%%%%%%%%%%%%%%%%%%%%%%%%%%%%%%%%%%%%%%%%%%%%%%%%%%%%%%%%%%%%%%%%%%%%%%

	We consider an Over-The-Counter (OTC) market composed of $n$ market participants denoted by the set $N = \{1,2,...,n\}$. These participants trade contracts with each other and establish a series of bilateral positions resulting in outstanding gross exposures stored in the $n\times n$ matrix $E$ with elements $e_{ij}\geq0$. The directionality departs from the seller $i$ to the buyer $j$ with $i,j \in N$. While we keep the contract type general, we assume that the resulting obligations are fungible: they have the same payoff structure from the market participants' perspective and can therefore be algebraically summed. The whole set of outstanding obligations in the market constitutes a financial \textit{network} or \textit{graph} $G$ = $(N, E)$. 

	% An outstanding obligation is represented by $e_{ij}\geq0$ whose value corresponds to the notional value of the obligation and the directionality departs from the seller $i$ to the buyer $j$ with $i,j \in N$.

	% . Formally, we have the following definition: 

	% \begin{definition}[Financial Network]
	%     A market is represented as a \textit{network} or \textit{graph} $G$ with the pair $(N, E)$ where $N$ is a set of participants present in the market and $E$ is a set of directed outstanding fungible obligations (i.e., edges) between two participants in the market. An outstanding obligation is represented by $e_{ij}\geq0$ whose value corresponds to the notional value of the obligation and the directionality departs from the seller $i$ to the buyer $j$ with $i,j \in N$.
	% \end{definition}

	The gross position of a market participant $i$ is the sum of all obligations' notional value involving her on either side (i.e., buyer and seller): $v_i^{\text{gross}} = \sum_j e_{ij} + \sum_j e_{ji} = \sum_j \left( e_{ij} + e_{ji}\right)$. The net position of market participant $i$ is the difference between the aggregated sides: $v_i^{\text{net}} = \sum_j e_{ij} - \sum_j e_{ji} = \sum_j \left( e_{ij} - e_{ji}\right)$. We also define the \textit{total gross notional} of the market as the sum of the notional amounts of all obligations: $x = \sum_{i}\sum_{j} e_{ij}.$

	Finally, market participants can either be customers or dealers.  Customers only enter the market to buy or sell a given contract. In contrast, dealers also intermediate between other market participants.\footnote{Dealers can also trade on their own. Hence, they do not necessarily have a matched position.} We use the following indicator to identify dealers in the market:

	\begin{equation}
      \delta(i) = \left\{
                  \begin{array}{lll}
                    1 ~~~~ \mbox{if} ~ \sum e_{ij}.\sum e_{ji} > 0 ~~~ & \mbox{\textbf{(dealer)}}\\
                    0 ~~~~ \mbox{otherwise} ~~~~~~ & \mbox{\textbf{(customer)}}
                  \end{array}
                \right. \nonumber
	\end{equation}
	    
	This framework generalizes the modeling approach of \cite{atkeson2015entry} with regard to market participant types. Three types of trading relationships can exist in the market: dealer-customer, dealer-dealer and customer-customer.
%%%%%%%%%%%%%%%%%%%%%%%%%%%%%%%%%%%%%%%%%%%%%%%%%%%%%%%%%%%%%%%%%%%%%%%%%%%%%%%%%%%%%%
%%%%%%%%%%%%%%%%%%%%%%%%%%%%%%%%%%%%%%%%%%%%%%%%%%%%%%%%%%%%%%%%%%%%%%%%%%%%%%%%%%%%%%
	\subsection{Market excess\label{sec:excess}}
%%%%%%%%%%%%%%%%%%%%%%%%%%%%%%%%%%%%%%%%%%%%%%%%%%%%%%%%%%%%%%%%%%%%%%%%%%%%%%%%%%%%%%
%%%%%%%%%%%%%%%%%%%%%%%%%%%%%%%%%%%%%%%%%%%%%%%%%%%%%%%%%%%%%%%%%%%%%%%%%%%%%%%%%%%%%%

	% Once any arbitrary financial network is established, netting opportunities can arise. In the following, we introduce an absolute measure of netting opportunities at the market level. At this stage, we do not consider the private valuation of the offsetting obligations originating the netting opportunities. Instead, we focus first on determining the netting upper bound. We explore the role of individual preferences in Section~\ref{sec:compression}.

	% Let us motivate our approach with an empirical illustration. 
	Figure \ref{fig-bowtie-illustration} shows the network of obligations of an actual OTC market for CDS contracts. Customers buying the CDS contract are on the left hand-side (green), customers selling the CDS contract are on the right hand-side and dealers are in the middle (blue and purple where purple nodes are the G-16 dealers). While buyers and sellers have a combined gross share of less than 5\%, their net position is equal to their gross position. In contrast, the set of dealers covers more than 95\% of gross market share while, on average, only one fifth is covered by net positions. As a result, 76\% of the notional held by dealers is the result of offsetting positions. Offsetting positions will constitute the netting set of interest for portfolio compression. 

	\begin{figure}
		\begin{center}
		
		%\fbox{
		\includegraphics[width=0.65\textwidth, trim = {2cm 1.5cm 1.5cm 4cm}]{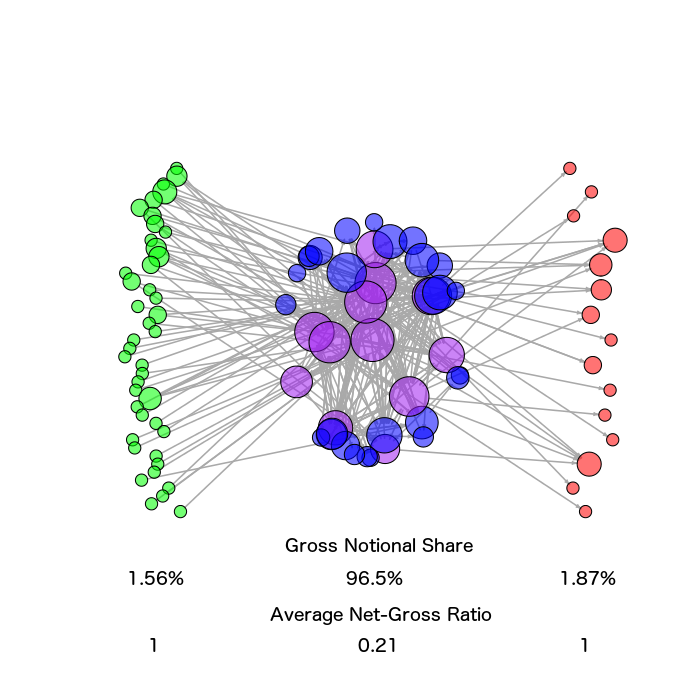}
		%}
		\end{center}
	    \caption{Network illustration of an OTC derivatives market, which maps all outstanding obligations for CDS contracts written on the same reference entity for the month of April 2016. The data were collected under the EMIR reporting framework and thus contain all trades where at least one counterparty is legally based in the EU. Green nodes correspond to buyers of the contract. Red nodes correspond to sellers of the contract. Purple nodes are G-16 dealers. Blue nodes are dealers not belonging to the G-16 dealer set. The first line below the network reports the share of gross notional based on individual positions for the segments: buyers, dealers, sellers. The second line reports the average net-gross ratio for each segment.
	    }
	    
	    \label{fig-bowtie-illustration}
	\end{figure}

	We now formalize the illustrated intuition to identify netting opportunities at the market level. Consider a post-trade operator $\Omega()$ that acts upon a market $G = (N,E)$ in order to modify the set of outstanding obligations: $G'=\Omega(G):(N,E)\rightarrow(N,E')$. Such operation is subject to different constraints. Here, we focus on the following \textit{net-equivalence} constraint which maintains the net position of each institution before and after the operation: $v^{net}_i = v^{\prime, net}_i,\forall i\in N$.

	 % In our framework, an operation on a market is net-equivalent if the net position of each institution is maintained before and after the operation. 
	% Formally, we have: 

	% 		\begin{definition}[Net-Equivalent Operation]\label{def:net-equivalent-operation} 
	% 			Given a market $G = (N,E)$ an operation $\Omega()$ such that $G'=\Omega(G):(N,E)\rightarrow(N,E')$ is net-equivalent if and only if
				
	% 			\begin{equation}
	% 				% N = N' \text{ and }
	% 				v^{net}_i = v^{\prime, net}_i \mbox{~~~~~~} \forall i\in N, \nonumber
	% 			\end{equation}
				
	% 			where $v^{net}_i$ and $v^{'net}_i$ are the net positions of i in $G$ and $G'$ respectively.
	% 		\end{definition}

		% Notice that the markets $G$ and $G'$ differ by the configuration of their obligations which could be due to changes in the notional value of existing obligations or creation and eliminations of bilateral positions. The aggregate gross notional of each net-equivalent market does not need to be equal. 

		% We now show that, g
		Given an original market, it is possible to compute the minimum level of gross notional that can be obtained from a net-equivalent post-trade operation. This value corresponds to the net out-flow.\footnote{A concept similar to net out-flow has been partially adopted in other contexts, under the name of ``market open interest''. For instance, the Intercontinental Exchange clearinghouse defines open interest as ``the sum of the net notional for all participants that are net buyers of protection''. See \url{https://www.theice.com/marketdata/reports}}

		% PROPOSITION 1
		\begin{proposition}\label{prop:minimum-notional}
			Given a market $G=(N,E)$, if $\Omega$ is a net-equivalent operator on $G$, then:
		    \begin{eqnarray}
		    	G'& = & \Omega(G) =  \argmin_{x'=\sum e'_{ij}}(\Omega(G):(N,E)\rightarrow(N',E'))\Leftrightarrow \nonumber\\
		    	x^\prime &=& \frac{1}{2}\sum_{i=1}^n  |v_i^\text{net}| = \sum_{i:\; v_i^\text{net}> 0}^n  v^{net}_i = - \sum_{i: \; v_i^\text{net} < 0}^n  v^{net}_i .
		    			    \label{eq:minnotionalnetequivalent}
		    \end{eqnarray}
		\end{proposition}

		\begin{proof}
			~ Proof see Appendix~\ref{app-proofs}.
		\end{proof}

		% The sum of all net positions must be equal to zero ($\sum_i v_i^{net} = 0$). Nevertheless, looking only at participants with a positive net position, we obtain the total out-flow of the market. If the total amount of notional in the market is smaller then the total out-flow, there will exist at least one market participant with $\sum_j (e'_{ij} - e'_{ji}) < v^{net}_i$. This would violate the net-equivalence condition. Hence, in order to be net-equivalent, the resulting market gross notional must be at least equal to the total out-flow. While there can exist several G', all share the same level of gross notional (i.e., $v'^{gross} = \frac{1}{2}\sum_{i=1}^n |v_i^{net}|$).

		Using Proposition \ref{prop:minimum-notional}, the total level of excess in a market will be the difference between the aggregate gross notional of a given market and the aggregate gross notional of the net-equivalent market with the minimum market aggregate gross notional. 

		\begin{definition}[Excess]\label{def:excess}
			The \textit{excess} in the market is defined as
			\begin{align}
				\Delta(G) & = x - x'\notag\\
				& = \left( \sum_{i=1}^n \sum_{j = 1}^n e_{ij} - \frac{1}{2}\sum_{i=1}^n  |v_i^{net}|\right) \label{eq:excess2}
				% & = \left( \sum_{i=1}^n \sum_{j = 1}^n e_{ij} - \sum_{i: \;\; v_i^{net} > 0}^n  v_i^{net} \right)\label{eq:excess3} 
			\end{align}
		\end{definition}

		% Note that Equation \ref{eq:excess2} and Equation \ref{eq:excess3} are equivalent because $v_i^{net}$ is endogenously determined. 
		The excess in the market is the amount of notional generated by obligations that offset each other. It corresponds to the maximum amount of notional that can be eliminated without affecting net positions.
		%%%%%%%%%%%%%%%%%%%%%%%%%%%%%%%%%%%%%%%%%%%%%%%%%%%%%%%%%%%%%%%%%%%%%%%%%%%%%%%%%%%%%%
		% \subsection{Existence condition of market excess}
		%%%%%%%%%%%%%%%%%%%%%%%%%%%%%%%%%%%%%%%%%%%%%%%%%%%%%%%%%%%%%%%%%%%%%%%%%%%%%%%%%%%%%

		From Definition~\ref{def:excess}, we observe that the excess in a market is strictly positive if at least one participant exhibits a gross position larger than her net position. As we show below, such case only exists when the participant is a dealer.

		% At the market level, we thus have the following result:

		\begin{lemma}\label{lem:iff-excess}
			Given a market $G=(N,E)$,
		    \begin{equation}
				\sum_{i\in N} \delta(i)>0  \Leftrightarrow		   \Delta(G)>0.
		   		        \nonumber
		   \end{equation}
	   	\end{lemma}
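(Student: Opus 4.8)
The plan is to derive a closed-form expression for the excess in terms of each institution's buying and selling volumes, from which the claim follows at once. First I would introduce, for each $i \in N$, the total selling volume $s_i = \sum_j e_{ij}$ and the total buying volume $b_i = \sum_j e_{ji}$, both non-negative since notional values are non-negative. In this notation $v_i^{\text{net}} = s_i - b_i$ and $v_i^{\text{gross}} = s_i + b_i$. A relabeling of the summation indices shows $\sum_i s_i = \sum_i \sum_j e_{ij} = \sum_i b_i = x$, so that the total gross notional can be written symmetrically as $x = \tfrac{1}{2}\sum_i (s_i + b_i)$.

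Next I would invoke Proposition~\ref{prop:minimum-notional} to write $x' = \tfrac{1}{2}\sum_i |v_i^{\text{net}}| = \tfrac{1}{2}\sum_i |s_i - b_i|$. Combining the two expressions gives
\[
\Delta(G) = x - x' = \frac{1}{2}\sum_{i \in N}\Big[(s_i + b_i) - |s_i - b_i|\Big].
\]
The key step is the elementary identity $(a + b) - |a - b| = 2\min(a,b)$, valid for all reals, which collapses each summand to $2\min(s_i, b_i)$ and yields the clean form $\Delta(G) = \sum_{i \in N} \min(s_i, b_i)$.

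With this formula the lemma is immediate. Each term $\min(s_i, b_i)$ is non-negative, so $\Delta(G) \ge 0$ always. Moreover $\min(s_i, b_i) > 0$ precisely when both $s_i > 0$ and $b_i > 0$, i.e. when $s_i \, b_i > 0$, which by Definition~\ref{def:dealer} is exactly the condition $\delta(i) = 1$. Hence the hypothesis $\sum_{i} \delta(i) > 0$ guarantees at least one strictly positive summand while all others remain non-negative, forcing $\Delta(G) > 0$.

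I do not anticipate a genuine obstacle here: the content lies entirely in recognizing the reduction $\Delta(G) = \sum_i \min(s_i, b_i)$, after which the argument is term-by-term sign analysis. The only points requiring a little care are the index relabeling that establishes $\sum_i s_i = \sum_i b_i$ (which uses that the market is a closed system, so every edge contributes once as a sale and once as a purchase) and checking the $\min$ identity in both orderings of $s_i$ and $b_i$. As a bonus, since the formula characterizes $\Delta(G)$ exactly, it also delivers the converse direction suggested by the ``if and only if'' in the definition of intermediation: $\Delta(G) > 0$ forces some $\min(s_i, b_i) > 0$ and hence the presence of a dealer.
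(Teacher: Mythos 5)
Your proof is correct and takes essentially the same route as the paper: both arguments reduce the claim to a per-node comparison of $v_i^{\text{gross}}$ with $|v_i^{\text{net}}|$ summed over the closed system, using Proposition~\ref{prop:minimum-notional} for $x' = \tfrac{1}{2}\sum_i |v_i^{\text{net}}|$ together with the identity $x = \tfrac{1}{2}\sum_i v_i^{\text{gross}}$, and then observing that the per-node gap is strictly positive exactly for dealers. The only difference is presentational: where the paper argues dealer $\Rightarrow$ strict inequality and customer $\Rightarrow$ equality, you package the same fact as the closed form $\Delta(G) = \sum_{i} \min(s_i, b_i)$ via $(a+b)-|a-b| = 2\min(a,b)$, a pleasant quantitative sharpening (which also makes the converse explicit) but not a different argument.
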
    
	    \begin{proof} 
		   ~ Proof see Appendix~\ref{app-proofs}.
		\end{proof}
		% Without dealers, there is no intermediation and all net positions are equal to gross positions. 
		This result shows that intermediation decouples gross and net levels. Excess is a market-wide variable extending the individual level measurement introduced by \cite{atkeson2015entry}. Note that this result also explicitly shows why the existence of notional excess is intrinsic to OTC markets as stated in the following Corollary:\footnote{Note the special case of bilaterally netted positions. In the business practice of some instruments such as CDS contracts, two institutions sometimes terminate or reduce their outstanding bilateral position by creating an offsetting position (i.e., obligation of similar characteristics in the opposite direction). Such setting also generates excess. While this mechanism cannot be framed as intermediation per se, our formal network definition still applies. From a purely mathematical perspective, both participants are active on the buy and sell side and the related results remain.}

	    \begin{corollary}
	  	\label{cor:dealersexcess}  
	    	In the presence of dealers, over-the-counter markets always exhibit strictly positive notional excess.
	    \end{corollary}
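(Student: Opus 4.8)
The plan is to observe that this corollary is essentially a restatement of Lemma~\ref{lem:iff-excess}: the phrase ``in the presence of dealers'' means precisely that at least one participant satisfies $\delta(i)=1$, which is exactly the hypothesis $\sum_{i\in N}\delta(i)>0$. Hence the conclusion $\Delta(G)>0$ is immediate from that lemma. To make the underlying mechanism transparent, however, I would give the direct computation rather than merely invoke the lemma.

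First I would rewrite the excess purely in terms of individual positions. Since relabeling indices gives $\sum_i\sum_j e_{ij}=\sum_i\sum_j e_{ji}=x$, summing the gross positions yields $\sum_i v_i^{gross}=2x$, so that $x=\tfrac12\sum_i v_i^{gross}$. Combining this with the minimal net-equivalent notional $x'=\tfrac12\sum_i|v_i^{net}|$ established in Proposition~\ref{prop:minimum-notional}, the excess becomes
\begin{equation}
\Delta(G)=x-x'=\frac{1}{2}\sum_{i\in N}\bigl(v_i^{gross}-|v_i^{net}|\bigr). \nonumber
\end{equation}

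The key step is a per-node identity. Writing $a_i=\sum_j e_{ij}\ge 0$ for the total amount sold and $b_i=\sum_j e_{ji}\ge 0$ for the total amount bought by $i$, one has $v_i^{gross}=a_i+b_i$ and $|v_i^{net}|=|a_i-b_i|$, so that
\begin{equation}
v_i^{gross}-|v_i^{net}|=(a_i+b_i)-|a_i-b_i|=2\min(a_i,b_i)\ge 0. \nonumber
\end{equation}
Each summand is non-negative, and it is strictly positive exactly when both $a_i>0$ and $b_i>0$, which—because all notionals are non-negative—is equivalent to $a_ib_i>0$, i.e.\ to $\delta(i)=1$. Therefore $\Delta(G)=\sum_{i\in N}\min(a_i,b_i)=\sum_{i:\,\delta(i)=1}\min(a_i,b_i)$, and the presence of even a single dealer forces this sum to be strictly positive, giving $\Delta(G)>0$.

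I do not anticipate a genuine obstacle here, since the result is a direct specialization of Lemma~\ref{lem:iff-excess}. The only point requiring care is the algebraic identity $v_i^{gross}-|v_i^{net}|=2\min(a_i,b_i)$ together with the observation that, for non-negative notionals, the dealer condition $a_ib_i>0$ is equivalent to both $a_i$ and $b_i$ being strictly positive---which is precisely what upgrades the corresponding summand from non-negative to strictly positive and thus delivers the strict inequality.
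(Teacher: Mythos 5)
Your proposal is correct and follows essentially the same route as the paper: there the corollary carries no separate proof, being an immediate instance of Lemma~\ref{lem:iff-excess}, whose appendix proof likewise works node by node with the closed-system identity $\sum_i v_i^{gross}=2x$ and the comparison of $v_i^{gross}$ with $|v_i^{net}|$ (strict inequality for dealers, equality for customers) before summing. Your only addition is the sharpened identity $v_i^{gross}-|v_i^{net}|=2\min(a_i,b_i)$, which upgrades the paper's strict inequality to the closed form $\Delta(G)=\sum_{i:\,\delta(i)=1}\min(a_i,b_i)$ --- a pleasant quantitative refinement, but the same argument.
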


		Even if some OTC markets exhibit customer-customer trading relationships, those interactions do not contribute to notional excess.\footnote{\cite{duffie2005over} stated the prevalent role of dealers in OTC markets. Furthermore \cite{atkeson2013market,abad2016shedding,derrico2016flow} have documented the high levels of notional concentration in the dealers segment of OTC markets, also illustrated in Figure~\ref{fig-bowtie-illustration}. We also document these features in the empirical Section of this paper.}
%%%%%%%%%%%%%%%%%%%%%%%%%%%%%%%%%%%%%%%%%%%%%%%%%%%%%%%%%%%%%%%%%%%%%%%%%%%%%%%%%%%%%%
	\subsection{Market excess decomposition}
%%%%%%%%%%%%%%%%%%%%%%%%%%%%%%%%%%%%%%%%%%%%%%%%%%%%%%%%%%%%%%%%%%%%%%%%%%%%%%%%%%%%%%

			Excess can be decomposed with respect to the intra-dealer segment and the customer segment, respectively. The intra-dealer (sub-)market only contains obligations between dealers while the customer (sub-)market contains obligations where at least one counterparty is a customer. Formally we have:

			\begin{definition}[Intra-dealer and customer market]\label{def:dealer-customer-markets}
			  	The set of obligations $E$ can be segmented in two subsets $E^D$ and $E^C$ such that
			  	\begin{equation}
			  		\delta(i).\delta(j) = 1 ~~~ \forall e_{ij} \in E^D \nonumber
				\end{equation}
				\begin{equation}
			  		\delta(i).\delta(j) = 0 ~~~ \forall e_{ij} \in E^C \nonumber
				\end{equation}
			 	Where $E^D$ is the intra-dealer market and $E^C$ is the customer market and $E^D + E^C = E$.
			\end{definition}

			We find that, in general, the excess is super-additive and cannot be decomposed.

		  	% \begin{proposition}[Additivity of the Excess]\label{prop:additivity-excess}
			  % 	Given a market $G=(N,E)$
					% \begin{enumerate}
					% 	\item $\sum^{dealer}_h(e_{dh} - e_{hd}) = 0, ~~~ \forall d\in D$ \\

					% or

					% 	\item $\sum^{customer^+}_{c^+}e_{dc^+} - \sum^{customer^-}_{c^-}e_{c^-d} = 0, ~~~ \forall d \in D$
					% \end{enumerate}
					% \begin{center}
					% 	then $\Delta(E) = \Delta(E^D) + \Delta(E^C)$
					% \end{center}
		  	% \end{proposition}
		  	\begin{proposition}\label{prop:additivity-excess}
				Given a market $G=(N,E)$, and the two markets $G^1 = (N, E^1)$ and $G^2 = (N, E^2)$ obtained from the partition $\left\{E^1, E^2\right\}$ of $E$, then:
			    \begin{equation}
			    	\Delta(G) \geq \Delta(G^1) + \Delta(G^2) \nonumber
			    \end{equation}
		  	\end{proposition}
		  	\begin{proof}
		  		~ Proof see Appendix~\ref{app-proofs}.
		  	\end{proof}
		  	
			% \noindent which implies that:
			    
			%     \begin{equation}
			%     	\Delta(N,E) \geq \Delta(N,E^D) + \Delta(N,E^C) \nonumber
			%     \end{equation}

			%    \noindent In particular, we have additivity, $\Delta(N,E) = \Delta(N,E^D) + \Delta(N,E^C)$ if
			%     \begin{enumerate}
			%     	\item $\sum^{dealer}_h(e_{dh} - e_{hd}) = 0, ~~~ \forall d\in D$, or
			%         \item $\sum^{customer^+}_{c^+}e_{dc^+} - \sum^{customer^-}_{c^-}e_{c^-d} = 0, ~~~ \forall d \in D$
			%     \end{enumerate}
			% \end{proposition}

			This resul implies that $\Delta(N,E) \geq \Delta(N,E^D) + \Delta(N,E^C)$. We will use this result when considering different preference settings for different segments of the market.\footnote{Strict additiviy, $\Delta(N,E) = \Delta(N,E^D) + \Delta(N,E^C)$, exists when all dealers have a zero net position with regards to all their outstanding obligations with their dealer counterparties (i.e., $\sum^{dealer}_h(e_{dh} - e_{hd}) = 0, \forall d\in D$) or  with their customer counterparties ($\sum^{customer^+}_{c^+}e_{dc^+} - \sum^{customer^-}_{c^-}e_{c^-d} = 0, \forall d \in D$).}

			% The above results state that if all dealers have a zero net position with regard to all their outstanding obligations with (1) their dealer counterparties or (2) their customer counterparties, then the excess can be decomposed between the intra-dealer excess and the dealer-customer excess. In general, we have $\Delta(E) \geq \Delta(E^D) + \Delta(E^C)$. We will use this result when considering different preference settings for different segments of the market. 
%%%%%%%%%%%%%%%%%%%%%%%%%%%%%%%%%%%%%%%%%%%%%%%%%%%%%%%%%%%%%%%%%%%%%%%%%%%%%%%%%%%%%%
%%%%%%%%%%%%%%%%%%%%%%%%%%%%%%%%%%%%%%%%%%%%%%%%%%%%%%%%%%%%%%%%%%%%%%%%%%%%%%%%%%%%%%
\section{Portfolio Compression\label{sec:compression}}
%%%%%%%%%%%%%%%%%%%%%%%%%%%%%%%%%%%%%%%%%%%%%%%%%%%%%%%%%%%%%%%%%%%%%%%%%%%%%%%%%%%%%%
%%%%%%%%%%%%%%%%%%%%%%%%%%%%%%%%%%%%%%%%%%%%%%%%%%%%%%%%%%%%%%%%%%%%%%%%%%%%%%%%%%%%%%

	Building on the framework introduced in the previous section, we study how participants can coordinate to eliminate offsetting obligations using portfolio compression. For sake of simplicity, we do not explicitly model the incentives for participants to compress (see Section~\ref{sec:background}). Therefore, portfolio preferences are considered exogenous at this stage. For each sets of preferences, we identify when and how much excess can be eliminated. An analysis of endogenously driven equilibria is left for future research. 

	We define a compression operator as follows:

	\begin{definition}[Compression]\label{def:compression}
		Given a market $G =(N, E)$, a market $G^\prime = (N, E^\prime) := c(N,E)$ is \textit{compressed} w.r.t. to $G$ if and only if
		\begin{equation}
			v_i^{\prime net} = v_i^{net},  ~~ \mbox{for all} ~ i \in N ~~~~ \text{ and } x_i^{\prime} < x
			\nonumber
		\end{equation}
		where $c()$ is a net-equivalent network operator.
	\end{definition}

	Compression, at the market level, is thus an operation on the network of outstanding bilateral positions which reconfigures the set of obligations while (i) keeping all net positions constant (i.e., net-equivalence) and (ii) reducing gross notional at the market level.\footnote{At this stage, we assume the sets of participants to be the same before and after compression. However, some participants may be compressed out of the market if all their bilateral positions are eliminated. Furthermore, while we do not explicitly account for the role of a Central Clearinghouse so far, the framework accommodates the presence of such special market participant. We elaborate on this point in Section~\ref{sec:ccp}} 
	% By construction, the latter property leads to a reduction of gross notional at the market level: $x' < x$. 
	Admittedly, such definition is canonical and several refinements can be added to the compression operator. We discuss these aspects in Section~\ref{subsec:tolerances}.
	%%%%%%%%%%%%%%%%%%%%%%%%%%%%%%%%%%%%%%%%%%%%%%%%%%%%%%%%%%%%%%%%%%%%%%%%%%%%%%%%%%%%%%
	% \subsection{Feasibility of portfolio compression}   
	%%%%%%%%%%%%%%%%%%%%%%%%%%%%%%%%%%%%%%%%%%%%%%%%%%%%%%%%%%%%%%%%%%%%%%%%%%%%%%%%%%%%%%

		A direct corollary of Lemma \ref{lem:iff-excess} is that participants can effectively engage in portfolio compression only if the market exhibits intermediation.

	    \begin{corollary}\label{cor:condition-compression}
	    	Compression can only take place if there is intermediation in the market.
	    \end{corollary}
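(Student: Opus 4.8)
The plan is to chain together two implications and read them as the contrapositive of the claim: first, that the existence of a feasible compression forces the market's excess to be strictly positive; second, that strictly positive excess can occur only when at least one dealer is present, i.e.\ only when the market exhibits intermediation. In sequence this reads \emph{compression possible} $\Rightarrow \Delta(G)>0 \Rightarrow$ \emph{intermediation}, which is exactly what the corollary asserts.

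For the first implication I would argue directly from Definition~\ref{def:compression}. A compression $c$ is net-equivalent and yields $G'=(N,E')$ with $v_i^{'gross}\le v_i^{gross}$ for every $i$ and at least one strict inequality (necessarily among the gross terms, since all net terms are held equal). Summing these inequalities over $i$ and using the bookkeeping identity $x=\tfrac12\sum_i v_i^{gross}$ gives $x_c<x$, where $x_c=\tfrac12\sum_i v_i^{'gross}$ is the total gross notional of $G'$. Because $G'$ is net-equivalent to $G$, Proposition~\ref{prop:minimum-notional} guarantees that the minimal attainable net-equivalent gross notional $x^\ast=\tfrac12\sum_i|v_i^{net}|$ satisfies $x^\ast\le x_c$, whence
\begin{equation}
\Delta(G)=x-x^\ast\ \ge\ x-x_c\ >\ 0. \nonumber
\end{equation}
So any market admitting a compression has strictly positive excess.

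For the second implication I would proceed by contraposition, assuming no intermediation, i.e.\ $\delta(i)=0$ for every $i$ in the sense of Definition~\ref{def:dealer}. Then each participant is active on only one side, so for every $i$ either $\sum_j e_{ij}=0$ or $\sum_j e_{ji}=0$; in both cases $v_i^{gross}=|v_i^{net}|$. Substituting into the excess formula of Proposition~\ref{prop:minimum-notional} gives $\Delta(G)=x-\tfrac12\sum_i|v_i^{net}|=x-\tfrac12\sum_i v_i^{gross}=x-x=0$. Hence no intermediation implies zero excess, and contrapositively $\Delta(G)>0$ forces $\sum_i\delta(i)>0$. Combining with the first implication completes the argument.

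The step doing the real work is this second one, and it is also where I expect the only genuine subtlety. The corollary is advertised as a direct consequence of Lemma~\ref{lem:iff-excess}, but that lemma as literally stated supplies only the direction \emph{intermediation} $\Rightarrow \Delta(G)>0$; to route from ``compression is possible'' back to ``intermediation exists'' one actually needs the converse, \emph{no intermediation} $\Rightarrow$ \emph{no excess}. Establishing this cleanly rests entirely on the observation that a customer-only market coincides with its own minimal net-equivalent description ($v_i^{gross}=|v_i^{net}|$), leaving no gross notional available to remove. Everything else is arithmetic with the identity $x=\tfrac12\sum_i v_i^{gross}$.
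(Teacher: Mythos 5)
Your proof is correct and follows essentially the same route as the paper, which presents this corollary as an immediate consequence of Lemma~\ref{lem:iff-excess} via precisely your chain: compression $\Rightarrow \Delta(G)>0 \Rightarrow$ intermediation. The subtlety you flag --- that the lemma's statement supplies only the direction \emph{intermediation} $\Rightarrow \Delta(G)>0$ --- is genuine but is already resolved inside the paper's own appendix proof of that lemma, which explicitly treats the case $\sum_{i\in N}\delta(i)=0$ and derives $\Delta(G)=0$ there, i.e.\ exactly the converse step ($v_i^{gross}=|v_i^{net}|$ for all $i$, hence zero excess) that you reconstruct.
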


		The above result constitute a necessary but not sufficient condition. In addition to the net-equivalence condition, additional constraints can be set by participants' individual preferences and arbitrary regulatory policies to the compression exercise. The sufficiency condition must be expressed as a function of all applicable constraints which we express as tolerances.

		% We elaborate below.

		% Note that the intermediation condition is necessary but not sufficient as additional factors can be accounted for to determine the sufficiency of compression. Those factors, called \textit{compression tolerances}, can limit the capacity to compress the excess of a market.
%%%%%%%%%%%%%%%%%%%%%%%%%%%%%%%%%%%%%%%%%%%%%%%%%%%%%%%%%%%%%%%%%%%%%%%%%%%%%%%%%%%%%%
	\subsection{Compression tolerances}\label{subsec:tolerances}
%%%%%%%%%%%%%%%%%%%%%%%%%%%%%%%%%%%%%%%%%%%%%%%%%%%%%%%%%%%%%%%%%%%%%%%%%%%%%%%%%%%%%%

		The full design of a compression operator includes individual portfolio preferences and potential regulatory restrictions. For instance, market participants may be unwilling to compress some specific bilateral positions; policy-makers may prevent specific obligations from being created between some counterparties in the market. These multiple channels lead to several constraints on each potential bilateral pair of participants. As a result, for each possible obligation ($i,j$), we extract the most binding constraints. We refer to this selected set of constraints as \textit{compression tolerances}. Together, they limit the extent to which modifications can be brought to the original set of portfolios during the compression exercise. 
		Formally, compression tolerances form a set of bilateral limits in the following way

		\begin{definition}[Compression tolerances]\label{def:tolerances}
			A compression operator $c()$ s.t. $G' = (N,E') := c(N,E)$ is said to satisfy the set of compression tolerances $ \Gamma = \{(a_{ij},b_{ij}) | a_{ij},b_{ij} \in \mathbb{R}^{+}, i,j \in N\}$ if
		    \[
		    	a_{ij} \leq e'_{ij} \leq b_{ij} ~~ \forall (i,j) \in N^2
		    \]
		with $0 \leq a_{ij} \leq e_{ij} $, $e_{ij} \leq b_{ij}$ $\forall (i, j) \in N^2$.
		\end{definition}

		For each possible bilateral position in the resulting compressed market, there exist a lower ($a_{ij}$) and upper bound ($b_{ij}$). They determine the range of action accepted by the counterparty. Therefore a lower bound (resp. upper bound) cannot be higher (resp. lower) than the original obligation notional, i.e., $a_{ij} \leq e_{ij}$ (resp. $e_{ij} \leq b_{ij}$). Compression tolerances set the limits associated with each bilateral position and, consequently, determine how much excess can be eliminated.\footnote{In the context of compression service providers, compression tolerances determine how much the compression participants clients are willing not to alter their original positions. In derivatives markets, service providers such as TriOptima refer to these constraints as \textit{risk tolerances}. As they directly affect the efficiency of a compression exercise, bargaining can also take place between the service provider and its clients in order to modify those constraints. \textit{Dress rehearsals} are steps in the compression exercise where the service provider informs all the clients on a candidate compression solution and seeks their confirmation. Several iterations can be necessary before a solution satisfying all participants is reached.}.

		% In what follows, we will consider such preference sets as given. The endogenous dimension is left for future research.
	% %%%%%%%%%%%%%%%%%%%%%%%%%%%%%%%%%%%%%%%%%%%%%%%%%%%%%%%%%%%%%%%%%%%%%%%%%%%%%%%%%%%%%%
	% 	\subsection{Residual and redundant excess}
	% %%%%%%%%%%%%%%%%%%%%%%%%%%%%%%%%%%%%%%%%%%%%%%%%%%%%%%%%%%%%%%%%%%%%%%%%%%%%%%%%%%%%%%

		The set of all individual compression tolerances determines the exact set of offsetting obligations that can be included in the compression exercise. We distinguish between \textit{redundant} excess and \textit{residual} excess. The former is the excess that can be compressed while the latter is the excess that remains after compression. The determination of those levels is conditional upon (1) the underlying network of outstanding fungible obligations and (2) the set of all compression tolerances. Formally, we have:

		\begin{definition}[Residual and redundant excess]\label{def:residual-redundant-excess}
			A compression operator $c()$ s.t. $G' = (N, E') := c(N,E)$ satisfying the set of compression tolerances $ \Gamma = \{(a_{ij},b_{ij}) | a_{ij},b_{ij} \in \mathbb{R}^{+}, (i,j) \in N^2\}$ generates:
		    \begin{itemize}
		    	\item  $\Delta_{res}(G) = \Delta(G')$ ~~~ \textbf{(residual excess)}
		        \item $\Delta_{red}(G) = \Delta(G) - \Delta(G')  $ ~~~ \textbf{(redundant excess)}
		    \end{itemize}
		\end{definition}

		We have the following relationship: $\Delta(G) = \Delta_{res}(G) + \Delta_{red}(G)  $

%%%%%%%%%%%%%%%%%%%%%%%%%%%%%%%%%%%%%%%%%%%%%%%%%%%%%%%%%%%%%%%%%%%%%%%%%%%%%%%%%%%%%%
	\subsection{Counterparty preference settings}
%%%%%%%%%%%%%%%%%%%%%%%%%%%%%%%%%%%%%%%%%%%%%%%%%%%%%%%%%%%%%%%%%%%%%%%%%%%%%%%%%%%%%%

		In practice, compression tolerances are determined by a wide range of heterogeneous preferences from market participants and regulators. The space of possible compression tolerance combinations is theoretically infinite. In the following, we consider a general spectrum of preferences based on counterparty relationships. We start with two benchmark settings. In the first setting, participants are \textit{conservative}: they only allow for reductions of established obligations. In the second setting, participants are indifferent vis-\`a-vis changes in their trading relationships. These settings correspond to the following set of compression tolerances, $\{(a_{ij}, b_{ij}) = (0,e_{ij}) ~~ \forall (i,j) \in N^2\}$ and $\{(a_{ij}, b_{ij}) = (0,+\infty) ~~ \forall (i,j) \in N^2\}$, respectively:

		% While the second and third cases are irrelevant (i.e.)
		
		We refer to the first setting as \textit{conservative} and to the second as \textit{non-conservative}. Intuitively, the non-conservative case provides the highest levels of compression tolerances: it discards all counterparty constraints. The approach is deemed non-conservative with respect to the original web of obligations in the market. In the conservative case: compression tolerances are such that $e'_{ij} \leq e_{ij}$ for all bilateral positions. Hence, all participants are willing to reduce or eliminate their original obligation but no new relationship can be introduced between participants not trading ex-ante. It is conservative with respect to the original set of non-offsetting obligations in the market. Below, we formalize both settings.
%%%%%%%%%%%%%%%%%%%%%%%%%%%%%%%%%%%%%%%%%%%%%%%%%%%%%%%%%%%%%%%%%%%%%%%%%%%%%%%%%%%%%%
		% \subsubsection{Non-conservative compression}

			In the non-conservative compression setting, the resulting set of obligations $E'$ is not determined in any way by the previous configuration $E$.

			      \begin{definition} [Non-Conservative Compression]
			      		$c(N,E)$ is a non-conservative compression operator $\Leftrightarrow$ $c()$ is a compression operator that satisfies the compression tolerances set $\Gamma$:
			          \begin{equation}
			            a_{ij} = 0 ~~~ and ~~~ b_{ij} = +\infty, ~~ \forall (a_{ij},b_{ij}) \in \Gamma,  \nonumber
			          \end{equation}
			      \end{definition}

			In practice, such setting is unlikely to be the default modus operandi. However, it is conceptually useful to study as it sets the benchmark for the most compression tolerant case.
		% \subsubsection{Conservative compression}

			In the conservative compression setting, the set of obligations in the compressed market is strictly obtained from reductions in notional values from the set of bilateral positions ex-ante. Formally, we have:

			    \begin{definition} [Conservative Compression]
			      $c(N,E)$ is a conservative compression operator $\Leftrightarrow$ $c()$ is a compression operator that satisfies the compression tolerances set $\Gamma$:
			      \begin{equation}
			      	a_{ij} = 0 ~~~ and ~~~ b_{ij} = e_{ij}, ~~ \forall (a_{ij},b_{ij}) \in \Gamma, (i,j) \in E \nonumber
			      \end{equation}
			      The resulting graph $G' = (N,E')$ is a `sub-network' of the original market $G = (N,E)$.
			    \end{definition}

			Such setting is arguably close to the way most compression cycles take place in derivatives markets.\footnote{We thank Per Sj\"{o}berg, founder and former CEO of TriOptima, for fruitful discussion on these particular points.} To illustrate the implementation of both approaches, we provide a simple example of a market consisting of 3 market participants in Appendix~\ref{app-example-3-market-participants}.
%%%%%%%%%%%%%%%%%%%%%%%%%%%%%%%%%%%%%%%%%%%%%%%%%%%%%%%%%%%%%%%%%%%%%%%%%%%%%%%%%%%%%%
	\subsection{Compression feasibility and efficiency \label{sec:feasibility-efficiency}}
%%%%%%%%%%%%%%%%%%%%%%%%%%%%%%%%%%%%%%%%%%%%%%%%%%%%%%%%%%%%%%%%%%%%%%%%%%%%%%%%%%%%%%

	% %%%%%%%%%%%%%%%%%%%%%%%%%%%%%%%%%%%%%%%%%%%%%%%%%%%%%%%%%%%%%%%%%%%%%%%%%%%%%%%%%%%%%%
	% 	\subsubsection{Compression efficiency}
	% %%%%%%%%%%%%%%%%%%%%%%%%%%%%%%%%%%%%%%%%%%%%%%%%%%%%%%%%%%%%%%%%%%%%%%%%%%%%%%%%%%%%%%

		In order to compare efficiencies under different tolerance sets, we associate each compression operator $c_k(N,E)$ with its relative redundant excess $\rho_k =\frac{\Delta^k_{red}(G)}{\Delta(G)}$. A compression operator over a market $G$, $c_s(N,E)$ is then more efficient than another compression operator, $c_t(N,E)$ if $\rho_s >\rho_t$.\footnote{Note that this efficiency ratio is invariant to scale transformations (see Appendix~\ref{app-rescaling-ratios} for details and derivations). This allows rescaling by an arbitrary amount without affecting the efficiency ratios (e.g., an exchange rate).}

		% Where $\rho_s =\frac{\Delta^s_{red}(G)}{\Delta(G)}$  and $\rho_t =\frac{\Delta^t_{red}(G)}{\Delta(G)}$ are the compression ratios of $c_s$ and $c_t$ respectively, i.e., the fraction of excess notional eliminated via the compression operation. 

		For each setting, we identify the conditions under which compression can take place and its related efficiency. We show the existence of a trade-off between the degree of portfolio conservation and the level of efficiency.

		\subsubsection{Non-conservative compression}

			Under non-conservative compression tolerances, original bilateral positions do not determine the outcome, only the net and gross positions of each participant do. We can thus generalize Corollary~\ref{cor:condition-compression} as follows: 
			\begin{proposition}\label{prop:non-conservative-feasibility}
				Given a market $G(N,E)$ and compression $c^{nc}()$ satisfying a non-conservative compression tolerance set $\Gamma$, 
			    \begin{equation}
			    	\Delta^{c{^{nc}}}_{\text{red}}(G)>0 \; \; \Leftrightarrow \; \;  \sum_{i\in N} \delta(i)>0\;. \nonumber
			    \end{equation}
			\end{proposition}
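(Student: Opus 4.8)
The plan is to prove both implications by reducing them to the characterization of excess through intermediation that was already established. The key structural fact I would use is that a non-conservative tolerance set imposes no binding constraint at all, so the optimal non-conservative operator can realize the minimum-notional net-equivalent market of Proposition~\ref{prop:minimum-notional}; once that is in hand, the statement collapses onto Lemma~\ref{lem:iff-excess}.

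For the forward implication ($\Rightarrow$), I would first note that redundant excess can never exceed total excess. By Proposition~\ref{prop:minimum-notional} the gross notional of any net-equivalent market is at least $x'$, so $\Delta_{res}(G)=\Delta(G')\ge 0$, and Definition~\ref{def:residual-redundant-excess} gives $0\le\Delta^{c^n}_{red}(G)=\Delta(G)-\Delta_{res}(G)\le\Delta(G)$. Hence $\Delta^{c^n}_{red}(G)>0$ forces $\Delta(G)>0$. The equivalence of positive excess with intermediation established around Lemma~\ref{lem:iff-excess} (together with the accompanying observation that, absent intermediation, $v_i^{net}=v_i^{gross}$ for every $i$ and so $\Delta(G)=0$) then yields $\sum_{i\in N}\delta(i)>0$.

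For the reverse implication ($\Leftarrow$), I would exhibit a concrete non-conservative compression that removes all the excess. Assuming $\sum_{i\in N}\delta(i)>0$, Lemma~\ref{lem:iff-excess} gives $\Delta(G)>0$. I would construct a minimum-notional net-equivalent market $G'=(N,E')$ as in Proposition~\ref{prop:minimum-notional} by routing flow directly from the net sellers ($v_i^{net}>0$) to the net buyers ($v_i^{net}<0$); since the market is closed, $\sum_{i:\,v_i^{net}>0}v_i^{net}=\sum_{i:\,v_i^{net}<0}|v_i^{net}|$, so such a balanced (transportation-type) flow exists. In $G'$ every participant is active on a single side, so $v_i^{'gross}=|v_i^{net}|$ for all $i$; because $v_i^{gross}\ge|v_i^{net}|=v_i^{'gross}$ always holds, no individual gross position increases, and for any dealer (which exists by hypothesis) one has $v_i^{gross}>|v_i^{net}|$, supplying the strict inequality required by Definition~\ref{def:compression}. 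Every created edge satisfies $0=a_{ij}\le e'_{ij}\le+\infty=b_{ij}$, so the operator respects the non-conservative tolerance set $\Gamma$. This makes it an admissible non-conservative compression with $\Delta_{res}(G)=\Delta(G')=0$, whence $\Delta^{c^n}_{red}(G)=\Delta(G)>0$.

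The hard part will be the reverse direction, and specifically the verification that the minimum-\emph{aggregate}-notional configuration of Proposition~\ref{prop:minimum-notional} can be realized so that \emph{every} individual gross position weakly decreases, so that the operation genuinely qualifies as a compression under Definition~\ref{def:compression} rather than merely reducing the total. The net-seller-to-net-buyer routing is what resolves this, since it drives each $v_i^{'gross}$ down to its absolute floor $|v_i^{net}|$; confirming feasibility of that balanced flow and that the non-conservative tolerances never bind is the only delicate point, after which the equivalence with $\sum_{i\in N}\delta(i)>0$ follows immediately from Lemma~\ref{lem:iff-excess}.
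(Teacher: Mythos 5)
Your proof is correct and takes essentially the same approach as the paper: the paper's own (much terser) proof likewise reduces the claim to Lemma~\ref{lem:iff-excess} together with the observation that non-conservative tolerances impose no binding constraint on how edges may be rearranged. The explicit net-seller-to-net-buyer bipartite construction you use to witness the reverse direction is precisely the construction the paper gives in its proof of Proposition~\ref{prop:non-conservative-efficiency}, so your write-up only fills in details (verification of the compression definition and of tolerance feasibility) that the paper defers there.
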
	    
			\begin{proof}
			    ~ Proof see Appendix~\ref{app-proofs}.
		    \end{proof}

			Any compression exercise with a non-conservative set of tolerances is feasible if the market exhibits intermediation. In terms of efficiency, we obtain the following result:

			\begin{proposition}\label{prop:non-conservative-efficiency}
				Given a market $G = (N, E)$, there exists a set of non-conservative compression operators $C$ such that
				\begin{equation}
					C = \{c^{nc}|\Delta^{c^{nc}}_{res}(G) = 0\}\neq \emptyset \nonumber
				\end{equation}
				Moreover, let $G'=c^{nc}(G)| c^{nc} \in C$, then $G'$ is bi-partite.
			\end{proposition}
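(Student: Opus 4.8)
The plan is to prove the two claims separately: first the non-emptiness of $C$ by an explicit construction, and then the bipartite property as a direct consequence of Lemma~\ref{lem:iff-excess}.

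For existence, I would build a net-equivalent market that attains the minimum gross notional $x'$ identified in Proposition~\ref{prop:minimum-notional}; since $x'$ is minimal, such a market has $\Delta(G')=0$ and hence realises $\Delta_{res}(G)=0$. The construction is a source--sink (transportation) assignment: partition $N$ according to the sign of the net position into sources $S^+=\{i:v_i^{net}>0\}$, sinks $S^-=\{i:v_i^{net}<0\}$, and isolated nodes $\{i:v_i^{net}=0\}$. I would then define $E'$ as any set of directed edges running only from $S^+$ to $S^-$ such that the total out-flow of each $i\in S^+$ equals $v_i^{net}$ and the total in-flow of each $j\in S^-$ equals $|v_j^{net}|$. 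Because the market is closed, $\sum_i v_i^{net}=0$, so total supply $\sum_{i\in S^+}v_i^{net}$ equals total demand $\sum_{j\in S^-}|v_j^{net}|$ and at least one such feasible assignment exists.

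I would then verify three things. Net-equivalence holds by construction: each source has only outgoing edges so its net position is $v_i^{net}$, each sink has only incoming edges so its net position is $-|v_j^{net}|=v_j^{net}$, and isolated nodes keep a zero net position. The gross notional of $G'$ equals $\sum_{i\in S^+}v_i^{net}$, which is exactly the minimum $x'$ of Proposition~\ref{prop:minimum-notional}, so $\Delta(G')=0$. Finally, this operator is \emph{non-conservative}, since it generally creates edges absent from $E$ (requiring $b_{ij}=+\infty$) while trivially respecting the tolerances $a_{ij}=0$, $b_{ij}=+\infty$; it is a genuine compression whenever intermediation is present, because any dealer satisfies $v_i^{gross}>|v_i^{net}|=v_i^{'gross}$, delivering the required strict gross reduction. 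This establishes $C\neq\emptyset$.

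For the ``moreover'' part, I would argue that \emph{every} $c^n\in C$ yields a bipartite $G'$, not merely the constructed one. By definition any such operator gives $\Delta_{res}(G)=\Delta(G')=0$. The contrapositive of Lemma~\ref{lem:iff-excess} then forces $\sum_{i\in N}\delta(i)=0$ in $G'$, i.e.\ $G'$ has no dealers. Since $\delta(i)=0$ means $\left(\sum_j e'_{ij}\right)\left(\sum_j e'_{ji}\right)=0$, every node is a pure source, a pure sink, or isolated, which is precisely the decomposition $N^{out}\cup N^{in}$ defining a directed bipartite graph. The main obstacle I anticipate is not this bipartite claim---which is essentially immediate from Lemma~\ref{lem:iff-excess}---but rather being careful in the existence step to confirm that the constructed flow is simultaneously net-equivalent, gross-minimal, and a valid compression with at least one strict gross inequality; the degenerate case where $G$ already exhibits no intermediation deserves a brief remark, since then there is no excess to remove.
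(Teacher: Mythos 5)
Your proof is correct, and its existence step is essentially the paper's own argument: the paper constructs exactly the same source--sink edge set $B$ (from $N^+=\{i: v_i^{net}>0\}$ to $N^-=\{i: v_i^{net}<0\}$, with out-flows and in-flows matching net positions), invokes Proposition~\ref{prop:minimum-notional} for net-equivalence and gross-minimality, and concludes $\Delta^{c^n}_{res}(G)=0$. Where you genuinely diverge is the ``moreover'' claim. The paper simply reads bipartiteness off the construction --- the edges of $B$ run only from $N^+$ to $N^-$, so the constructed $G'$ is bipartite --- which, strictly speaking, only establishes the property for the solutions produced by its meta-algorithm. You instead prove it for \emph{every} $c^n\in C$: since $\Delta(G')=0$, the contrapositive of Lemma~\ref{lem:iff-excess} forces $\sum_{i\in N}\delta(i)=0$ in $G'$, and no-dealers is precisely the paper's definition of a directed bipartite graph. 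Your version matches the actual quantification in the proposition's statement and is the cleaner argument. Your closing remark is also well taken and not addressed by the paper: if $G$ has no intermediation, then $\Delta(G)=0$ and, by Corollary~\ref{cor:condition-compression}, no operator can achieve the strict gross reduction required by Definition~\ref{def:compression}, so $C$ is empty unless one reads the proposition as implicitly assuming intermediation (or admits the identity map as a degenerate ``compression''). Flagging that boundary case is a small but real improvement over the paper's treatment.
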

			% \begin{proposition}
			% 	Given a market $G = (N, E)$, there exist a set of non-conservative compression operators:
			% 	\[
			% 		\Omega = \{c| G'=c(G) : \Delta^c_{res}(G) = 0\}\neq\{\}
			% 	\] 
			% 	Moreover, all $G'$ are bi-partite.
			% \end{proposition}
			\begin{proof}
				~ Proof see Appendix~\ref{app-proofs}.
			\end{proof}

			Non-conservative compression can always eliminate all the excess in a market. The proof of existence stems from the following generic algorithm: from the original market, compute all the net positions then empty the network of obligations and arbitrarily generate obligations such that the gross and net positions are equal. As net and gross positions are equal, the resulting market does not exhibit any intermediation. Recall from Lemma~\ref{lem:iff-excess} that if all intermediation chains are broken, the market exhibits no excess. We also obtain the following corollary:
			% For any market with intermediation chains ex-ante, a compression operation that results in a bipartite structure is necessarily non-conservative. More importantly, any compression operation leading to a bipartite structure removes all the excess. 

			\begin{corollary}
				Given a market $G(N,E)$ and a compression operation $c(N,E)$,
			    \begin{equation}
			    	\Delta^c_{res}(G) = 0 \;\; \Leftrightarrow \;\; \sum_{i\in N'} \delta(i) = 0. \nonumber
			    \end{equation}
			\end{corollary}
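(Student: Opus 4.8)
The plan is to reduce the corollary to a direct application of the zero-excess characterization of non-intermediated markets, applied to the compressed graph rather than the original one. First I would invoke Definition~\ref{def:residual-redundant-excess}, which identifies the residual excess with the excess of the compressed market itself: $\Delta^c_{res}(G) = \Delta(G')$, where $G' = (N', E') := c(N,E)$. This moves the entire problem onto $G'$, so that the hypothesis $\sum_{i\in N'}\delta(i) = 0$ can be read directly as the statement that $G'$ contains no dealers, i.e., $G'$ exhibits no intermediation.

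Next I would exploit the dealer indicator (Definition~\ref{def:dealer}) node by node. For any $i \in N'$, the condition $\delta(i) = 0$ means $\left(\sum_j e'_{ij}\right)\left(\sum_j e'_{ji}\right) = 0$, so at least one of these two nonnegative sums vanishes; $i$ is either a pure seller or a pure buyer in $G'$. Writing $a = \sum_j e'_{ij}$ and $b = \sum_j e'_{ji}$, the fact that $ab = 0$ with $a,b \geq 0$ forces $|a-b| = a+b$, i.e.,
\begin{equation}
|v_i^{'net}| = \left|\sum_j e'_{ij} - \sum_j e'_{ji}\right| = \sum_j e'_{ij} + \sum_j e'_{ji} = v_i^{'gross} \nonumber
\end{equation}
for every $i \in N'$. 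This per-node identity is exactly what the non-intermediation hypothesis buys us.

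Finally I would sum this identity over all nodes and feed the result into Proposition~\ref{prop:minimum-notional}. Since $\sum_{i}\sum_{j} e'_{ij} = \sum_{i}\sum_{j} e'_{ji} = x'$ (both double sums range over the same edge set of $G'$), we get $\sum_{i} v_i^{'gross} = 2x'$, and hence the minimum net-equivalent gross notional equals $\tfrac{1}{2}\sum_{i}|v_i^{'net}| = \tfrac{1}{2}\sum_{i} v_i^{'gross} = x'$. Substituting into Definition~\ref{def:excess} yields $\Delta(G') = x' - x' = 0$, and therefore $\Delta^c_{res}(G) = \Delta(G') = 0$, as required.

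I do not expect a genuine obstacle here: the argument is precisely the converse of the reasoning that accompanies Lemma~\ref{lem:iff-excess} (absence of intermediation forces net and gross positions to agree in absolute value), now applied to the post-compression graph. The only point requiring care is bookkeeping: the indicators $\delta(i)$ in the hypothesis must be evaluated on the edge set $E'$ of the compressed market, not on the original $E$, since compression can turn a dealer into a customer by zeroing out one side of its book. Once that distinction is kept straight, the statement follows immediately from Proposition~\ref{prop:minimum-notional}.
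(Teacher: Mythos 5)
Your proposal is correct and follows essentially the same route the paper intends: the paper leaves this corollary unproved because it is the no-intermediation case of Lemma~\ref{lem:iff-excess}'s argument (absence of dealers forces $v_i^{gross} = |v_i^{net}|$, hence zero excess) applied to the compressed market $G'$, combined with $\Delta^c_{res}(G) = \Delta(G')$ from Definition~\ref{def:residual-redundant-excess}. Your write-up simply makes that implicit chain explicit, including the correct observation that $\delta(i)$ must be evaluated on $E'$ rather than $E$.
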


			% Non-conservative compression either removes dealers from the market (if their net position is zero) or turns them into buying customers (resp., selling customers) if their net position is negative (resp., positive).

			The resulting market is characterized by a bipartite underlying network structure.\footnote{A graph $G=(N,E)$ is bipartite if the set of nodes can be decomposed into two disjoint subsets $N^{out}$ and $N^{in}$ where each set is strictly composed of only one kind of node: respectively, nodes with only outgoing edges and nodes with only incoming edges. The edges are characterized as follows: $e_{ij}$ with $i \in N^{out}$ and $j \in N^{in}$. Also, a bipartite graph has no dealers: $\sum_{i\in N} \delta(i) = 0$} 

			For illustrative purposes, we provide a simple algorithm for this compression setting in Appendix~\ref{app-algorithms}.
			
		\subsubsection{Conservative compression}

			When compression tolerances are conservative, the compression operator can only reduce or eliminate existing offsetting obligations. In contrast with the non-conservative case, conservative compression cannot be applied to general chains of intermediation. Below we show that, only when chains of intermediation are closed, can conservative compression take place. 

			Let us first formalize the concept of closed intermediation chains:	

			\begin{definition}[Directed Closed Chain of Intermediation]
				A directed closed chain of intermediation is a set of obligations $K=\{e_{ij},e_{jt},..., e_{|K|i}\}$ in the graph $G=(N,E)$ such that $\prod_K e_{ij} > 0$.
				% \begin{itemize}
				 	% \item $E = \{e_{ij},e_{jt},..., e_{|E|i}\}$
				 	% \item $\exists ! \{e_{ij}, e_{ki}\}\subset E~ \forall i \in N$
				 	% \item $\prod_{i,j} e_{ij} > 0$
				 % \end{itemize}
			\end{definition}

			This structure constitutes the necessary and sufficient condition for conservative compression to be feasible:

			\begin{proposition}\label{prop:conservative-feasibility}
				Given a market $G(N,E)$ and a compression operator $c^c$ satisfying a conservative compression tolerance set $\Gamma$,
			    \begin{equation}
			    	\Delta^{c^c}_{red}(G)>0 \;\; \Leftrightarrow \;\; \exists E^* \subset E ~~~ \text {s.t. } \prod_{e^* \in E^*} e^* > 0. \nonumber
			    \end{equation}
			\end{proposition}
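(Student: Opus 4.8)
The plan is to reformulate conservative compression in terms of the edge-wise notional reductions and to recognise the net-equivalence constraint as a flow-balance (circulation) condition; the proposition then becomes the classical statement that a nonzero nonnegative circulation must route flow around a directed cycle. First I would record the reduction. Since $G$ and $G'$ are net-equivalent, they share the same minimal net-equivalent gross notional $\tfrac{1}{2}\sum_i|v_i^{net}|$ from Proposition~\ref{prop:minimum-notional}, so by Definition~\ref{def:residual-redundant-excess} the redundant excess equals $\Delta^{c^c}_{red}(G)=x-x'=\sum_{i,j}(e_{ij}-e'_{ij})$. Writing $d_{ij}:=e_{ij}-e'_{ij}$, the conservative tolerances force $0\le d_{ij}\le e_{ij}$ (so the support of $d$ lies inside $E$), and net-equivalence, written out node by node, collapses to $\sum_j d_{ij}=\sum_j d_{ji}$ for every $i\in N$. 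In other words $d$ is a nonnegative circulation supported on $E$, and $\Delta^{c^c}_{red}(G)>0$ is equivalent to $d\neq 0$. The proposition thus reduces to: a nonzero nonnegative circulation supported on $E$ exists if and only if $E$ contains a directed cycle all of whose weights are positive.

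For the direction $(\Leftarrow)$, given a closed chain $E^*$ with $\prod_{e^*\in E^*}e^*>0$, every edge on the cycle is strictly positive. Setting $\epsilon=\min_{e^*\in E^*}e^*>0$ and putting $d_{ij}=\epsilon$ on each cycle edge (and $0$ elsewhere) yields a nonnegative circulation with $d\neq 0$: each node of the cycle has exactly one incoming and one outgoing cycle edge, so balance holds, and $d_{ij}\le e_{ij}$ by the choice of $\epsilon$. The associated $e'_{ij}=e_{ij}-d_{ij}$ defines a valid conservative compression (at least the minimising edge is removed, and net positions are preserved), with $\Delta^{c^c}_{red}(G)=|E^*|\,\epsilon>0$, so the redundant excess is positive.

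For the direction $(\Rightarrow)$, I would take any conservative compression with $\Delta^{c^c}_{red}(G)>0$ and its circulation $d\neq 0$. Pick an edge $(i_1,i_2)$ with $d_{i_1 i_2}>0$. Balance at $i_2$ forces $\sum_j d_{i_2 j}=\sum_j d_{j i_2}\ge d_{i_1 i_2}>0$, so some out-edge $(i_2,i_3)$ carries $d_{i_2 i_3}>0$; iterating produces a walk along strictly positive edges. Since $N$ is finite, the walk must revisit a node, and the segment between the two visits is a directed cycle $E^*$ all of whose edges carry positive $d$, hence positive $e$, giving $\prod_{e^*\in E^*}e^*>0$. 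This is exactly a closed chain of intermediation, which completes the equivalence.

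The main obstacle is the $(\Rightarrow)$ direction, and specifically the claim that a nonvanishing nonnegative circulation necessarily contains a directed cycle in its support, i.e.\ the cycle-extraction (flow-decomposition) step. The walk-following argument above is the crux: finiteness of $N$ together with flow balance guarantees a repeated node, but I would take care to argue that the extracted cycle uses distinct nodes and that each of its edges genuinely lies in $E$ with positive weight. Everything else — the reduction to a circulation and the elementary construction in $(\Leftarrow)$ — is routine bookkeeping.
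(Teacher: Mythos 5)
Your proof is correct, and its combinatorial core --- following the balance-preserving reductions from node to node until finiteness forces the walk to close up into a directed cycle --- is the same idea the paper uses, where it appears as the informal claim that a reduction ``triggers a cascade of balance adjusting'' that ``can only finish once it reaches the initiating node back.'' What you do differently is the packaging: you introduce the reduction variables $d_{ij}=e_{ij}-e'_{ij}$, observe that conservative tolerances force $0\le d_{ij}\le e_{ij}$ and net-equivalence forces $\sum_j d_{ij}=\sum_j d_{ji}$, and thereby recast the whole question as the existence of a nonzero nonnegative circulation supported on $E$. This buys you two things the paper's argument lacks. First, rigor: the paper's proof reasons case-by-case through customers and dealers and leaves the cascade-termination step at the level of intuition, whereas your flow-decomposition step (pick a positive edge, use balance to extend, extract a simple cycle from the repeated node) is airtight, including the identification $\Delta^{c^c}_{red}(G)=\sum_{ij}d_{ij}$ via Proposition~\ref{prop:minimum-notional}, which the paper never states explicitly. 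Second, coherence with the rest of the paper: your circulation view is exactly the min-cost-flow formulation the appendix later uses for the network simplex, so your proof unifies the feasibility result with the algorithmic characterization. One small point of care: in the $(\Leftarrow)$ direction you correctly note that the constructed operator must qualify as a compression (at least one gross position strictly decreases), which holds since every node on the cycle has its gross position reduced by $2\epsilon$; stating this explicitly closes the only loose end.
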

			\begin{proof}
				~ Proof see Appendix~\ref{app-proofs}.
			\end{proof}

			In contrast with the non-conservative approach, the efficiency of conservative compression is determined by the underlying network structure. In the following, we analyze the efficiency of conservative compression when applied to a dealer-customer network structure as is empirically observed in OTC markets.

			We start by showing that if the market exhibits a dealer-customer structure, conservative compression does not eliminate all the market excess.

			\begin{proposition}\label{prop:conservative-dealer-customer-efficiency}
				Given a market $G(N,E)$ and a compression operator $c()$ satisfying a conservative compression tolerance set $\Gamma$,
				\begin{equation}
					\exists ~i ~~s.t. 
					\begin{cases}
						\sum_j e^{C}_{ij} >0, ~~ e^{C}_{ij} \in E^{C}  \\
						\sum_j e^{C}_{ji} >0, ~~ e^{C}_{ji} \in E^{C} 
					\end{cases}
					\;\; \Rightarrow \;\; \Delta^{c{^c}}_{\text{res}}(G) > 0\nonumber
				\end{equation}
				Where $E^{C}$ is the set of dealer-customer obligations as defined under Definition \ref{def:dealer-customer-markets}.

			\end{proposition}
        	\begin{proof}
        		~ Proof see Appendix~\ref{app-proofs}.
    		\end{proof}

    		When dealers intermediate between customers on both sides (i.e., $\sum_j e^{C}_{ij}>0$ and $\sum_j e^{C}_{ji}>0$), the resulting chains of intermediation are necessarily open. In turn, they cannot be conservatively compressed and the residual excess of the compression is positive.

			In the case of a single closed chain of intermediation, the most efficient conservative compression procedure is given by the following result:

			\begin{lemma}\label{lemm:conservative-closed-chain}
				Given a directed closed chain $K=(N,E)$, consider the set of most efficient compression operations $C$ satisfying a conservative compression tolerance set $\Gamma$ then
			    \begin{equation}
				    e'_{ij} = e_{ij} - \min_{e}\{E\} ~~~ \forall e' \in E' ~~~~ \mbox{and} ~~~~ \Delta^c_{res}(K) = \Delta(K) - |E|\min_{e \in E} ~~~\forall c \in C\nonumber
			    \end{equation}
			\end{lemma}
			\begin{proof}
				~ Proof see Appendix~\ref{app-proofs}.
			\end{proof}

			Lemma \ref{lemm:conservative-closed-chain} shows that, in a single directed closed chain, eliminating the obligation with the lowest notional value and accordingly adjusting all other obligations in the chain is the most efficient conservative compression solution. The larger the length of the intermediation chain and the higher the minimum notional obligation value on the chain, the more excess can be eliminated conservatively. 	

			When the original market exhibits several closed chains of intermediation, the exact arrangement of chains in the network is critical to determine the resulting efficiency. In Appendix~\ref{app-conservative-compression}, we discuss cases of entangled chains (i.e., intermediation chains that share common obligations) with different ordering effects. In general, it is not possible to determine the residual excess of a conservative compression without further assumptions on the underlying structure. In order to guarantee a global solution, we characterize conservative compression as a linear programming problem and determine the most efficient compression procedure.\footnote{Details regarding the program characterization are provided in the Appendix~\ref{app-programming-charac}. }

			We can characterize the topological structure of the optimal solution. Let us define a Directed Acyclic Graph (DAG) as follow:

			\begin{definition}[Directed Acyclic Graph]
				A Directed Acyclic Graph (DAG) is a graph that does not contain any directed closed chains.
			\end{definition}

			We obtain the following result for any conservative compression solution:

			\begin{proposition}\label{prop:conservative-tree}
				Given a market $G(N,E)$ and a compression operator $c()$ satisfying a conservative compression tolerance set $\Gamma$.
				Let
				\[
					\mathbb{G} = \{G'|\Delta(G') = \min\{\Delta^{c^c}_{res}(G)\}
				\]
				then
				\[
					\forall G' \in \mathbb{G} ~~ s.t. ~~ \mbox{$G'$ is a DAG}
				\]
				% \[
				% 	\exists G' \in \mathbb{G} ~~ s.t. ~~ \mbox{$G'$ is a Tree}
				% \]
			\end{proposition}

			In fact, any closed chain of intermediation can be conservatively compressed. The above Proposition states that all optimal solutions will be characterized by an elimination of all closed chain resulting in an acyclic topological structure. Note that, as our objective function is set on the amount of excess that is removed, multiple directed acyclic solutions can, in principle, coexist.

			The results from Proposition~\ref{prop:conservative-feasibility}, Lemma~\ref{lemm:conservative-closed-chain} and Proposition~\ref{prop:conservative-tree} show that the set of closed chains of intermediation present in a market sets the efficiency of a conservative compression. More specifically, the number of closed chains, their length and their minimum notional obligation constitute the positive determinants of a tightly-knit market structure that partially generate larger efficiency gains for a conservative compression. The full determination requires knowledge on the exact market network structure. Section~\ref{sec:empiric} will empirical provide such analysis using transaction-level data.
%%%%%%%%%%%%%%%%%%%%%%%%%%%%%%%%%%%%%%%%%%%%%%%%%%%%%%%%%%%%%%%%%%%%%%%%%%%%%%%%%%%%%%
	\subsection{Additional settings}
%%%%%%%%%%%%%%%%%%%%%%%%%%%%%%%%%%%%%%%%%%%%%%%%%%%%%%%%%%%%%%%%%%%%%%%%%%%%%%%%%%%%%%
		\subsubsection{Hybrid compression}
%%%%%%%%%%%%%%%%%%%%%%%%%%%%%%%%%%%%%%%%%%%%%%%%%%%%%%%%%%%%%%%%%%%%%%%%%%%%%%%%%%%%%%

		So far, we have focused on two benchmark preference settings. In more realistic settings, compression tolerances can be subject to the economic role of specific trading relationships. In the following, we consider a set of participants' preferences that combines properties from these two benchmarks. 

		\begin{assumption}
			Dealers prefer to keep their intermediation role with customers.
		\end{assumption}

		\begin{assumption}
			Dealers are indifferent vis-\`a-vis their bilateral positions with other dealers. Intra-dealer obligations can be switched at negligible cost.
		\end{assumption}

		The first assumption states that dealers value their role with customers. They will reject any compression solution that affects their bilateral positions with customers. Therefore, dealers set low compression tolerances on their customer related obligations.

		The second assumption posits that the intra-dealer network forms a club in which instances of a specific obligation do not signal a preference towards a given dealer counterparty. As a result, switching counterparties in the intra-dealer network has negligible costs in comparison with the overall benefits of compression. Therefore, dealers set high compression tolerances in the intra-dealer segment.

		Using Definition~\ref{def:dealer-customer-markets}, we have the following formal definition:

		\begin{definition} [Hybrid compression]
      		$c(N,E)$ is a hybrid compression operator i.f.f. $c()$ is a compression operator that satisfies the compression tolerances set $\Gamma$:
          	\begin{equation}
            	a_{ij} = 0 ~~~ and ~~~ b_{ij} = e_{ij}, ~~ \forall (a_{ij},b_{ij}) \in \Gamma, e_{ij} \in E^C \nonumber
          	\end{equation}
          	\begin{equation}
            	a_{ij} = 0 ~~~ and ~~~ b_{ij} = +\infty, ~~ \forall (a_{ij},b_{ij}) \in \Gamma, e_{ij} \in E^D \nonumber
          	\end{equation}

          \noindent			Where $E^C$ and $E^D$ are the customer market and the intra-dealer market, respectively, with $E^C+E^D = E$. 
      	\end{definition}

		The hybrid compression setting is a combination of (i) a non-conservative setting in the intra-dealer segment and (ii) a conservative setting in the customer segment.

		\begin{corollary}
			The feasibility conditions of the hybrid setting are
		    \begin{itemize}
		    \item non-conservative condition for $E^D$
		    \item conservative condition for $E^C$
		    \end{itemize}
		\end{corollary}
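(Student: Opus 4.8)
The plan is to reduce the statement to the two feasibility propositions already proved, one per market segment. First I would record that, by the definition of the hybrid operator, its tolerances restrict to the non-conservative tolerances $(0,+\infty)$ on every intra-dealer edge and to the conservative tolerances $(0,e_{ij})$ on every customer edge; since a potential edge inherits its class from $\delta(i)\delta(j)$, no new customer-involving edge is admissible while new dealer-dealer edges are. I would also note that $\{E^D,E^C\}$ partitions $E$, that both $(N,E^D)$ and $(N,E^C)$ are themselves closed sub-markets, and that each net position splits additively as $v_i^{net}=v_i^{net,D}+v_i^{net,C}$, with the total gross notional additive over the two segments.

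Next I would dispatch the sufficiency direction. If $(N,E^D)$ exhibits intermediation, Proposition~\ref{prop:non-conservative-feasibility} yields a non-conservative operator that strictly lowers the intra-dealer gross while preserving every $v_i^{net,D}$; leaving $E^C$ fixed and appealing to the additive splitting of net positions produces a globally net-equivalent market with strictly smaller gross, so the hybrid is feasible. The symmetric argument applies verbatim if $(N,E^C)$ satisfies the conservative condition of Proposition~\ref{prop:conservative-feasibility}, using a conservative reduction on $E^C$ that preserves every $v_i^{net,C}$ and leaves $E^D$ fixed.

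The substance lies in necessity, for which the governing fact is that customers are single-sided: $\delta(i)=0$ forces $i$ to carry either only outgoing or only incoming edges. Any feasible hybrid compression is a net-position-preserving modification $\Delta e = e'-e$, i.e.\ a circulation of strictly negative total weight. Because customer tolerances permit only reductions and forbid new customer edges, every change incident to a customer can only lower one of its same-signed edges; conservation of that customer's net position then forces all such changes to vanish. Hence no customer edge can carry flow, so the entire modification is supported on dealer-dealer edges and is precisely a non-conservative compression of the closed sub-market $(N,E^D)$. By Proposition~\ref{prop:non-conservative-feasibility} a strict reduction there requires intermediation in $E^D$, which is the non-conservative condition on that segment.

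The main obstacle I anticipate is exactly this necessity step: a priori one worries about compression that straddles both segments or that exploits freshly created dealer-dealer edges, generating feasibility captured by neither segment-wise condition. The single-sidedness of customers is what closes the gap, since it pins every customer edge and confines all reducible activity to the dealer-dealer subgraph; equivalently, every directed closed chain consists solely of dealers and therefore lies entirely in $E^D$, so the conservative condition on $E^C$ (existence of a closed chain in $E^C$) is never independently binding and $E^C$ is acyclic. A final routine check is that operating on one segment never forces a tolerance-violating change in the other, which again follows from the additive splitting of net positions together with the freedom to leave the untouched segment fixed.
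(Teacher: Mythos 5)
Your proof is correct, and its skeleton --- reading the hybrid tolerance set as non-conservative on dealer--dealer pairs and conservative on customer-involving pairs, then invoking Proposition~\ref{prop:non-conservative-feasibility} for $(N,E^D)$ and Proposition~\ref{prop:conservative-feasibility} for $(N,E^C)$ --- is exactly the route the paper intends; the paper, however, states this corollary with no proof at all, treating it as immediate from the definition of the hybrid operator. What you supply, and what the paper only asserts informally in the paragraph following the corollary (``compression will only take place in the intra-dealer network because no closed chains of intermediation will be present in the customer network''), is the necessity direction: since a customer is single-sided under Definition~\ref{def:dealer}, all of its incident edges carry conservative tolerances, can only shrink, and cannot be complemented by newly created customer-involving edges, so net-equivalence freezes every one of them; consequently any admissible modification is a circulation supported on dealer--dealer pairs, every directed closed chain lies inside $E^D$, the conservative condition on $E^C$ is vacuously unsatisfiable, and hybrid feasibility collapses to intermediation within $E^D$ alone. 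Your version is thus a strict refinement of the paper's treatment: it proves the segment separation (via the additive splitting $v_i^{net}=v_i^{net,D}+v_i^{net,C}$ and the freedom to hold the untouched segment fixed) rather than assuming it, and it explicitly closes the loophole of compressions that straddle the two segments or exploit freshly created dealer--dealer edges --- a case the paper never addresses, and the one place where the corollary could conceivably have failed.
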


		Note that, in a dealer-customer market, a hybrid compression will only affect the intra-dealer segment because no closed chains of intermediation exist in the customer segment. As a result, the intra-dealer network will form a bipartite graph with zero residual intra-dealer excess.

		\begin{proposition}\label{prop:hybrid-efficiency}
			Given a market $G=(N,E)$, if 
		  	\begin{equation}
		  		\Delta(N,E) = \Delta(N,E^D) + \Delta(N,E^C) \nonumber
		  	\end{equation}
			\noindent
			then, a compression operator $c^h()$ satisfying a hybrid compression tolerance set $\Gamma$ leads to
			\begin{equation}
					\Delta^{c^h}_{res}(N,E) = \Delta(N,E^C) \nonumber
			\end{equation}
		\end{proposition}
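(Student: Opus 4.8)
The plan is to build everything on the net-equivalence identity $\Delta(G)-\Delta(G') = x - x'$. This holds because, by Definition~\ref{def:excess}, each excess is measured against the \emph{same} minimum gross $\tfrac{1}{2}\sum_i |v_i^{net}|$: a net-equivalent operator preserves every $v_i^{net}$, so the subtracted terms cancel. Writing $G'=(N,E')$ for the market produced by the (optimal) hybrid operator $c^h$, the whole proof then reduces to computing how much gross notional $c^h$ strips out, segment by segment, and matching it against the additivity hypothesis.

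First I would show that the customer segment is rigid, i.e. $E'^C = E^C$, so it contributes nothing to $x-x'$. On $E^C$ the tolerances are conservative ($a_{ij}=0$, $b_{ij}=e_{ij}$, with $b_{ij}=0$ on absent customer links), so by Proposition~\ref{prop:conservative-feasibility} the redundant excess there is positive only if $E^C$ contains a closed chain of intermediation. But every vertex of such a cycle carries both an incoming and an outgoing edge, forcing $\delta(\cdot)=1$ at each of its nodes; since every edge of $E^C$ has at least one customer endpoint, no such cycle can live inside $E^C$. Hence conservative compression removes nothing from $E^C$, and the customer links, together with every dealer's customer-side net position, stay frozen.

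Next I would treat the intra-dealer segment. Because each dealer's customer-side net position is frozen and $c^h$ is globally net-equivalent, each dealer's intra-dealer net position must also be preserved. On $E^D$ the tolerances are non-conservative, so Proposition~\ref{prop:non-conservative-efficiency} applies to the standalone market $(N,E^D)$: the optimal operator drives it to a bi-partite graph $E'^D$ with $\Delta(N,E'^D)=0$, whose gross equals the minimum $\tfrac12\sum_i|v_i^{net,D}| = x_{E^D}-\Delta(N,E^D)$, where $x_{E^D}$ denotes the gross of the dealer links and $v_i^{net,D}$ the intra-dealer net positions. The notional eliminated is thus exactly $x_{E^D}-x_{E'^D} = \Delta(N,E^D)$, all of it from the dealer segment. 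Assembling the pieces, $x = x_{E^D}+x_{E^C}$ and $x' = x_{E'^D}+x_{E^C}$ give $x-x' = \Delta(N,E^D)$, whence the identity yields $\Delta^{c^h}_{res}(N,E)=\Delta(G') = \Delta(G)-\Delta(N,E^D)$; invoking $\Delta(N,E)=\Delta(N,E^D)+\Delta(N,E^C)$ collapses this to $\Delta(N,E^C)$.

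The main obstacle is the decomposition step, namely arguing rigorously that the two tolerance regimes do not interact. One must verify that no \emph{cross-segment} move can reduce a customer link, since doing so would alter a customer's net position, which could only be restored through links the customer does not possess (a pure seller has no compensating incoming edge, and the conservative regime forbids creating one). Establishing this is what guarantees that the optimal hybrid operator genuinely factorizes into an untouched $E^C$ and a fully non-conservatively compressed $E^D$, rather than achieving some strictly larger reduction. The additivity hypothesis itself enters only at the very end, as precisely the ingredient needed to rewrite the removed amount $\Delta(N,E^D)$ as the residual $\Delta(N,E^C)$; absent that hypothesis the same argument delivers the weaker $\Delta^{c^h}_{res}(N,E)\ge\Delta(N,E^C)$ via the additivity inequality of Proposition~\ref{prop:additivity-excess}.
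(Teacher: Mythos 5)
Your proof is correct and takes essentially the same route as the paper's: the customer segment contains no closed chains of intermediation and is therefore frozen under its conservative tolerances (Proposition \ref{prop:conservative-feasibility}), the intra-dealer segment is driven to zero residual excess under its non-conservative tolerances (Proposition \ref{prop:non-conservative-efficiency}), and the additivity hypothesis combines the two pieces. The paper asserts the segment separation directly from additivity, whereas you verify it explicitly (rigidity of $E^C$ forces preservation of intra-dealer net positions) and do the bookkeeping through the identity $\Delta(G)-\Delta(G') = x - x'$; this is a tightening of the same argument rather than a different one.
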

		\begin{proof}
			~ Proof see Appendix~\ref{app-proofs}.
		\end{proof}

		In case the excess is additive, the efficiency of hybrid compression is straightforward. In case it is not, a specific algorithm must be implemented to obtain the exact level of efficiency (see Appendix~\ref{app-algorithms}).
%%%%%%%%%%%%%%%%%%%%%%%%%%%%%%%%%%%%%%%%%%%%%%%%%%%%%%%%%%%%%%%%%%%%%%%%%%%%%%%%%%%%%%
		\subsubsection{Bilateral compression}\label{sec:bil-compression}
%%%%%%%%%%%%%%%%%%%%%%%%%%%%%%%%%%%%%%%%%%%%%%%%%%%%%%%%%%%%%%%%%%%%%%%%%%%%%%%%%%%%%%

		Finally, we study a simple preference setting: bilateral compression. In this case, market participants do not exploit multilateral netting opportunities. Participants therefore do not need to share information and there is no need for a centralized mechanism. Formalizing this compression approach allows us, in part, to assess the added-value of a third party compression service provider when comparing efficiencies between bilateral and multilateral compressions. In our framework, bilateral compression is defined as follows:

		\begin{definition} [Bilateral compression]
      		$c(N,E)$ is a bilateral compression operator i.f.f. $c()$ is a compression operator that satisfies the compression tolerances set $\Gamma$:
          	\begin{equation}
            	a_{ij} = b_{ij} = \max{\{e_{ij}-e_{ji},0\}}, ~~ \forall (a_{ij},b_{ij}) \in \Gamma, e_{ij} \in E. \nonumber
          	\end{equation}
      	\end{definition}

      	For each pair of market participants $i$ and $j$, if we assume $e_{ij} > e_{ji}$, we have: $ e^{'}_{ij} = e_{ij} - e_{ji} $ and $e^{'}_{ji}= 0$ after bilateral compression.

      	In terms of feasibility, the mere existence of excess is not enough for bilateral compression to be applicable. In particular, we need at least two obligations between the same pair of counterparties and of opposite direction. Formally, we have the following results:

		\begin{proposition}\label{prop:bilateral-feasibility}
			Given a market $G(N,E)$ and a compression operator $c^b$ satisfying a bilateral compression tolerance set $\Gamma$,
		    \begin{equation}
		    	\Delta^{c^b}_{red}(G)>0 \;\; \Leftrightarrow \;\; \exists (i,j) \in N^2 ~~~ s.t.  ~~~ e_{ij}.e_{ji} > 0 ~~ \mbox{where} ~~ e_{ij},e_{ji}\in E \nonumber
		    \end{equation}
		\end{proposition}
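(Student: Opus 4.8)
The plan is to first reduce the statement about redundant excess to a statement about the reduction in total gross notional, and then settle both directions of the biconditional by an elementary pairwise computation. The starting observation is that the bilateral operator $c^b$ is net-equivalent: its tolerance set pins down $e'_{ij}=\max\{e_{ij}-e_{ji},0\}$ exactly, and since $\max\{a,0\}-\max\{-a,0\}=a$ with $a=e_{ij}-e_{ji}$, one gets $e'_{ij}-e'_{ji}=e_{ij}-e_{ji}$ for each ordered pair, hence $v^{'net}_i=\sum_j(e_{ij}-e_{ji})=v^{net}_i$ for every $i$. Because $G$ and $G'=c^b(G)$ therefore share the same net positions, Proposition~\ref{prop:minimum-notional} assigns them the same minimum net-equivalent notional $\tfrac12\sum_i|v_i^{net}|$, so by Definition~\ref{def:residual-redundant-excess} the redundant excess collapses to $\Delta^{c^b}_{red}(G)=\Delta(G)-\Delta(G')=x-x_{G'}$, where $x_{G'}$ denotes the gross notional after compression. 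It then suffices to show $x>x_{G'}$ if and only if some pair carries exposure in both directions.

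For the $(\Leftarrow)$ direction I would fix a pair $i,j$ with $e_{ij}\cdot e_{ji}>0$, so both exposures are strictly positive. Their joint contribution to the gross notional is $e_{ij}+e_{ji}$ before compression and $\max\{e_{ij}-e_{ji},0\}+\max\{e_{ji}-e_{ij},0\}=|e_{ij}-e_{ji}|$ afterwards. Since both are positive, $|e_{ij}-e_{ji}|=e_{ij}+e_{ji}-2\min\{e_{ij},e_{ji}\}$, so this single pair lowers the notional by $2\min\{e_{ij},e_{ji}\}>0$. Every other unordered pair $\{p,q\}$ contributes $|e_{pq}-e_{qp}|\leq e_{pq}+e_{qp}$ by the triangle inequality, so it can only weakly decrease the notional; summing gives $x_{G'}\leq x-2\min\{e_{ij},e_{ji}\}<x$, i.e. $\Delta^{c^b}_{red}(G)>0$.

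For the $(\Rightarrow)$ direction I would argue by contraposition: assume $e_{ij}\cdot e_{ji}=0$ for every ordered pair. Then for each unordered pair at most one direction carries positive exposure, say $e_{ij}\geq 0$ and $e_{ji}=0$, whence $e'_{ij}=\max\{e_{ij},0\}=e_{ij}$ and $e'_{ji}=\max\{-e_{ij},0\}=0=e_{ji}$; the operator leaves every edge unchanged. Consequently $x_{G'}=x$ and $\Delta^{c^b}_{red}(G)=0$, which is exactly the contrapositive of the forward implication.

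Since every step is elementary arithmetic, the only real care is conceptual rather than technical: the crux is the reduction in the first paragraph, which rests on recognising that net-equivalence makes the minimum notional common to $G$ and $G'$ and thereby turns the redundant excess into a plain gross-notional difference. Once that identity is secured, both directions follow immediately from the pairwise formula $e'_{ij}=\max\{e_{ij}-e_{ji},0\}$, and I do not anticipate any genuine obstacle.
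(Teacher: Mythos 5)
Your proof is correct and follows essentially the same route as the paper's: contraposition (the operator acts as the identity) when no pair holds offsetting exposures, and a strict gross-notional reduction of $2\min\{e_{ij},e_{ji}\}$ when such a pair exists. Your first paragraph---using net-equivalence of $c^b$ to equate the drop in gross notional with the redundant excess via Proposition~\ref{prop:minimum-notional}---is a step the paper leaves implicit (its own statement of the two conclusions even swaps the \emph{red}/\emph{res} roles), so your version is, if anything, the more carefully argued one.
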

		\begin{proof}
			~ Proof see Appendix~\ref{app-proofs}.
		\end{proof}

		The efficiency of bilateral compression is straightforward. It corresponds to the effect of netting out each pair of bilateral exposures. We thus obtain the following efficiency results:

		\begin{proposition}\label{prop:bilateral-efficiency}
			Given a market $G=(N,E)$ and a compression operator $c^b()$ satisfying a bilateral compression tolerance set $\Gamma$ leads to
			\begin{equation}
					\Delta^{c^b}_{res}(G) = \Delta(G) - \sum_{i,j \in N} \min \{e_{ij}, e_{ji}\} ~~ \mbox{where} ~~ e_{ij},e_{ji}\in E. \nonumber
			\end{equation}
		\end{proposition}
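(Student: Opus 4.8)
The plan is to exploit the fact that bilateral compression, like any compression operator, is net-equivalent, so that the residual excess of $G$ reduces to a simple computation of the gross notional removed. First I would record the action of the operator on each pair: since the tolerances force $a_{ij} = b_{ij} = \max\{e_{ij}-e_{ji},0\}$, the compressed edges are pinned down exactly, namely $e'_{ij} = \max\{e_{ij}-e_{ji},0\}$ and $e'_{ji} = \max\{e_{ji}-e_{ij},0\}$ for every pair. Equivalently, for each pair the smaller of the two opposing exposures is set to zero and subtracted from the larger.

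The first substantive step is to confirm net-equivalence, which legitimizes the use of Proposition \ref{prop:minimum-notional}. For a fixed pair $(i,j)$ the contribution of these two edges to $v_i^{net}$ before compression is $e_{ij}-e_{ji}$, and after compression it is $e'_{ij}-e'_{ji} = \max\{e_{ij}-e_{ji},0\} - \max\{e_{ji}-e_{ij},0\} = e_{ij}-e_{ji}$. Summing over all counterparties $j$ shows $v_i^{'net} = v_i^{net}$ for every $i$, so $G'$ is net-equivalent to $G$ (and the gross positions weakly decrease, so $c^b$ is a genuine compression operator).

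Net-equivalence is the key leverage. By Proposition \ref{prop:minimum-notional} the minimum net-equivalent gross notional depends only on the net positions, which $G$ and $G'$ share; hence both markets have the same $x' = \tfrac{1}{2}\sum_i |v_i^{net}|$. Using Definition \ref{def:residual-redundant-excess}, $\Delta_{res}(G) = \Delta(G') = x_{G'} - x'$, and since $\Delta(G) = x_G - x'$ with the same $x'$, I obtain $\Delta_{res}(G) = \Delta(G) - (x_G - x_{G'})$. It therefore only remains to evaluate the total gross notional eliminated.

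The final step is this bookkeeping computation, which I expect to be the only place requiring care. For each ordered pair the operator reduces $e_{ij}$ to $\max\{e_{ij}-e_{ji},0\}$, so the amount removed from that edge is $e_{ij} - \max\{e_{ij}-e_{ji},0\} = \min\{e_{ij},e_{ji}\}$ (checking the two cases $e_{ij}\geq e_{ji}$ and $e_{ij}< e_{ji}$). Summing over all ordered pairs gives $x_G - x_{G'} = \sum_{i,j\in N}\min\{e_{ij},e_{ji}\}$, which substituted above yields the claimed identity. The subtlety is the accounting of ordered versus unordered pairs: each unordered pair $\{i,j\}$ contributes its $2\min\{e_{ij},e_{ji}\}$ worth of removed notional through the two symmetric terms $\min\{e_{ij},e_{ji}\}$ and $\min\{e_{ji},e_{ij}\}$, so the ordered sum written in the statement is exactly correct.
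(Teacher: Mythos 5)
Your proof is correct and follows essentially the same route as the paper's: compute the per-pair gross reduction $\min\{e_{ij},e_{ji}\}$ from the netting rule, use conservation of net positions to equate the drop in gross notional with the drop in excess, and account for the ordered-pair double counting. The only difference is presentational — you make explicit (via net-equivalence and Proposition \ref{prop:minimum-notional}) the step the paper states implicitly when it says the excess ``follows the same change'' as the gross notional.
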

		\begin{proof}
			~ Proof see Appendix~\ref{app-proofs}.
		\end{proof}

		Technically, bilateral compression results in the removal of all closed chains of intermediation of length two. Hence, a bilaterally compressed market exhibit a maximum of one obligation between each pair of market participants.
%%%%%%%%%%%%%%%%%%%%%%%%%%%%%%%%%%%%%%%%%%%%%%%%%%%%%%%%%%%%%%%%%%%%%%%%%%%%%%%%%%%%%%
	\subsection{Compression efficiency ranking\label{sec:comparing-benchmarks}}
%%%%%%%%%%%%%%%%%%%%%%%%%%%%%%%%%%%%%%%%%%%%%%%%%%%%%%%%%%%%%%%%%%%%%%%%%%%%%%%%%%%%%%

		We close this Section with a ranking of efficiencies among the four benchmark settings we have introduced, namely, conservative, non-conservative, hybrid and bilateral. For each setting, we consider the maximum amount of excess that can be eliminated given the associated compression tolerances and the net-equivalent condition.

		\begin{proposition}\label{prop:efficiency-dominance}
			Given a market $G=(N,E)$ and the set of compression operators $\{c^c(),c^n(),c^h(),c^b()\}$ such that:
			\begin{itemize}
				\item $c^c()$ maximizes $\Delta^{c^c}_{red}(G)$ under a conservative compression tolerance set,
				\item $c^n()$ maximizes $\Delta^{c^n}_{red}(G)$ under a non-conservative compression tolerance set,
				\item $c^h()$ maximizes $\Delta^{c^h}_{red}(G)$ under a hybrid compression tolerance set,
				\item $c^b()$ maximizes $\Delta^{c^b}_{red}(G)$ under a bilateral compression tolerance set,
			\end{itemize}
			the following weak dominance holds:
			% \begin{equation}
			% 		c^b(G) \prec c^c(G) \prec c^h \prec c^n(G)\nonumber
			% \end{equation}
			% and
			\begin{equation}
					\Delta^{c^b}_{red}(G) \leq \Delta^{c^c}_{red}(G) \leq\Delta^{c^h}_{red}(G) \leq\Delta^{c^n}_{red}(G) = \Delta(G)\nonumber
			\end{equation}
		\end{proposition}
		\begin{proof}
			~ Proof see Appendix~\ref{app-proofs}.
		\end{proof}

		This result shows a precise dominance sequence. First, we see that non-conservative compression is the most efficient. This stems from the fact that a global non-conservative solution always eliminates all the excess in a market (see Proposition~\ref{prop:non-conservative-efficiency}). The second most efficient compression operator is the hybrid compression, followed by the conservative. The least efficient approach is the bilateral compression. The loss in efficiency is due to the fact that bilateral compression cannot eliminate excess resulting from chains of length higher than two. The proof of this proposition derives from an analysis of the compression tolerance sets of each approach. In fact, it can be shown that the bilateral compression tolerance set is a subset of the conservative set which in turn is a subset of the hybrid set which is also a subset of the non-conservative set. This nested structure of compression tolerances ensures that any globally optimal solution of a superset is at least as efficient as the globally optimal solution of any subset. 

		Overall, this result shows a trade-off between efficiency in excess elimination and tolerances relative to changes in the underlying the web of outstanding obligations. The sequence from non-conservative compression to bilateral compression is a discrete gradient of relationship preservation. The more (resp. less) conservative, the less (resp. more) efficient. 

		Further analysis on the relative efficiencies of each approach (e.g., strong dominance, quantities, etc.) needs to include more detailed information on the underlying set of obligations $E$. Therefore, we proceed next with a empirical estimation based on transaction-level data.
%%%%%%%%%%%%%%%%%%%%%%%%%%%%%%%%%%%%%%%%%%%%%%%%%%%%%%%%%%%%%%%%%%%%%%%%%%%%%%%%%%%%%%

%%%%%%%%%%%%%%%%%%%%%%%%%%%%%%%%%%%%%%%%%%%%%%%%%%%%%%%%%%%%%%%%%%%%%%%%%%%%%%%%%%%%%%
%%%%%%%%%%%%%%%%%%%%%%%%%%%%%%%%%%%%%%%%%%%%%%%%%%%%%%%%%%%%%%%%%%%%%%%%%%%%%%%%%%%%%%
\section{The data\label{sec:empiric}}
%%%%%%%%%%%%%%%%%%%%%%%%%%%%%%%%%%%%%%%%%%%%%%%%%%%%%%%%%%%%%%%%%%%%%%%%%%%%%%%%%%%%%%

	\subsection{Outline}\label{subsec:analysis-strategy}
		
		In the following Sections, we apply our framework to transaction-level data of OTC markets. Assuming all market participants would engage in a portfolio compression cycle, we estimate the size reductions that such market would exhibit as a function of the sets of tolerances introduced in Section~\ref{sec:compression}.

		Determining the efficiency under each setting requires detailed knowledge of the bilateral obligations between counterparties. In general, such information for OTC markets is not readily available (see Section~\ref{sec:background}). Under EMIR, any legal entity based in the EU is required to report all derivatives trading activity to a trade repository. The European Systemic Risk Board (ESRB) is granted access to the collected data for financial stability purposes.\footnote{For more details on the dataset, the general cleaning procedure and other statistics, see \citep{abad2016shedding}}

		This unique dataset allows us to provide the first empirical account of the levels of market excess and the efficiency of various compression scenarios. In this paper, we focus on CDS derivatives.\footnote{Credit default swap contracts are the most used types of credit derivatives. A CDS offers protection to the buyer of the contract against the default of an underlying reference. The seller thus assumes a transfer of credit risk from the buyer. CDS contracts played an important role during the 2007-2009 financial crisis. For more information, see \citep{stulz2010credit}.} The dataset covers all CDS transactions and positions outstanding from October 2014 to April 2016 in which at least one counterparty is legally based in the European Union. 
		
		The reason we focus on the CDS market is fourfold. First, CDS contracts are a major instrument to transfer risk in the financial system. The key role they played in the unfolding of the GFC dramatically illustrates this point. Second, CDS markets have been early adopters of portfolio compression as discussed in Section~\ref{sec:background}. Third, the CDS markets we study are not subject to mandatory clearing and clearing rates remain low \citep{abad2016shedding}.\footnote{We focus on single-name CDS. In contrast with index CDS, there is no clearing mandate on those contracts under EMIR. See the Commission Delegated Regulation (EU) 2016/592 of 1 March 2016 supplementing Regulation (EU) No 648/2012 of the European Parliament and of the Council with regard to regulatory technical standards on the clearing obligation.} As such, they have maintained a dealer-customer structure relevant for non-trivial compression results. By the same token, they also lend themselves adequately to the central clearing counterfactual analyzing the introduction of mandatory clearing presented in Section~\ref{sec:ccp}. Fourth, the nature of these swaps make them the ideal candidate for our analysis. The notional amount of any bilateral contract corresponds to the expected payment (minus recovery rate) from the seller of protection to the buyer in case of default of the underlying entity. Therefore, positions are fungible as long as they are written on the same reference entity. In addition, it is always possible to identify, at any point in time, the payer and the receiver.\footnote{For other types of swaps, such as IRS, payer and receiver may change during the lifetime of a given trade and the overall analysis becomes less straightforward.}
		
		% Our analysis will be split in two main steps: first, we study the original markets, second, we study the bilaterally compressed markets. In the first part, we analyze the original markets as obtained from the procedure detailed in Section 5.2.: for each market, we compute the (i) dealer-customer network characteristics, (ii) excess statistics and (iii) efficiency of three compression approaches: bilateral, conservative and hybrid compression. We do not report results from non-conservative compression as an optimal solution always leads to zero residual excess (see Proposition \ref{prop:non-conservative-efficiency}). Bilateral compression is the result of a bilateral netting between all pairs of counterparties in the market as detailed in Section \ref{sec:bil-compression}. In the case of the conservative and hybrid compressions, the results are not trivial and require more sophisticated algorithmic approaches that ensure globally optimal solutions. We design a linear programming framework tailored to the respective tolerance sets to reach such solutions. All algorithms used to solve these problems are described in the Appendix~\ref{app-algorithms}. For each market $G$, we implement each compression algorithm and compute its efficiency (i.e, redundant excess) as a fraction of the total level of excess:

		For each market, we compute the (i) dealer-customer network characteristics, (ii) excess statistics and (iii) efficiency under each tolerance setting: bilateral, conservative and hybrid compression.\footnote{We do not report results from non-conservative compression as an optimal solution always leads to zero residual excess by virtue of Proposition \ref{prop:non-conservative-efficiency}.} Bilateral compression is the result of a bilateral netting between all pairs of counterparties in the market as detailed in Section \ref{sec:bil-compression}. In the case of the conservative and hybrid compressions, we design a linear programming framework tailored to the respective tolerance sets.\footnote{All algorithms used to solve these problems are described in Appendix~\ref{app-algorithms}.} For each market $G$, we implement each compression algorithm and compute its efficiency: the ratio of redundant excess over the total level of excess:
		
		\begin{itemize}
			\item Bilateral: $\rho_{b} = \frac{\Delta^b_{red}(G)}{\Delta(G)}$;
		  	\item Conservative : $\rho_{c} = \frac{\Delta^c_{red}(G)}{\Delta(G)}$;
		  	\item Hybrid : $\rho_{h} = \frac{\Delta^h_{red}(G)}{\Delta(G)}$.
		\end{itemize}

		The resulting efficiency differences allows us to quantify i) the effect of coordinated multilateral compression (i.e., conservative and hybrid cases) versus asynchronous bilateral compression (i.e., bilateral case)\footnote{The synchronization aspect stems from the fact that both the conservative and hybrid approaches assume coordination among market participants. They all agree to compress the submitted observed trades at the same time. This condition is not necessary in the bilateral compression.} and ii) the quantitative effect of relaxing compression tolerances from bilateral to conservative to hybrid settings. In Appendix~\ref{app-bilateral-market-compression-analysis}, we report the same analysis for bilaterally compressed markets in order to quantify excess and compression efficiency beyond the bilateral redundancy. Results remain qualitatively robust.

		Finally, we compare results from applying multilateral compression on the original market and on the bilaterally compressed market. Doing so quantifies the potential losses in efficiency due to a sequence of bilateral-then-multilateral compression which bears policy design implications.

	\subsection{Dataset description}

		% There are multiple ways to aggregate the data and build the network of obligations. 
		% In the following, we expose our strategy to aggregate the data and build the network of obligations.
		% which focuses on the most conservative approach.

		We use $19$ mid-month snapshots from October 2014 to April 2016. Overall, the original sample comprises 7,300 reference entities. The vast majority of the notional, however, is concentrated in a lower number of entities. We retain the top 100 reference entities which we find to be a good compromise between the amount of notional traded and clarity of analysis (see statistics in Section \ref{sec:gen-stat}).

		For each reference $k$, a market is the set of outstanding obligations written on $k$. Each bilateral position reports the identity of the two counterparties, the underlying reference entity, the maturity, the currency and its notional amount. We select the most traded reference identifier associated to the reference with the most traded maturity (by year) at each point in time. At the participant level, we select participants using their Legal Entity Identifier (LEI). In practice, financial groups may decide to submit positions coming from different legal entities of the same group. We do not consider such case in the remainder.\footnote{Our approach is in line with the recent Opinion on Portfolio Margining Requirements under Article 27 of EMIR Delegated Regulation of the European Securities and Market Authority (ESMA). Under articles 28, the netting sets related to different single name and indexes should be separated for portfolio margining comprise. Note that under article 29, different maturities can be considered the same product which is less conservative than in our approach.} 

		% Once more, for sake of simplicity and consistency, we bilaterally net the set of outstanding transactions. As a result, a market $G_k = (N_k, E_k)$ will thus be defined as the set of market participants and the outstanding bilaterally netted CDS exposures related to a specific reference entity $k$.

		Our restricted sample comprises 43 sovereign entities (including the largest EU and G20 sovereign entities), 27 financials (including the largest banking groups) and 30 non-financials entities (including large industrial and manufacturing groups). We analyze each market separately.

	\subsection{Descriptive statistics}\label{sec:gen-stat}
		
		Table \ref{tab:bowtiestats-raw} provides the main statistics of each market segment.\footnote{Sampling statistics of the data are reported in the Appendix~\ref{app-sampling-stats}. The total notional of the selected 100 entities varies between 380Bn Euros and 480Bn Euros retaining roughly $30-34\%$ of the original total gross notional.} We compute the average number of dealers, customers on the buy side and customers on the sell side across all entities in the different snapshots.\footnote{Note that we empirically identify dealers as intermediaries beyond bilateral interactions. Indeed, from the formal definition of dealer in our framework (i.e., $\delta_i$), two market participants buying and selling from each other would be identified as dealers. This would not properly reflect the role of dealers in derivatives markets. As such, by convention, we set market participants as dealer if they appear as intermediary in when obligations are bilaterally netted out. Similarly, buying customers and selling customers are determined using the bilaterally compressed market. This convention does not affect the theoretical results and provides a more grounded interpretation of the empirical results, in particular for the hybrid compression.} We observe stable numbers across time: per reference entity, there are on average $18$ to $19$ dealers, $12$ to $17$ customers buying a CDS, $14$ to $21$ customers selling a CDS. The average number of bilateral positions per reference entity varies more through time but remains between $140$ and $170$. Taken as a whole, markets are quite sparse with an average density around $0.10$: $10\%$ of all possible bilateral positions between all market participants are realized. This measure is almost three times higher when we only consider the intra-dealer market. The bulk of the activity in those market revolves around intra-dealer trades. The amount of intra-dealer notional also highlights the level of activity concentration around dealers: it averages around $80\%$ of the total notional. These results are in line with the literature (see Section~\ref{sec:intro}). They provide evidence of the tightly-knit structure present in the intra-dealer segment. Finally, the last column of Table~\ref{tab:bowtiestats-raw} confirms the very low frequency of customer-customer trades: on average, less then $0.2\%$ of all obligations are written without a dealer on either side of the trade.

		\begin{sidewaystable}
		% \begin{table}
			\centering
			% \rule{0.75\textheight}{0.5\textheight}
			\begin{tabular}{lccccccccc}

			Time & \begin{tabular}{c}Avg.\\ num.\\ dealers\end{tabular} & \begin{tabular}{c}Avg.\\ num. \\ customers \\buying\end{tabular} & \begin{tabular}{c}Avg.\\ num.\\ customers \\selling\end{tabular} & \begin{tabular}{c}Avg.\\ num.\\ obligations\end{tabular} &
			\begin{tabular}{c}Avg.\\ share\\ intra\\ dealer \\ notional \end{tabular} & \begin{tabular}{c}Avg. \\ density\end{tabular} & \begin{tabular}{c}Avg.\\ intra\\ dealer \\ density\end{tabular} & \begin{tabular}{c}Avg.\\ intra\\ customer \\ density\end{tabular} \\ 
			  \toprule
				Oct-14	&	18 &              16 &               20 &             153.72 &                           0.812 &           0.105 &                 0.332 &                    0.0010 \\
				Nov-14	&	18 &              16 &               21 &             162.26 &                           0.831 &           0.109 &                 0.345 &                    0.0006 \\
				Dec-14	&	19 &              17 &               21 &             171.13 &                           0.829 &           0.109 &                 0.339 &                    0.0005 \\
				Jan-15	&	19. &              17 &               21 &             171.25 &                           0.827 &           0.106 &                 0.334 &                    0.0006 \\
				Feb-15	&	19 &              17 &               21 &             168.45 &                           0.826 &           0.106 &                 0.335 &                    0.0004 \\
				Mar-15	&	18 &              15 &               17 &             154.42 &                           0.832 &           0.110 &                 0.339 &                    0.0007 \\
				Apr-15	&	18 &              13 &               15 &             143.71 &                           0.829 &           0.110 &                 0.344 &                    0.0005 \\
				May-15	&	18 &              12 &               15 &             143.66 &                           0.827 &           0.108 &                 0.336 &                    0.0008 \\
				Jun-15	&	18 &              13 &               14 &             142.01 &                           0.828 &           0.106 &                 0.323 &                    0.001 \\
				Jul-15	&	18 &              14 &               14 &             143.68 &                           0.813 &           0.101 &                 0.314 &                    0.0009 \\
				Aug-15	&	19 &              15 &               17 &             149.12 &                           0.821 &           0.101 &                 0.308 &                    0.0011 \\
				Sep-15	&	19 &              16 &               17 &             151.36 &                           0.804 &           0.098 &                 0.302 &                    0.0018 \\
				Oct-15	&	19 &              16 &               18 &             155.52 &                           0.815 &           0.099 &                 0.297 &                    0.0013 \\
				Nov-15	&	19 &              17 &               19 &             158.10 &                           0.810 &           0.099 &                 0.293 &                    0.0017 \\
				Dec-15	&	19 &              17 &               19 &             158.78 &                           0.821 &           0.098 &                 0.292 &                    0.0012 \\
				Jan-16	&	20 &              17 &               18 &             159.58 &                           0.822 &           0.098 &                 0.291 &                    0.0013 \\
				Feb-16	&	19 &              17 &               18 &             159.80 &                           0.813 &           0.098 &                 0.291 &                    0.0012 \\
				Mar-16	&	18 &              14 &               17 &             144.11 &                           0.790 &           0.096 &                 0.301 &                    0.0018 \\
				Apr-16	&	19 &              14 &               17 &             146.42 &                           0.811 &           0.098 &                 0.301 &                    0.0019 \\       
			   \bottomrule
			\end{tabular}
			\caption{Statistics of sampled markets over time: average numbers of dealers, customers, obligation and concentration statistics.}\label{tab:bowtiestats-raw}
		% \end{table}
		\end{sidewaystable}

%%%%%%%%%%%%%%%%%%%%%%%%%%%%%%%%%%%%%%%%%%%%%%%%%%%%%%%%%%%%%%%%%%%%%%%%%%%%%%%%%%%%%%
%%%%%%%%%%%%%%%%%%%%%%%%%%%%%%%%%%%%%%%%%%%%%%%%%%%%%%%%%%%%%%%%%%%%%%%%%%%%%%%%%%%%%%
\section{Market excess and compression efficiency}\label{sec:compress-empirics}
%%%%%%%%%%%%%%%%%%%%%%%%%%%%%%%%%%%%%%%%%%%%%%%%%%%%%%%%%%%%%%%%%%%%%%%%%%%%%%%%%%%%%%
%%%%%%%%%%%%%%%%%%%%%%%%%%%%%%%%%%%%%%%%%%%%%%%%%%%%%%%%%%%%%%%%%%%%%%%%%%%%%%%%%%%%%%

	We start by measuring the level of excess present in the original markets as a function of the total gross notional (i.e., $\epsilon(G)  = \frac{\Delta(G)}{x}$). Table~\ref{tab:excess-stats-raw} reports the statistics of excess computed across all reference entities for six snapshots equally spread between October $2014$ and April $16$ including minimum, maximum, mean, standard deviation and quartiles, computed across all 100 reference entities in our sample. Results on the means and medians are stable over time and mostly higher than $0.75$. The interpretation of this result is that around three quarters of the gross notional in the most traded CDS markets by EU institutions is in excess vis-\'a-vis participants' net position. At the extremes, we note a high degree of variability: the minimum and maximum levels of excess relative to total gross notional oscillate around $45\%$ and $90\%$ respectively. 

	\begin{table}
		\centering
		\begin{tabular}{lrrrrrrr}
			\textbf{Total Excess}  & Oct-14 & Jan-15 & Apr-15 & Jul-15 & Oct-15 & Jan-16 & Apr-16 \\ 
			\toprule
			min          &  0.529 &  0.513 &  0.475 &  0.420 &  0.533 &  0.403 &  0.532 \\
			max          &  0.904 &  0.914 &  0.895 &  0.901 &  0.903 &  0.890 &  0.869 \\
			mean         &  0.769 &  0.777 &  0.766 &  0.757 &  0.751 &  0.728 &  0.734 \\
			stdev        &  0.077 &  0.082 &  0.085 &  0.090 &  0.082 &  0.096 &  0.080 \\
			first quart. &  0.719 &  0.733 &  0.712 &  0.703 &  0.693 &  0.660 &  0.678 \\
			median       &  0.781 &  0.791 &  0.783 &  0.769 &  0.758 &  0.741 &  0.749 \\
			third quart. &  0.826 &  0.847 &  0.832 &  0.822 &  0.808 &  0.802 &  0.796 \\
		\end{tabular}
		\caption{Statistics of market excess over time: share of notional in excess against total gross notional for each market.}
		\label{tab:excess-stats-raw}
	\end{table}
	
	Overall, results reported in Table~\ref{tab:excess-stats-raw} show that large amounts of notional are eligible for compression. We now move to the efficiency of each compression operator. The results are reported in Table~\ref{tab:compression-results-raw}. After having implemented the compression algorithms on each market, we compute efficiency as defined in Section \ref{subsec:analysis-strategy}.\footnote{Note the current compression exercise does not represent the amount of compression achieved in the market. Rather, this exercise identifies levels still achievable given the empirically observed outstanding positions.}

	Analyzing the means and medians, we observe that the bilateral compression already removes $50\%$ of excess on average. Nevertheless both multilateral compression approaches (i.e., conservative and hybrid) outperform it by removing around $85\%$ and $90\%$ of the excess respectively. Levels are larger than the maximum efficiency achievable by bilateral compression which oscillates around $75\%$. In comparison with the bilateral efficiency, the conservative and hybrid approaches perform similarly on the extremes: minima range between $55\%$ and $62\%$ and maxima range between $98\%$ and $99\%$, respectively. In particular, results from the conservative compression show that, even under severe constraints, the vast majority of market's excess can be eliminated. This result is made possible by the large levels of concentrations and tightly-knit structure exhibited in the intra-dealer segment.

	\begin{table}
		\centering
		\begin{tabular}{lrrrrrrr}
			\textbf{Bilateral ($\rho_b$)}  & Oct-14 & Jan-15 & Apr-15 & Jul-15 & Oct-15 & Jan-16 & Apr-16 \\ 
			\toprule
			min          &  0.278 &  0.281 &  0.286 &  0.277 &  0.276 &  0.276 &  0.260 \\
			max          &  0.779 &  0.791 &  0.759 &  0.777 &  0.717 &  0.711 &  0.746 \\
			mean         &  0.528 &  0.536 &  0.524 &  0.522 &  0.513 &  0.512 &  0.543 \\
			stdev        &  0.101 &  0.106 &  0.103 &  0.105 &  0.107 &  0.109 &  0.108 \\
			first quart. &  0.464 &  0.460 &  0.469 &  0.452 &  0.448 &  0.444 &  0.448 \\
			median       &  0.526 &  0.542 &  0.535 &  0.530 &  0.517 &  0.528 &  0.555 \\
			third quart. &  0.583 &  0.597 &  0.590 &  0.600 &  0.596 &  0.597 &  0.623 \\
			\\
			\textbf{Conservative ($\rho_c$)}  & Oct-14 & Jan-15 & Apr-15 & Jul-15 & Oct-15 & Jan-16 & Apr-16 \\ 
			\toprule
			min          &  0.558 &  0.547 &  0.545 &  0.507 &  0.491 &  0.528 &  0.574 \\
			max          &  0.985 &  0.982 &  0.973 &  0.967 &  0.968 &  0.979 &  0.969 \\
			mean         &  0.836 &  0.857 &  0.848 &  0.843 &  0.828 &  0.827 &  0.834 \\
			stdev        &  0.091 &  0.087 &  0.090 &  0.091 &  0.104 &  0.106 &  0.090 \\
			first quart. &  0.781 &  0.816 &  0.810 &  0.800 &  0.777 &  0.773 &  0.788 \\
			median       &  0.852 &  0.880 &  0.868 &  0.858 &  0.849 &  0.847 &  0.860 \\
			third quart. &  0.906 &  0.925 &  0.913 &  0.915 &  0.902 &  0.907 &  0.904 \\
			\\
			\textbf{Hybrid ($\rho_h$)}  & Oct-14 & Jan-15 & Apr-15 & Jul-15 & Oct-15 & Jan-16 & Apr-16 \\ 
			\toprule
			min          &  0.589 &  0.626 &  0.636 &  0.653 &  0.574 &  0.619 &  0.676 \\
			max          &  0.990 &  0.994 &  0.988 &  0.990 &  0.994 &  0.989 &  0.990 \\
			mean         &  0.878 &  0.898 &  0.894 &  0.893 &  0.881 &  0.882 &  0.898 \\
			stdev        &  0.079 &  0.072 &  0.074 &  0.073 &  0.085 &  0.080 &  0.069 \\
			first quart. &  0.821 &  0.859 &  0.862 &  0.865 &  0.831 &  0.836 &  0.863 \\
			median       &  0.894 &  0.916 &  0.918 &  0.912 &  0.901 &  0.908 &  0.911 \\
			third quart. &  0.935 &  0.952 &  0.947 &  0.951 &  0.948 &  0.945 &  0.947 \\
		\end{tabular}
		\caption{Statistics of compression efficiency over time: share excess eliminated after compression against original level of market excess for each market.
		}
		\label{tab:compression-results-raw}
	\end{table}

	Analyzing further the interplay between bilateral and multilateral compression showcases the added-value of multilateral compression services. In fact, participants can engage in a decentralized and asynchronous fashion to achieve bilateral compression. This is not straightforward for multilateral compression. This difference also allows participants to seek to bilaterally compress some of their positions before participating in a multilateral compression cycle. We analyze this situation as follows: for each setting, we compare the efficiency of the operation on the original market with the aggregate efficiency when bilateral compression is applied first.\footnote{Appendix~\ref{app-bilateral-market-compression-analysis} reports an analysis on bilaterally compressed markets similar to the one produced for original markets.} 

	Figure~\ref{fig:raw-vs-bilateral-compression-comparison} reports the distribution of efficiency ratios when multilateral compression operators are applied to the full network and when they are combined with bilateral compression first. The latter results are obtained by adding the absolute bilateral results reported in Table~\ref{tab:compression-results-raw} to the absolute excess reduction for the conservative and hybrid approach as in Table~\ref{tab:compression-results-bil} then dividing by the aggregate notional of the original markets.

	\begin{figure}[!tbp]
	\centering
		\includegraphics[width = \textwidth]{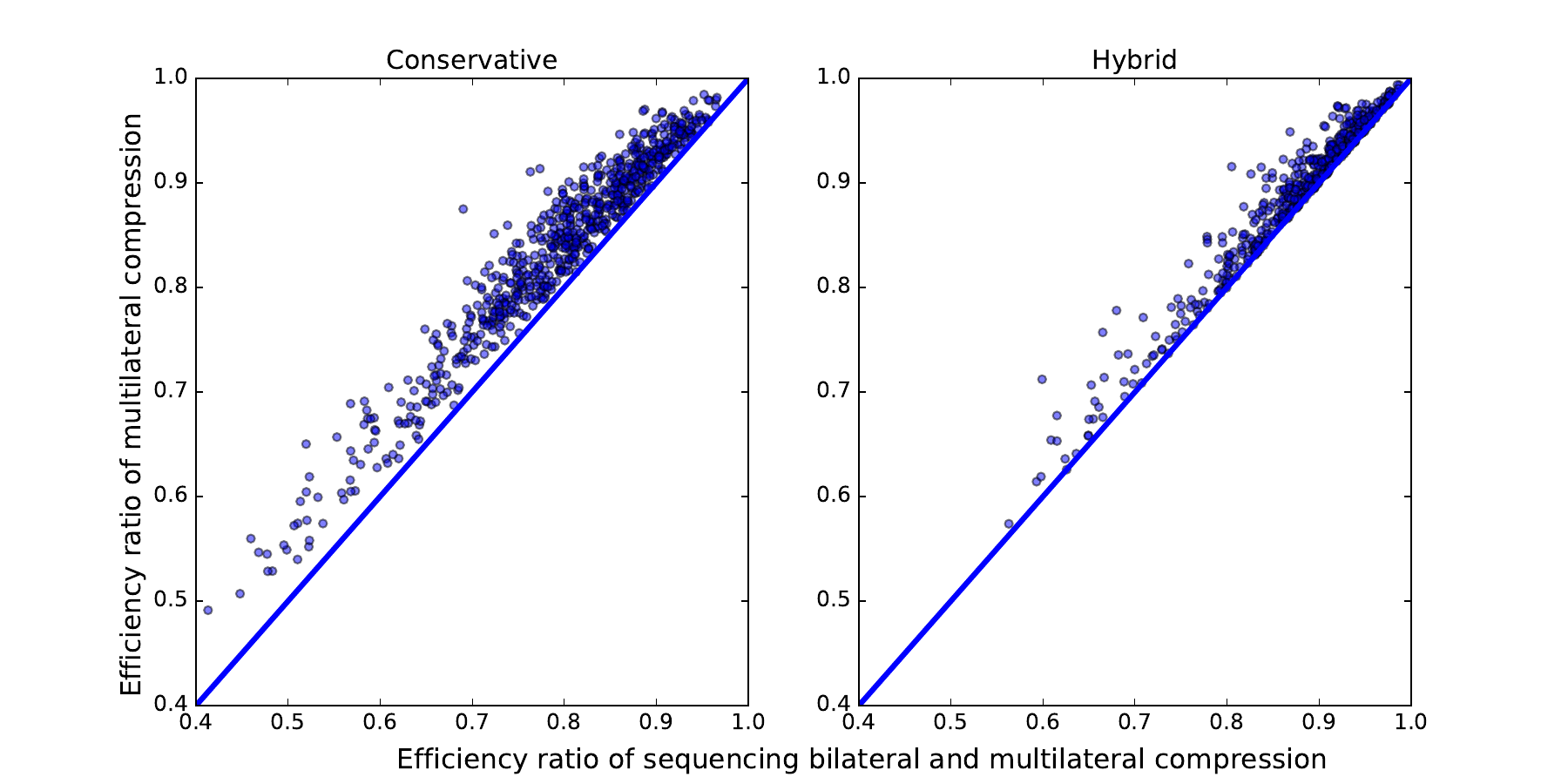}
		\caption{Comparison of the efficiency between multilateral compression in the original markets and a sequence of bilateral and multilateral compression. All snapshots and market instances are reported on the same figures.\label{fig:raw-vs-bilateral-compression-comparison}}
	\end{figure}

	The results show that multilateral compression on the original market is always more efficient than the sequence of bilateral-then-multilateral compression. Nevertheless, the sequence is particularly relevant under a conservative setting of preferences. In fact, the difference for hybrid compression is lower (i.e., about one percentage point improvement in the median) than in the conservative case (i.e., up to seven percentage points). 

	More in general, Figure~\ref{fig:raw-vs-bilateral-compression-comparison} suggests that a more coordinated and collective action for compression provides more efficiency. Henceforth, regulatory incentives would be more effective when favoring multilateral over bilateral compression. However, under EMIR, while there is no explicit distinction, the condition is set at the bilateral level (i.e., 500 bilateral contracts with the same counterparty), which may encourage bilateral compression. In contrast, measures based on notional approaches such as net-to-gross ratios would potentially improve incentives to compress as well as the efficiency of the multilateral exercises.
%%%%%%%%%%%%%%%%%%%%%%%%%%%%%%%%%%%%%%%%%%%%%%%%%%%%%%%%%%%%%%%%%%%%%%%%%%%%%%%%%%%%%%
%%%%%%%%%%%%%%%%%%%%%%%%%%%%%%%%%%%%%%%%%%%%%%%%%%%%%%%%%%%%%%%%%%%%%%%%%%%%%%%%%%%%%%
\section{The effects of clearinghouse proliferation\label{sec:ccp}}
%%%%%%%%%%%%%%%%%%%%%%%%%%%%%%%%%%%%%%%%%%%%%%%%%%%%%%%%%%%%%%%%%%%%%%%%%%%%%%%%%%%%%%
%%%%%%%%%%%%%%%%%%%%%%%%%%%%%%%%%%%%%%%%%%%%%%%%%%%%%%%%%%%%%%%%%%%%%%%%%%%%%%%%%%%%%%

	The promotion of central clearing in OTC derivatives markets has been a major element of the post-crisis regulatory reform \citep{fsb2017review}. Central clearing consists of interposing a Central Clearing Counterparty (CCP) between each side of a contract. The guiding principle of the reform is based on the premise that increased clearing of transactions provides more stability to markets by means of counterparty risk elimination, increased netting efficiencies and risk mutualization \citep{cecchetti2009central}. Mandates to clear specific asset classes together with larger capital and collateral costs for non-cleared transactions have deeply transformed the organization of several OTC markets in the recent years.\footnote{For an overview of the post-crisis clearing incentives and their effects on OTC markets see \citep{fsb2018review}} 

	In this section, we adapt our framework to assess the effect of several central clearing scenarios on market excess. In particular, we are interested in estimating compression performances in the presence of central clearing. 

	Note that there exists a distinction between central clearing and portfolio compression in terms of netting. In practice, CCPs generate multilateral netting opportunities among their members in the form of cash flows. This does not necessarily translate into a reduction of gross positions.\footnote{For instance, the London Clearing House (LCH) has developed a platform to allow their clearing members to access compression service providers such as TriOptima over the trades cleared by LCH. See: \url{https://www.lch.com/services/swapclear/enhancements/compression}} Portfolio compression instead explicitly involves the termination of positions, implying a systematic reduction of gross notional. For sake of consistency, in what follows we will consider the netting efficiency of central clearing in terms of gross reduction when trades are bilaterally compressed with the CCP.

	\subsection{A single CCP}

		Introducing a single CCP transforms the network structure of a market into a star network where the CCP, denoted $c$, is on one side of all obligations. Every original trade is novated into two new trades. By construction, the CCP has a net position of 0 and its gross position is equal the total market size: $v^{gross}_{c} = \sum_{c,j}e^{'}_{cj} = x^{c}$. Before the bilateral compression with the CCP, we have
		$x^{CCP} = 2x ~~~ \mbox{and} ~~~ v^{c}_i = v_i ~ \forall i \in N$. In fact, the total size of the market doubles with a CCP while all market participants keep their net position unchanged. Let $m$ be the minimum total notional required to satisfy every participants' net position as defined in Eq~\ref{eq:minnotionalnetequivalent} from Proposition~\ref{prop:minimum-notional}. Hence, the excess before compression is given by: $\Delta(G^{CCP}) = 2x - m = x + \Delta(G)$.

		Compression in a single CCP market is equivalent to the bilateral compression of a star-network. All trades between a counterparty $i$ and the CCP $c$ are bilaterally netted such that: $e^{\prime}_{ic} = max \{e_{ic}-e_{ci},0\} ~~ \mbox{and} ~~ e^{\prime}_{ci} = max \{e_{ci}-e_{ic},0\}$. As a result, the total size of the market after compression with a single CCP is given by $x' = 2m$.
		
		Compressing the original market $G=(N,E)$ under one single CCP thus leads to a redundant excess of $\Delta^{CCP}_{red}(G) = x - x^{\prime} = x - 2m$. We can thus compute the efficiency ratio as follows $\rho^{CCP} = \frac{\Delta^{CCP}_{red}(G)}{\Delta(G)} = \frac{x - 2m}{x - m} = 1 - \frac{m}{x-m}$.

		Without loss of generality, we formulate the efficiency under one single CCP as follows:
		\begin{proposition}
			\begin{equation}\label{eq:1-ccp-efficiency}
				\rho^{CCP} = 1 - \frac{m + x - x}{x-m}= 2 - \frac{x}{x-m} = 2 - \frac{1}{\epsilon}
			\end{equation}		
		where $\epsilon$ is the share of excess present in the original market: $\epsilon = \frac{\Delta(G)}{x} = \frac{x-m}{x}$.
		\end{proposition}
		From this expression we see that:
		\begin{corollary}\label{cor:1-ccp}
			\begin{itemize}
				\item If the excess in a market is less than $50\%$ of total notional, compressing with a single CCP is \textbf{counter-efficient}: it increases the excess
				\item If the excess in a market is equal to $50\%$ of total notional, compressing with a single CCP is \textbf{neutral}: it does not modify the excess
				\item If the excess in a market is higher to $50\%$ of total notional, compressing with a single CCP is \textbf{sub-excess ratio efficient}: the efficiency is always lower than the excess share.
				\item If the excess in a market is equal to $100\%$ of total notional, compressing with a single CCP is \textbf{fully efficient}: it removes all the excess.
			\end{itemize}
		\end{corollary}

		From the outcomes presented in Corollary \ref{cor:1-ccp}, we identify the most empirically relevant case using the data described in Section~\ref{sec:empiric}. We compare the efficiency of one single CCP with the efficiency results from the previous Section. For each compression setting, we collect the full set of markets - through references and time - and compare the efficiency ratios with Equation~\ref{eq:1-ccp-efficiency}.  

		Figure \ref{fig:ccp-compression} reports the results. The multilateral compression operations (conservative and hybrid) systematically yield higher efficiency than the compression with one single CCP. In the majority of cases, bilateral compression is less efficient than one single CCP.

		\begin{figure}
		\centering
			\includegraphics[width = \textwidth]{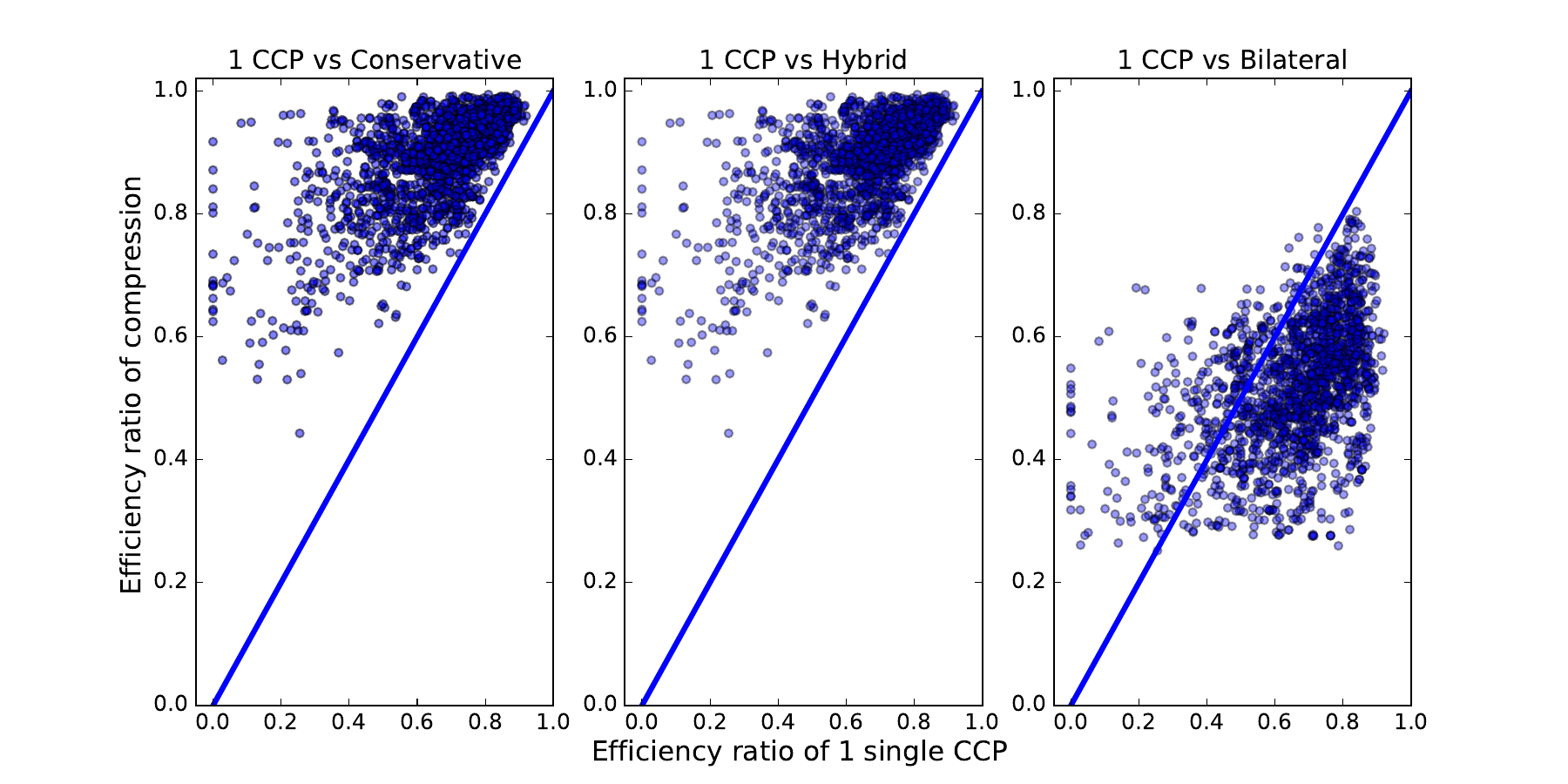}
			\caption{Comparison of efficiency ratios between compression operators and one single CCP. All snapshots and market instances are reported on the same figures.\label{fig:ccp-compression}}
		\end{figure}

		Despite the multilateral netting opportunities brought by centralization, novating contracts to the clearinghouse also duplicates the notional value of each bilateral obligations. When only considering the effects over gross notional, the above empirical exercise indicates that this trade-off is first-order dominated by multilateral compression without central clearing.

	\subsection{Multiple CCPs}

		We consider the case of multiple CCPs in the market. We run an empirical exercise in which the set of bilateral positions is reorganized among several CCPs. For a number $n^{ccp}$ of CCPs, each bilateral position is cleared with one CCP chosen uniformly at random. Once all bilateral positions are assigned and duplicated, each CCP compresses bilaterally with their members. For each given market and $n^{ccp}$, we generate 1000 realizations of CCP allocations and compute statistics of the compression efficiency ratios.

		We study two cases. In the first scenario, only bilateral compression between members and their CCPs can take place. In the second scenario, we analyze the effect of multilateral compression across CCPs. Compression across CCPs can take place when members of one CCP are also members of another CCP. In the following, we assume a conservative preference setting among CCPs. Figure~\ref{fig-cross-ccp-compression} provides a stylized example of compression across two CCPs. Note that compression tolerances on counterparty relationships among participants become irrelevant in this context: all participants are exclusively exposed to CCPs. 

		\begin{figure}
			\centering
			\begin{subfigure}[t]{0.5\textwidth}
            
            	\begin{tikzpicture}
						\begin{scope}[every node/.style={circle,thick,draw}]
							\node (i) at (0,0) {i};
							\node (j) at (0,3) {j};
							\node (m1) at (3,3) {...};
							\node (m2) at (3,0) {...};
							\node (m3) at (-3,3) {...};
							\node (m4) at (-3,0) {...};
						\end{scope}
						\begin{scope}[every node/.style={rectangle,thick,draw}]
							\node (k) at (1.5,1.5) {2};
							\node (l) at (-1.5,1.5) {1};						
						\end{scope}

						\begin{scope}[every node/.style={fill=white,circle},
						    every edge/.style={draw=red,very thick}]
						    \path [->] (l) edge node {$30$} (j);
						    \path [->] (j) edge node {$15$} (k);
						    \path [->] (k) edge node {$20$} (i);
						    \path [->] (i) edge node {$10$} (l);
						    % \path [->] (k) edge node {$10$} (m1);
						    % \path [->] (m2) edge node {$20$} (k);
						    % \path [->] (l) edge node {$10$} (m3);
						    % \path [->] (m4) edge node {$30$} (l);
						\end{scope}
						\begin{scope}[every node/.style={fill=white,circle},
						    every edge/.style={draw=gray,very thick}]
						    \path [->] (k) edge node {$15$} (m1);
						    \path [->] (m2) edge node {$20$} (k);
						    \path [->] (l) edge node {$10$} (m3);
						    \path [->] (m4) edge node {$30$} (l);
					    \end{scope}
					\end{tikzpicture}
                \caption{Before Compression}
            \end{subfigure}
            \begin{subfigure}[t]{0.3\textwidth}
            	\begin{tikzpicture}

						\begin{scope}[every node/.style={circle,thick,draw}]
							\node (i) at (0,0) {i};
							\node (j) at (0,3) {j};
							\node (m1) at (3,3) {...};
							\node (m2) at (3,0) {...};
							\node (m3) at (-3,3) {...};
							\node (m4) at (-3,0) {...};
						\end{scope}
						\begin{scope}[every node/.style={rectangle,thick,draw}]
							\node (k) at (1.5,1.5) {2};
							\node (l) at (-1.5,1.5) {1};						
						\end{scope}

						\begin{scope}[every node/.style={fill=white,circle},
						    every edge/.style={draw=red,very thick}]
						\end{scope}

						\begin{scope}[every node/.style={fill=white,circle},
						    every edge/.style={draw=gray,very thick}]
						    \path [->] (k) edge node {$15$} (m1);
						    \path [->] (m2) edge node {$20$} (k);
						    \path [->] (l) edge node {$10$} (m3);
						    \path [->] (m4) edge node {$30$} (l);
					    \end{scope}

						\begin{scope}[every node/.style={fill=white,circle},
						    every edge/.style={draw=blue,very thick}]
						    \path [->] (l) edge node {$20$} (j);
						    \path [->] (k) edge node {$10$} (i);
						    \path [->] (j) edge node {$5$} (k);
						    \path [->] (i) edge node {$0$} (l);
					    \end{scope}

					\end{tikzpicture}
                \caption{After Compression}
            \end{subfigure}
			\caption{A graphical example of portfolio compression across CCPs. Panel (a) exhibits a market consisting of market participants (i, j,and other circles) and CCPs (1 and 2). Panel (b) shows a multilateral compression solution to the market.}\label{fig-cross-ccp-compression}
		\end{figure}

		\begin{table}
			\renewcommand{\arraystretch}{0.9}
			\begin{tabular}{c|ccccc} \toprule  {} & Market 1 & Market 2 & Market 3 & Market 4 & Market 5 \\
				\midrule
				Original excess & 34.834 & 27.489 & 31.592 & 26.227 & 27.051 \\
				Conservative compression    & 0.924 & 0.906 & 0.954 & 0.927 & 0.911 \\
				efficiency & & & & & \\
				\midrule
				\multicolumn{6}{c}{\textit{A. Clearing without multilateral compression}}\\
				\midrule
				1 CCP           &    0.805  &    0.655  &    0.834 &     0.71 &    0.793 \\
				{}          	&    (0.0) & (0.0) & (0.0) & (0.0) & (0.0) \\
				2 CCPs          &   0.66 &  0.543 &  0.672 &  0.579 &  0.716 \\
				{}          	&   (0.056) &  (0.036) &  (0.048) &  (0.044) &  (0.025) \\
				3 CCPs          &  0.576 &  0.477 &  0.569 &  0.502 &  0.663 \\
				{}          	&  (0.055) &  (0.039) &  (0.047) &  (0.047) &  (0.024) \\
				4 CCPs          &  0.497&  0.407&  0.482 &  0.437 &  0.616 \\
				{}           	&  (0.062) & (0.042) & (0.055) & (0.049) & (0.028) \\
				5 CCPs          &  0.435 &   0.364 &   0.42 &  0.374 &  0.588 \\
				{}           	&  (0.059) &   (0.04) &   (0.055) &  (0.051) &  (0.029) \\
				6 CCPs          &  0.385 &  0.311 &  0.354 &  0.317 &  0.554 \\
				{}           	&  (0.056) &  (0.046) &  (0.058) &  (0.043) &  (0.025) \\
				7 CCPs          &   0.34 &  0.271 &  0.302 &   0.265 &   0.526 \\
				{}           	&   (0.057) &  (0.048) &  (0.055) &   (0.05) &   (0.03) \\
				8 CCPs          &  0.287 &   0.226 &  0.247 &  0.229 &  0.495 \\
				{}           	&  (0.058) &   (0.05) &  (0.057) &  (0.049) &  (0.028) \\
				9 CCPs          &  0.239 &  0.191 &  0.214 &   0.176 &  0.472 \\
				{}           	&  (0.059) &  (0.047) &  (0.051) &   (0.05) &  (0.028) \\
				10 CCPs         &  0.193 &  0.157 &  0.166 &   0.145 &  0.447 \\
				{}          	&  (0.055) &  (0.048) &  (0.056) &   (0.05) &  (0.028) \\
				\midrule
				\multicolumn{6}{c}{\textit{B. Clearing with multilateral compression}}\\
				\midrule
				1 CCP    		&    0.805&    0.655&    0.834&     0.71&    0.793\\
				{}   	 		&   (0.0) & (0.0) & (0.0) & (0.0) & (0.0) \\
				2 CCPs 			&  0.774 &  0.632 &  0.807 &  0.682 &  0.774 \\
				{}    			&  (0.021) &  (0.016) &  (0.024) &  (0.021) &  (0.014) \\
				3 CCPs			&  0.785 &  0.633 &   0.81 &  0.694 &  0.779 \\
				{}    			&  (0.022) &  (0.019) &   (0.022) &  (0.016) &  (0.012) \\
				4 CCPs 			&  0.794 &   0.638 &  0.816 &  0.695 &  0.787 \\
				{}    			&  (0.016) &   (0.02) &  (0.018) &  (0.017) &  (0.009) \\
				5 CCPs 			&  0.797 &   0.64 &   0.82 &    0.7 &   0.785 \\
				{}    			&  (0.011) &   (0.017) &   (0.017) &    (0.012) &   (0.01) \\
				6 CCPs 			&  0.798 &  0.644 &  0.826 &  0.702 &  0.787 \\
				{}    			&  (0.012) &  (0.014) &  (0.012) &  (0.013) &  0(0.008) \\
				7 CCPs 			&  0.798 &  0.646 &  0.825 &  0.705 &  0.789 \\
				{}    			&  (0.011) &  (0.012) &  (0.011) &  (0.008) &  (0.006) \\
				8 CCPs 			&  0.802 &  0.648 &  0.827 &  0.705 &  0.788 \\
				{}    			&  (0.006) &  (0.011) &  (0.011) &  (0.008) &  (0.007) \\
				9 CCPs 			&  0.802 &  0.651 &   0.829 &  0.705 &  0.788 \\
				{}    			&  (0.007) &  (0.007) &   (0.01) &  (0.009) &  (0.007) \\
				10 CCPs 		&  0.802 &  0.648 &  0.829 &  0.706 &   0.786 \\
				{}   			&  (0.007) &  (0.011) &  (0.008) &  (0.008) &   (0.01) \\
				\bottomrule
			\end{tabular}
			\caption{Effect of bilateral compression with an increasing number of CCPs. Columns report the average efficiency ratio and standard deviation in parentheses for the 5 markets with the largest notional amounts outstanding on April 15, 2016. Panel A. only applies bilateral compression with CCPs. Panel B. includes multilateral compression across CCPs.\label{tab:ccp-compression-profileration}}
		\end{table}

		Table \ref{tab:ccp-compression-profileration} reports the results of the exercise for the five markets with the largest aggregate notional on the last day of our time window. The table shows the results for the two scenarios. First, we observe that an increase in the number of CCPs has vast adverse effects on the elimination of excess. The proliferation of CCPs reorganizes the web of obligation creating separated segments around each CCP. Global netting opportunities are dramatically lost at the bilateral level. Whereas the single CCP configuration was sufficiently efficient to compensate for the duplication of aggregate notional, this balance does not hold once the number of CCPs increase.

		Second, we find that the adverse effect of proliferation is almost entirely offset when obligations can be compressed multilaterally across CCPs. Netting opportunities are recovered once the compression exercise includes several CCPs. In particular, proliferation beyond two CCPs yields levels very close to the single CCP scenario.

		Note that we have assumed a uniform distribution of trades among CCPs which entails equivalent market shares. In general, increasing concentration to some CCPs should reduce the adverse effects. Nevertheless, the results on cross CCP compression would still hold qualitatively.

	\subsection{Discussion}	 

		Mandates and increased incentives to clear are at the heart of the regulatory response to the GFC. Central clearing and portfolio compression are both post-trade technologies which have reshaped the organization of OTC markets. However their interplay has been so far unclear. We provide here a simple intuition. CCPs provide natural netting opportunities. Yet, they also duplicate the aggregate notional in the market. While exposures towards CCPs are admittedly of a different nature than OTC exposures, concerns about risk concentration and resilience have been raised (see \cite{duffie2011does}). In turn, the proliferation of CCPs has brought several concerns from interoperability and cross-border issues to losses in netting efficiency. In this respect, we investigated the effect of central clearing on the aggregate notional amounts of OTC markets. 

		The results of our stylized exercise show that a proliferation of CCPs has adverse effects on netting opportunities. We have assumed that all CCPs practice bilaterally compress with their members, which is neither always the case in practice nor currently mandated. Furthermore, our findings show that multilateral compression across CCPs can almost entirely alleviate this concern. This result supports interoperability policies favoring the adoption of compression by CCPs and their mutual participation to multilateral cycles.

		% - cite law mandating clearing + cost of tx w/o clearing\\
		% - mention the competition among CCPs across assets and the presence of co-members\\
		% - CCPs provide natural netting opportunities even though they duplicate - size has arguably a different\\ fundamental risk in the presence of clearing. However, their proliferation effectively cancels many netting opportunities we evaluate by up to 75\% (XXX check real numbers XXX)\\
		% - supporting compression across CCPs allow to eliminate the opportunity cost of proliferation\\
		% - however it is unclear at the moment what kinds of incentives CCPs have to compress.\\
		% - in addition to netting efficiency, this results also suggest an increase in CCP interoperability.
%%%%%%%%%%%%%%%%%%%%%%%%%%%%%%%%%%%%%%%%%%%%%%%%%%%%%%%%%%%%%%%%%%%%%%%%%%%%%%%%%%%%%%
%%%%%%%%%%%%%%%%%%%%%%%%%%%%%%%%%%%%%%%%%%%%%%%%%%%%%%%%%%%%%%%%%%%%%%%%%%%%%%%%%%%%%%
\section{Concluding remarks\label{sec:closing-remarks}}
%%%%%%%%%%%%%%%%%%%%%%%%%%%%%%%%%%%%%%%%%%%%%%%%%%%%%%%%%%%%%%%%%%%%%%%%%%%%%%%%%%%%%%
%%%%%%%%%%%%%%%%%%%%%%%%%%%%%%%%%%%%%%%%%%%%%%%%%%%%%%%%%%%%%%%%%%%%%%%%%%%%%%%%%%%%%%

	The post-crisis regulatory reforms have generated demand for new post-trade services such as portfolio compression in financial markets~\citep{fsb2017review}. This particular multilateral netting technique, which allows market participants to eliminate direct and indirect offsetting positions, has reportedly been responsible for the large downsizing of major OTC derivatives markets~\citep{aldasoro2018credit}.

	In this paper, we introduce a framework that empirically supports the large effects attributed to a market wide adoption of portfolio compression. We show that OTC markets with fungible obligations and counterparty risk generate large notional excess: gross volumes can far exceed the level required to satisfy every participants' net position. Dealers acting as intermediaries between customers but also between other dealers are the main determinant for the levels of excess empirically observed in markets. 

	Using a granular dataset on bilateral obligations resulting from CDS contracts, we find that around $75\%$ of total market sizes is in excess, on average. Furthermore, we find that even when participants are conservative regarding their counterparty relationships, engaging in portfolio compression, on average, eliminates $85\%$ of the excess. Finally, we find that the loss of netting efficiency due to multiple CCPs can be offset when portfolio compression take place across CCPs.

	The large amounts of excess observed in markets can be a source of financial instability, in particular in times of crisis (\cite{cecchetti2009central,acharya2014counterparty}). Given the empirical structure of OTC markets, portfolio compression can eliminate most of the excess even under conservative constraints. Furthermore, the efficiency of multilateral compression can, on its own, explain the large reductions in size historically observed in some OTC markets since the GFC. Therefore, policy discussions regarding the activity of OTC markets would benefit from detailed information related to clients use of portfolio compression. For instance, a rapid reduction in gross volumes due to portfolio compression can increase liquidity by reducing inventory costs as suggested by \cite{duffie2018post}. In contrast, a reduction driven by participants exiting the market would result in a decrease of liquidity provision. While the latter represents a change in economic dimension, the former originates from a change in the accounting dimension. Each mechanism implies a different assessment of market liquidity. By this token, the results from this paper also highlight the importance of granular data to gain economic insights from markets dynamics, as put forward by several recent policy initiatives \citep{draghi2016speech,coeure2017speeach}.
	
	% However, the feasibility of compression cycles is largely determined by the underlying web of exposures. Transaction-level knowledge is thus paramount to adequately identify netting opportunities, and assess markets aggregate dynamics more in general. For instance, a reduction of gross positions due to a decrease in trading activity may cause a reduction of liquidity. In contrast, a reduction of gross positions due to compression frees more inventory opportunities for dealers, thus leading to an increase of liquidity \citep{duffie2018post}. 

	% Furthermore, the trade-off between the regulatory pressure to reduce excess positions and the reconfiguration of counterparty relationship can be a source of market instability. Participants prioritizing the reduction of capital requirements might accept sets of non-conservative compression solutions which can bear unintended consequences both locally (e.g., change in unobserved counterparty risk) and globally (e.g., market risk concentration). Importantly, actual compression services may not be able internalize these externalities because their data coverage is restricted to a subset of participants (i.e., their clients) and trades.

	The use of portfolio compression to mitigate the adverse netting effect of a proliferation in the number of CCPs shows that the combination of these post-trade services can prevent the current high levels of risk concentration that a one-single-CCP scenario would imply. More in general, compressing out excessive positions in times of distress could efficiently limit both real and expected propagations of shocks. 

	To the best of our knowledge, this work is the first to propose a comprehensive framework to analyze the mechanics of compression in terms of both feasibility and efficiency. The extent to which increasing demand for post-trade services in response to regulatory reforms affects market monitoring, market micro-structure and financial stability is unclear. Avenues for future work include empirical assessments of the effect of portfolio compression adoption on trading strategies, liquidity provision and market segmentation of OTC markets.
%%%%%%%%%%%%%%%%%%%%%%%%%%%%%%%%%%%%%%%%%%%%%%%%%%%%%%%%%%%%%%%%%%%%%%%%%%%%%%%%%%%%%%
%%%%%%%%%%%%%%%%%%%%%%%%%%%%%%%%%%%%%%%%%%%%%%%%%%%%%%%%%%%%%%%%%%%%%%%%%%%%%%%%%%%%%%

% References here (outcomment the appropriate case)
\bibliographystyle{apalike}
\bibliography{reference}

\newpage
% \appendix

%%%%%%%%%%%%%%%%%%%%%%%%%%%%%%%%%%%%%%%%%%%%%%%%%%%%%%%%%%%%%%%%%%%%%%%%%%%%%%%%%%%%%%
%%%%%%%%%%%%%%%%%%%%%%%%%%%%%%%%%%%%%%%%%%%%%%%%%%%%%%%%%%%%%%%%%%%%%%%%%%%%%%%%%%%%%%	
    \section{Proofs\label{app-proofs}}
%%%%%%%%%%%%%%%%%%%%%%%%%%%%%%%%%%%%%%%%%%%%%%%%%%%%%%%%%%%%%%%%%%%%%%%%%%%%%%%%%%%%%%
% % PROPOSITION 1
% \begin{proposition}\label{prop:minimum-notional}
% 	Given a market $G=(N,E)$, if a net-equivalent operator $\Omega$ on $G$ is such that:

%     \begin{equation}
%     	G'= \Omega(G) =  \min_{x'}(\Omega(G):(N,E)\rightarrow(N',E'))
%         \nonumber
%     \end{equation}
%     then 
%     \begin{equation}
%     \label{eq:minnotionalnetequivalent}
%     	x' = \frac{1}{2}\sum_{i=1}^n  |v_i^\text{net}| = \sum_{i: \;\; v_i^\text{net}> 0}^n  v^{net}_i
%     \end{equation}
% \end{proposition}
\subsection{Proposition \ref{prop:minimum-notional}}
	\begin{proof}
	The proof consists of two steps. 
    
    \begin{enumerate}
    
    \item First, we show that given a market $G=(N,E)$, we can always find a net-equivalent market $G'$ with total notional of $x'$ as in Equation \ref{eq:minnotionalnetequivalent}.
    
    %$= \frac{1}{2}\sum_{i=1}^n  |v_i^\text{net}| = \sum_{i: \;\; v_i^\text{net}> 0}^n  v^{net}_i$.
    
    Consider the  partition of $N$ into the following disjoint subsets: $N^+ = \{i | v^{net}_i > 0\}, N^- = \{i | v^{net}_i < 0\}$ and $N^0 = \{i | v^{net}_i = 0\}$ (such that $N = N^+ \bigcup N^- \bigcup N^0$). Let $B \in N \times N$ be a new set of edges (each with weight $b_{ij}$) such that:
    \begin{itemize}
    	\item $\forall b_{ij}$ s.t. $(i, j)\in B$, $i\in N^+, j\in N^-$;
    	\item $\sum_j b_{ij} = v^{net}_i, ~~~ \forall i \in N^+$;
        \item $\sum_i b_{ij} = v^{net}_j, ~~~ \forall j \in N^-$.
    \end{itemize}
    
	The total notional of the market $G'=(N,B)$ is thus given by:
    \begin{equation}
    	x' = \sum_i \sum_j b_{ij} = \sum_{i\in N^+} v^{net}_i = \sum_{i\in N^-} |v^{net}_i|. \nonumber
    \end{equation}
    As edges in $B$ only link two nodes within $N$ (i.e., the system is closed), the sum of all net position is equal to 0: $\sum_i v^{net}_i = 0$. Hence, we have: $\sum_{i \in N^+} v^{net}_i + \sum_{j \in N^-} v^{net}_j = 0$. We see that, in absolute terms, the sum of net positions of each set ($N^+$ and $N^-$) are equal: $|\sum_{i \in N^+} v^{net}_i| = |\sum_{j \in N^-} v^{net}_j|$. As all elements in each part have the same sign by construction, we obtain: $\sum_{i \in N^+} |v^{net}_i| = \sum_{j \in N^-} |v^{net}_j|$.  As a result, we have:  $\sum_{i\in N^+} v^{net}_i = \frac{1}{2}|\sum_{i \in N} v^{net}_i|$.\\
    
	\item Second, we show that $x'$ is the minimum total notional attainable from a net-equivalent operation over $G=(N,E)$. We proceed by contradiction. Consider $G'=(N,B)$ as defined above and assume there exists a $G^*=(N,B^*)$ defined as a net-equivalent market to $G'$ such that $x^*<x'$. At the margin, such result can only be obtained by a reduction of some weight in $B$: $\exists b^*_{ij} < b_{ij}$. If $x^*<x'$, then there exists at least one node for which this  reduction is not compensated and thus $\exists v^{*net}_i < v^{net}_i$. This violates the net-equivalent condition. Hence, $x'=\sum_{i: \;\; v_i^{net} > 0}^n  v_i^{net}$ is the minimum net equivalent notional.
    
    \end{enumerate}

		\end{proof}

\bigskip

% LEMMA 1
\subsection{Lemma \ref{lem:iff-excess}}
% \begin{lemma}\label{lem:iff-excess}
% 	Given a market $G=(N,E)$, if:
%     \begin{equation}
% 		\sum_{i\in N} \delta(i)>0 \Rightarrow \Delta(G)>0
%         \nonumber
%    \end{equation}

% \end{lemma}  
\begin{proof}
	By definition, $\delta(i) = 1 \Leftrightarrow \sum_{j} e_{ij} \cdot \sum_{j} e_{ji} >0$: a dealer has thus both outgoing and incoming edges. Then it holds that: 
    \begin{equation}
    	\delta(i) = 1 \;\;\Rightarrow  \;\;v^{gross}_i > |v^{net}_i| \;\; \Leftrightarrow \;\; 	\sum_{j} e_{ij} + \sum_{j} e_{ji}> \left|\sum_{j} e_{ij} - \sum_{j} e_{ji}\right|. 
        \nonumber
    \end{equation}
    
\noindent In contrast, for a customer  $\sum_{j} e_{ij} \cdot \sum_{j} e_{ji} = 0$ and thus $\delta(i) = 0$. Then it holds that:

\begin{equation}
    	\delta(i) = 0  \;\; \Rightarrow v^{gross}_i = |v^{net}_i| \;\; \Leftrightarrow  \;\; \sum_{j} e_{ij} + \sum_{j} e_{ji} = \left|\sum_{j} e_{ij} - \sum_{j} e_{ji}\right|. \nonumber
    \end{equation}

\noindent   The equality is simply proven by the fact that if $i$ is a customer selling (resp. buying) in the market, then $\sum_{j} e_{ji}=0$ (resp. $\sum_{j} e_{ij}=0$) and thus both ends of the above equation are equal.\\
    
\noindent
    If $G=(N,E)$ has $\sum_{i\in N} \delta(i)=0$, then all market participants are customers, and we thus have: $v^{gross}_i = |v^{net}_i| ~ \forall i \in N$. As a result, the excess is given by
    
 \begin{equation}
    	\Delta(G) = x - \frac{1}{2}\sum_{i} \left|v^{net}_i\right| = x - \frac{1}{2}\sum_{i} \left|v^{gross}_i\right|.    \nonumber
    \end{equation}

\noindent 
As in the proof of Proposition 1, the market we consider is closed (i.e., all edges relate to participants in $N$) and thus: $\sum_{i} |v^{gross}_i| = 2x$. We thus have no excess in such market: $\Delta(G) = 0$.\\

% Notice that the system is immediately "closed" if $E \in N \times N$.

\noindent If $G=(N,E)$ has $\sum_{i\in N} \delta(i)>0$, then some market participants have $v^{gross}_i > |v^{net}_i|$. As a result, the excess is given by:
    \begin{align}
    	\Delta(G) & = x - \frac{1}{2}\sum_{i} |v^{net}_i| = \frac{1}{2}\sum_{i} |v^{gross}_i| - \frac{1}{2}\sum_{i} |v^{net}_i| =\nonumber \\ 
        & = \sum_{i} |v^{gross}_i| - \sum_{i} |v^{net}_i| > 0 \nonumber
    \end{align}
\end{proof}

\bigskip
\subsection{Proposition \ref{prop:additivity-excess}}
% Proposition 2        
% \begin{proposition}[Additivity of excess]\label{prop:additivity-excess}
% 	Given a market $G=(N,E)$, and the two markets $G^1 = (N, E^1)$ and $G^2 = (N, E^2)$ obtained from the partition $\left\{E^1, E^2\right\}$ of $E$, then:
%     \begin{equation}
%     \Delta(G) \geq \Delta(G^1) + \Delta(G^2) \nonumber
%     \end{equation}
    
% \noindent which implies that if $G^D = (N, E^D)$ and $G^C = (N, E^C)$ where $D = \{(i, j), \text{ s.t.} \delta(i) \cdot \delta(j) = 1 \}$ (the intra-dealer network) and $D = \{(i, j), \text{ s.t.} \delta(i) \cdot \delta(j) = 0 \}$ (the customer network), then:
    
%     \begin{equation}
%     	\Delta(N,E) \geq \Delta(N,E^D) + \Delta(N,E^C) \nonumber
%     \end{equation}

% \noindent We have the following additivity cases:
%     \smallskip
    
%    \noindent
%    In particular, $\Delta(N,E) = \Delta(N,E^D) + \Delta(N,E^C)$ if
%     \begin{enumerate}
%     	\item $\sum^{dealer}_h(e_{dh} - e_{hd}) = 0, ~~~ \forall d\in D$, or
%         \item $\sum^{customer^+}_{c^+}e_{dc^+} - \sum^{customer^-}_{c^-}e_{c^-d} = 0, ~~~ \forall d \in D$
%     \end{enumerate}
% \end{proposition}

\begin{proof}
    For sake of clarity, in the following we only focus the notation on the set of edges for the computation of excess. In general, let us decompose the set of edges $E$ in two subsets $E^1$ and $E^2$ such that $E = E^1 \cup E^2$ and . 
    
    \begin{equation}
	e_{ij} =  e^1_{ij} + e^2_{ij}, \forall (i, j) \in E. \label{eq:splitmatrix}
    \end{equation}
    
    We compute the excess for the matrix $e_{ij}$:
    
    \begin{equation}
    	\sum_{ij} e_{ij} - 0.5 \sum_i \left|\sum_j (e_{ij}-e_{ji})\right|. \label{eq:splitexcess}
    \end{equation}
    
    Expanding and substituting \ref{eq:splitmatrix} into \ref{eq:splitexcess}, we obtain: 

   \begin{align}
    \sum_{ij} e_{ij} - 0.5 \sum_i \left|\sum_j (e_{ij}-e_{ji})\right| & = \sum_{ij} (e^1_{ij} + e^2_{ij}) + \nonumber\\ &- 0.5 \sum_i \left|\sum_j (e^1_{ij} - e^1_{ji} + e^2_{ji} - e^2_{ji})\right| =\nonumber \\
	& = \sum_{ij} e^1_{ij} +  \sum_{ij} e^2_{ij} + \nonumber \\ & - 0.5 \sum_i \left|\sum_j (e^1_{ij} - e^1_{ji}) + \sum_j( e^2_{ji} - e^2_{ji})\right| \label{eq:excessdecomp}
    \end{align}

By Jensen's inequality, we have that:
\begin{equation}
\left|\sum_j (e^1_{ij} - e^1_{ji}) + \sum_j( e^2_{ji} - e^2_{ji})\right| \leq  \left|\sum_j (e^1_{ij} - e^1_{ji})\right| + \left|\sum_j( e^2_{ji} - e^2_{ji})\right|\nonumber	
\end{equation}

therefore from \ref{eq:excessdecomp} it follows that:

\begin{align}
\sum_{ij} e_{ij} - 0.5 \sum_i \left|\sum_j (e_{ij}-e_{ji})\right| & \geq \sum_{ij} e_{ij}^1 - 0.5 \sum_i \left|\sum_j e_{ij}^1 - e_{ji}^1\right| +\nonumber \\
& + \sum_{ij} e^2_{ij} - 0.5 \sum_i \left|\sum_j e_{ij}^2 - e_{ji}^2\right| \nonumber
\end{align}

which proves the claim.

\noindent We now identify specific cases under our framework in which the relationship holds. Let us decompose the original additivity expression:
    \begin{align}
    	\Delta(E) &= \Delta(E^D) + \Delta(E^C) \nonumber\\
    	\sum_i | \sum_j(e_{ij} - e_{ji})| &= \sum_i | \sum_j(e^D_{ij} - e^D_{ji})| +  \sum_i | 	\sum_j(e^C_{ij} - e^C_{ji})| \nonumber
    \end{align} 

    We can decompose each part in the context of a dealer-customer network.\\

    1) For the whole network we have
    \begin{align}
    \sum_i | \sum_j(e_{ij} - e_{ji})| &= \sum^{dealer}_d | \sum_j(e_{dj} - e_{jd})| + \sum^{customer}_c | \sum_j(e_{cj} - e_{jc})| \nonumber \\
    &=\sum^{dealer}_d | \sum_j(e_{dj} - e_{jd})| + \sum^{customer^+}_{c^+} | \sum_j(e_{c^+j} - e_{jc^+})| + \nonumber \\
    & +  \sum^{customer^-}_{c^-} | \sum_j(e_{c^-j} - e_{jc^-})| =  \nonumber  \\
    &= \sum^{dealer}_d | \sum_j(e_{dj} - e_{jd})| + \sum^{customer^+}_{c^+} | \sum_j(e_{c^+j})| + \nonumber \\
    & + \sum^{customer^-}_{c^-} | \sum_j(- e_{jc^-})| = \nonumber \\
    &= \sum^{dealer}_d | \sum_j(e_{dj} - e_{jd})| + \sum^{customer^+}_{c^+} \sum^{dealer}_de_{c^+d} + \sum^{customer^-}_{c^-} \sum^{dealer}_de_{dc^-} \nonumber 
    \end{align}

    2) For the dealer network we have
    \begin{align}
    \sum_i | \sum_j(e^D_{ij} - e^D_{ji})| &= \sum^{dealer}_d | \sum^{dealer}_h(e^D_{dh} - e^D_{hd})| \nonumber 
    \end{align}

    3) For the customer network we have
    \begin{align}
    \sum_i | \sum_j(e^C_{ij} - e^C_{ji})| &= \sum^{dealer}_d | \sum_j(e^C_{dj} - e^C_{jd})| + \nonumber \\ 
    & + \sum^{customer^+}_{c^+} | \sum_j(e^C_{c^+j} - e^C_{jc^+})| \sum^{customer^-}_{c^-} | \sum_j(e^C_{c^-j} - e^C_{jc^-})| \nonumber\\
    &= \sum^{dealer}_d | \sum_j(e^C_{dj} - e^C_{jd})| + \sum^{customer^+}_{c^+} \sum^{dealer}_d e^C_{c^+d} + \sum^{customer^-}_{c^-} \sum^{dealer}_de^C_{dc^-}\nonumber
    \end{align}

    Combining equations, we obtain:
    \begin{align}
    \sum^{dealer}_d | \sum^{n}_j(e_{dj} - e_{jd})| = \sum^{dealer}_d | \sum^{dealer}_h(e^D_{dh} - e^D_{hd})| + \sum^{dealer}_d | \sum^{customer}_c(e^C_{dc} - e^C_{cd})|  \nonumber
    \end{align}

    We continue decomposing the different elements. \\

    1) For the whole network:

    \begin{align}
    \sum^{dealer}_d | \sum^{n}_j(e_{dj} - e_{jd})| &= \sum^{dealer}_d | \sum^{dealer}_h(e_{dh} - e_{hd}) + \sum^{customer^+}_{c^+}(e_{dc^+} - e_{c^+d}) + \sum^{customer^-}_{c^-}(e_{dc^-} - e_{c^-d})| \nonumber\\
    &= \sum^{dealer}_d | \sum^{dealer}_h(e_{dh} - e_{hd}) + \sum^{customer^+}_{c^+}e_{dc^+} - \sum^{customer^-}_{c^-}e_{c^-d}|\nonumber
    \end{align}

    2) for the dealer and customer networks:

    \begin{align}
    \sum^{dealer}_d | \sum^{dealer}_h(e^D_{dh} - e^D_{hd})| + \sum^{dealer}_d | \sum^{customer}_c(e^C_{dc} - e^C_{cd})| &=\nonumber\\ 
    \sum^{dealer}_d | \sum^{dealer}_h(e^D_{dh} - e^D_{hd})| + &\sum^{dealer}_d |\sum^{customer^+}_{c^+}e^C_{dc^+} - \sum^{customer^-}_{c^-}e^C_{c^-d}| \nonumber
    \end{align}

    After this decomposition, we can remove the subscripts related to the different networks, and we obtain the general condition for additive excess:

    \begin{align}
    \sum^{dealer}_d | \sum^{dealer}_h(e_{dh} - e_{hd}) + \sum^{customer^+}_{c^+}e_{dc^+} - \sum^{customer^-}_{c^-}e_{c^-d}| &= \nonumber\\ 
    \sum^{dealer}_d | \sum^{dealer}_h(e_{dh} - e_{hd})| + &\sum^{dealer}_d |\sum^{customer^+}_{c^+}e_{dc^+} - \sum^{customer^-}_{c^-}e_{c^-d}| \nonumber
    \end{align}

Hence, the above relationship holds when
 \begin{enumerate}
    	\item $\sum^{dealer}_h(e_{dh} - e_{hd}) = 0, ~~~ \forall d\in D$ \\
		\item[]or
        \item $\sum^{customer^+}_{c^+}e_{dc^+} - \sum^{customer^-}_{c^-}e_{c^-d} = 0, ~~~ \forall d \in D$
    \end{enumerate}
\end{proof}

\bigskip
% Proposition 3
\subsection{Proposition \ref{prop:non-conservative-feasibility}}
% \begin{proposition}\label{prop:non-conservative-feasibility}
% 	Given a market $G(N,E)$ and compression $c^n()$ satisfying a non-conservative compression tolerance set $\Gamma$:
%     \begin{equation}
%     	\Delta^{c{^n}}_{\text{red}}(G)>0 \; \; \Leftrightarrow \; \;  \sum_{i\in N} \delta(i)>0 \nonumber
%     \end{equation}
% \end{proposition}			
\begin{proof}
	Non-conservative compression tolerances allow all possible re-arrangements of edges. Hence, the only condition for non-conservative compression to remove excess (i.e., $\Delta^{c{^n}}_{\text{red}}(G)>0$) is merely that excess is non-zero (i.e., $\Delta(G) > 0$). From Lemma 1, we know that positive excess exists in $G=(N,E)$ only when there is intermediation (i.e., $\exists i \in N | \delta(i) =1$).
\end{proof}

% Proposition 4
\subsection{Proposition \ref{prop:non-conservative-efficiency}}
% \begin{proposition}
% 	Given a market $G = (N, E)$, there exists a set of non-conservative compression operators $C$ such that
% 	\begin{equation}
% 		C = \{c^n|\Delta^{c^{n}}_{res}(G) = 0\}\neq \emptyset \nonumber
% 	\end{equation}
% 	Moreover, let $G'=c^n(G)| c^n \in C$, then $G'$ is bi-partite.
% \end{proposition}
\begin{proof}
	We proceed by defining a procedure that respects the non-conservative compression constraints and show that this procedure (algorithm) generates a new configuration of edges such that the resulting excess is 0.\\
    
%%%%%%%%%%%%%%%%%%%%%%%%%%%%%%%%%%%%%%%%%%%%%%%%%%%%%%%%%%%%
% BOOKMARK
% What is this procedure?
%%%%%%%%%%%%%%%%%%%%%%%%%%%%%%%%%%%%%%%%%%%%%%%%%%%%%%%%%%%%

\noindent     Similar to the proof of Proposition 1, consider the three disjoint subsets $N^+ = \{i | v^{net}_i > 0\}, N^- = \{i | v^{net}_i < 0\}$ and $N^0 = \{i | v^{net}_i = 0\}$, such that $N = N^+ \bigcup N^- \bigcup N^0$. Let $B$ be a new set of edges such that:
    \begin{itemize}
    	\item $\forall b_{ij}\in B, ~~~ i\in N^+, j\in N^-$
    	\item $\sum_j b_{ij} = v^{net}_i, ~~~ \forall i \in N^+$
        \item $\sum_i b_{ij} = v^{net}_j, ~~~ \forall j \in N^-$
    \end{itemize}
   	
	The market $G'=(N,B)$ is net-equivalent to $G$ while the total gross notional is minimal in virtue of Proposition 1. The nature of the new edges makes $G'$ bipartite (i.e., $\forall b_{ij}\in B, ~~~ i\in N^+, j\in N^-$), hence, there is no intermediation in $G'$. The procedure depicted above to obtain $B$ is a meta-algorithm as it does not define all the steps in order to generate $B$. As a result, several non-conservative compression operation $c^n$ can satisfy this procedure. Nevertheless, by virtue of Proposition 3, each of these non-conservative compression operation lead to $\Delta^{c^n}_{res}(G) = \Delta(G') = 0$
\end{proof}

% Proposition 6
\subsection{Proposition \ref{prop:conservative-feasibility}}
% \begin{proposition}\label{prop:conservative-feasibility}
% 	Given a market $G(N,E)$ and a compression operator $c^c$ satisfying a conservative compression tolerance set $\Gamma$:
%     \begin{equation}
%     	\Delta^{c^c}_{red}(G)>0 \;\; \Leftrightarrow \;\; \exists E^* \subset E ~~~ s.t. \prod_{e^* \in E^*} e^* > 0 \nonumber
%     \end{equation}
    
% \end{proposition}
\begin{proof}
	In a conservative compression, we have the constraint:
    \[
    	0 \leq e'_{ij} \leq e_{ij} ~~~ \forall i,j \in N
    \]
    
   \noindent At the individual level, assume $i$ is a customer selling in the market (i.e., $\delta(i) = 0$). Under a conservative approach, it is not possible to compress any of the edges of $i$. In fact, in order to keep the net position of $i$ constant, any reduction of $\varepsilon$ in an edge of $i$ (i.e., $e'_{ij} = e_{ij} - \varepsilon$) requires a change in some other edge (i.e., $e'_{ik} = e_{ik} + \varepsilon$) in order to keep $v'^{net}_i =v^{net}_i $. Such procedure violates the conservative compression tolerance: $e'_{ik} = e_{ik} + \varepsilon > e_{ik}$. The same situation occurs for customers buying. Conservative compression can thus not be applied to node $i$ if $\delta(i) = 0$.\\
 
 \noindent
    The only configuration in which a reduction of an edge $e_{ij}$ does not require a violation of the conservative approach and the net-equivalence condition is when $i$ can reduce several edges in order to keep its net balance. In fact, for a node $i$, the net position is constant after a change $\sum_j e'_{ij} =  \sum_j e_{ij} - \varepsilon$ if it is compensated by a change $\sum_j e'_{ji} =  \sum_j e_{ji} - \varepsilon$. Only dealers can apply such procedure. Furthermore, such procedure can only be applied to links with other dealers: a reduction on one link triggers a cascade of balance adjusting that can only occur if other dealers are concerned as customers are not able to re-balance their net position as shown above. Hence, the redundant excess for a conservative approach emerges from intra-dealer links. \\
 \noindent   
    Finally, the sequence of rebalancing and link reduction can only stop once it reaches the initiating node back. Hence, conservative compression can only be applied to directed closed chains of intermediation, that is, a set of links $E^* \subset E$ such that all links have positive values $\prod_{e^* \in E^*} e^* > 0$.
\end{proof}

\bigskip
% Proposition 7
\subsection{Proposition \ref{prop:conservative-dealer-customer-efficiency}}
% \begin{proposition}\label{prop:conservative-dealer-customer-efficiency}
% 	Given a market $G(N,E)$ and a compression operator $c()$ satisfying a conservative compression tolerance set $\Gamma$,
% 	\begin{equation}
% 		\exists ~i ~~s.t. 
% 		\begin{cases}
% 			\sum_j e^{C}_{ij} >0, ~~ e^{C}_{ij} \in E^{C}  \\
% 			\sum_j e^{C}_{ji} >0, ~~ e^{C}_{ji} \in E^{C} 
% 		\end{cases}
% 		\;\; \Rightarrow \;\; \Delta^{c{^c}}_{\text{res}}(G) > 0\nonumber
% 	\end{equation}
% 	Where $E^{C}$ is the set of dealer-customer contracts as defined under Definition \ref{def:dealer-customer-markets},
\begin{proof}
	From Proposition \ref{prop:conservative-feasibility}, we know that the conservative compression approach can only reduce excess in closed chains of intermediation. 
	Given a market $G=(N,E)$, let $i \in N$ satisfy the following condition:
	\[
		\begin{cases}
			\sum_j e^{C}_{ij} >0, ~~ e^{C}_{ij} \in E^{C}  \\
			\sum_j e^{C}_{ji} >0, ~~ e^{C}_{ji} \in E^{C} 
		\end{cases}	
	\]
	The participant $i$ is thus a dealer in the market. More precisely, irrespective of her activity with other dealers (i.e., intra-dealer market $E^D$), $i$ interacts with customers both on the buy and on the sell side. By definitions, those sets of counterparties generate no excess as they are only active on one side. 

	As a result, $i$ belongs to open chains of intermediation where customers selling are on the sender's end of the chain while customers buying are on the receiver's end of the chain. By virtue of the conservative setting, it is not possible to compress those open chain as both extreme-ends of the chains are not intermediaries. In turn, the excess generated by those chains cannot be compressed:  $\Delta^{c{^c}}_{\text{res}}(G) > 0$.

	Assume, instead, that all dealers only interact with one type of customer:
	\[
		\begin{cases}
			\sum_j e^{C}_{ij} \cdot \sum_j e^{C}_{ji}  = 0, ~~ e^{C}_{ij}, e^{C}_{ji} \in E^{C}  \\
			\sum_j e^{D}_{ik} \cdot \sum_j e^{D}_{ki} \geq 0, ~~ e^{D}_{ik}, e^{D}_{ki} \in E^{D}  
		\end{cases}	
		~~~ \forall i \in N^D
	\]
	In such case, there always exists a configuration of the intra-dealer market such that all the excess can be removed via conservative compression. In fact, if the intra-dealer market is composed of equally weighted closed chains of intermediation, they can all be conservatively compressed out of the market. As a result, only dealer-customer trades with remain. Given the original configuration, no dealer would be intermediating anymore and no excess would be left in the market after consevative compression.

	We thus see that in order to ensure positive residual excess after conservative compression, we need open chains of intermediation in the original market which are ensured by the existence of direct intermediation between customers.
\end{proof}

\bigskip
% Lemma 2
\subsection{Lemma \ref{lemm:conservative-closed-chain}}
% \begin{lemma}Given a directed closed chain $K=(N,E)$, consider the set of compression operations $C$ satisfying a conservative compression tolerance set $\Gamma$ such that
% 	\[
% 		C = \min_{x'}(c(N,E):(N,E)\rightarrow(N',E'))
% 	\]
%     then
%     \begin{equation}
% 	    e'_{ij} = e_{ij} - min_{e}\{E\}. ~~~ \forall e' \in E' \nonumber
%     \end{equation}
%     and
%     \begin{equation}
%     	\Delta^c_{res}(K) = \Delta(K) - \Phi(E), \nonumber
%     \end{equation}
%     where $ \Phi(E) = |E|min_{e \in E} \{E\}$ 
% \end{lemma}
\begin{proof}
	A conservative compression on a closed chain of intermediation $K=(N,E)\rightarrow(K,E')$ implies that, in order for the compression to be net equivalent (i.e., $v^{'net}_i = v^{net}_i ~~ \forall i \in N$), a reduction by and arbitrary $\varepsilon \in [0, \max_{ij}\{e_{ij} \text{ s.t. } (i, j) \in E\}] $ on an edge $e'_{ik} = e_{ik} - \varepsilon$ must be applied on all other edges in the chain: $e' = e - \varepsilon ~~ \forall e' \in E'$.\\
    
  \noindent
    Overall, reducing by $\varepsilon$ one edge, leads to an aggregate reduction of $|E|\varepsilon$ after re-balancing of net positions.\\

\noindent
    Recall that, in a conservative compression, we have $0 \leq e'_{ij} \leq e_{ij}$. Hence, for each edge, the maximum value that $\varepsilon$ can take is $e_{ij}$. At the chain level, this constraint is satisfied i.f.f. $\varepsilon = min_e\{E\}$. The redundant excess is given by $|E|min_e\{E\}$ and the residual excess is thus
    \begin{equation}
    	\Delta^c_{res}(K) = \Delta(K) - |E|min_e\{E\} \nonumber
    \end{equation} \end{proof}

% \bigskip
% Proposition 7
% \subsection{Proposition \ref{prop:conservative-dealer-customer-efficiency}}
% \bigskip
% \subsection{Proposition \ref{prop:conservative-tree}}

\bigskip
% Proposition 8
\subsection{Proposition \ref{prop:hybrid-efficiency}}
% \begin{proposition}
% 	Given a market $G=(N,E)$, if 
% 	  \begin{equation}
% 	  		\Delta(N,E) = \Delta(N,E^D) + \Delta(N,E^C) \nonumber
% 	  \end{equation}
% 	  \noindent
% 	  then, a compression operator $c^h()$ satisfying a hybrid compression tolerance set $\Gamma$ leads to
% 	  \begin{equation}
% 	  		\Delta^{c^h}_{res}(N,E) = \Delta(N,E^C)
% 	  \end{equation}
% \end{proposition}

\begin{proof}
  If $\Delta(N,E) = \Delta(N,E^D) + \Delta(N,E^C)$, then we can separate the compression of each market.\\
  
  \textbf{Intra-dealer market ($N,E^D$).} According to the hybrid compression, the set of constraints in the intra-dealer market is given by a non-conservative compression tolerances set. According to Proposition 4, the residual excess is zero. We thus have:
  \begin{equation}
  	\Delta^{c^h}_{res}(N,E^D) = 0. \nonumber
  \end{equation}
  
  \textbf{Intra-dealer market ($N,E^D$).} According to the hybrid compression, the set of constraints in the customer market is given by a conservative compression tolerances set. Since, by construction, the customer market does not have closed chains of intermediation, it is not possible to reduce the excess on the customer market via conservative compression. We thus have:
    \begin{equation}
  	\Delta^{c^h}_{res}(N,E^C) = \Delta(N,E^C). \nonumber
  \end{equation}
  \noindent
  Finally, we obtain
  \begin{align*}
  				\Delta^{c^h}_{\text{res}}(N,E) &= \Delta^{c^h}_{\text{res}}(N,E^D) + \Delta^{c^h}_{\text{res}}(N,E^C)\\
                &= \Delta(N,E^C)
  \end{align*} \end{proof}

\subsection{Proposition \ref{prop:bilateral-feasibility}}
% \begin{proposition}
  % 		\begin{proposition}\label{prop:bilateral-feasibility}
		% 	Given a market $G(N,E)$ and a compression operator $c^b$ satisfying a bilateral compression tolerance set $\Gamma$:
		%     \begin{equation}
		%     	\Delta^{c^b}_{red}(G)>0 \;\; \Leftrightarrow \;\; \exists i,j \in N ~~~ s.t.  ~~~ e_{ij}.e_{ji} > 0 ~~ \mbox{where} ~~ e_{ij},e_{ji}\in E \nonumber
		%     \end{equation}
		% \end{proposition}

	\begin{proof}
		If the market $G=(N,E)$ is such that $\nexists i,j \in N ~~~ s.t.  ~~~ e_{ij}.e_{ji} > 0 ~~ \mbox{where} ~~ e_{ij},e_{ji}\in E$ then the compression tolerances will always be:
		\[
			a_{ij} = b_{ij} = \max{\{e_{ij}-e_{ji},0\}} = e_{ij}
		\]
		Hence, $\Delta^{c^b}_{red}(G) = \Delta_{red}(G)$ and thus $\Delta^{c^b}_{res}(G)=0$.
		If the market $G=(N,E)$ is such that $\exists i,j \in N ~~~ s.t.  ~~~ e_{ij}.e_{ji} > 0 ~~ \mbox{where} ~~ e_{ij},e_{ji}\in E$ then the bilateral compression will yield a market $G'=(N,E')$ where $x^{'} < x$. Hence,  $\Delta^{c^b}_{red}(G) < \Delta_{red}(G)$ and thus $\Delta^{c^b}_{res}(G)>0$
	\end{proof}

\subsection{Proposition \ref{prop:bilateral-efficiency}}
		% \begin{proposition}\label{prop:bilateral-efficiency}
			% Given a market $G=(N,E)$ and a compression operator $c^b()$ satisfying a bilateral compression tolerance set $\Gamma$ leads to
			% \begin{equation}
					% \Delta^{c^b}_{res}(G) = \Delta(G) - \sum_{i,j \in N} min \{e_{ij}, e_{ji}\} ~~ \mbox{where} ~~ e_{ij},e_{ji}\in E \nonumber
			% \end{equation}
		% \end{proposition}
		\begin{proof}
			If the market $G=(N,E)$ is such that $\exists i,j \in N ~~~ s.t.  ~~~ e_{ij}.e_{ji} > 0 ~~ \mbox{where} ~~ e_{ij},e_{ji}\in E$, then, bilaterally compressing the pair $i$ and $j$ yields the following situation.
			Before compression, the gross amount on the bilateral pair was $e_{ij} + e_{ji}$. After compression, the gross amount on the same bilateral pair is $|e_{ij} - e{ji}|$. Hence, we have a reduction of gross notional of $2.min\{e_{ij},e_{ji}\}$. The market gross notional after compression of this bilateral pair is thus given by: $x^{'} = x - 2.min\{e_{ij},e_{ji}\}$ and the excess in the new market (i.e., residual excess after having bilaterally compressed the pair $(i,j)$) follows the same change: $\Delta_{res}(G) = \Delta(G) -  2.min\{e_{ij},e_{ji}\}$.
			We generalize the result by looping over all pairs and noting that the reduction $min\{e_{ij},e_{ji}\}$ is doubled counted: pairing by $(i,j)$ and $(j,i)$. Hence, we reach the following expression of the residual excess: 
			\[
				\Delta^{c^b}_{res}(G) = \Delta(G) - \sum_{i,j \in N} \min \{e_{ij}, e_{ji}\} ~~ \mbox{where} ~~ e_{ij},e_{ji}\in E \nonumber
			\]
		\end{proof}

\subsection{Proposition \ref{prop:efficiency-dominance}}

		% \begin{proposition}\label{prop:efficiency-dominance}
		% 	Given a market $G=(N,E)$ and the set of compression operators $\{c^c(),c^n(),c^h(),c^b()\}$ such that
		% 	\begin{itemize}
		% 		\item $c^c()$ satisfies a conservative compression tolerance set $\Gamma^c$ such that $\Delta^{c^c}_{red}(G)$ is maximized,
		% 		\item $c^n()$ satisfies a non-conservative compression tolerance set $\Gamma^n$ such that $\Delta^{c^n}_{red}(G)$ is maximized,
		% 		\item $c^h()$ satisfies a hybrid compression tolerance set $\Gamma^h$ such that $\Delta^{c^h}_{red}(G)$ is maximized,
		% 		\item $c^b()$ satisfies a bilateral compression tolerance set $\Gamma^b$ such that $\Delta^{c^b}_{red}(G)$ is maximized,
		% 	\end{itemize}
		% 	the following weak dominance holds:
		% 	\begin{equation}
		% 			\Delta^{c^b}_{red}(G) \leq \Delta^{c^c}_{red}(G) \leq\Delta^{c^h}_{red}(G) \leq\Delta^{c^n}_{red}(G) = \Delta(G)\nonumber
		% 	\end{equation}
		% \end{proposition}
		\begin{proof}
			We proceed by analyzing sequential pairs of compression operators and show the pairing dominance before generalizing. We start with the bilateral compressor $c^b()$ and the conservative compressor $c^c()$. Let $(a^b_{ij}, b^b_{ij})\in\Gamma^b$ and $(a^c_{ij}, b^c_{ij})\in\Gamma^c$ be the set of compression tolerance for the bilateral and conservative compressor, respectively. We have the following relationship:
			\[
				a^c_{ij} \leq a^b_{ij} = b^b_{ij} \leq b^c_{ij}  ~~~ \forall e_{ij} \in E
			\]
			In fact, by definition of each compression tolerance set, we have:
			\[
				0 \leq \max\{e_{ij} - e_{ji}, 0\} \leq e_{ij}  ~~~ \forall e_{ij} \in E
			\]			
			Hence, we see that the set of possible values couple for bilateral compression is bounded below and above by the set of conservative compression values. By virtue of linear composition, a solution of the bilateral compression thus satisfies the conservative compression tolerance set. The other way is not true as the lower bound in the bilateral case $a^b_{ij}$ can be equal to $e_{ij} - e_{ji}$ while in the conservative case, we always have that $a^c_{ij} = 0$. Hence, in terms of efficiency, we have that a globally optimal conservative solution is always at least equal, in redundant excess, to the globally optimal bilateral solution: $\Delta^{c^b}_{red}(G) \leq \Delta^{c^c}_{red}(G)$. The case in which the efficiency of $\Delta^{c^c}_{red}(G)$ is higher is a function of the network structure of $G$. In fact, if the market $G$ only exhibits cycles of length one, we have $\Delta^{c^b}_{red}(G) = \Delta^{c^c}_{red}(G)$. Once $G$ exhibits higher length cycles, we have a strict dominance $\Delta^{c^b}_{red}(G) < \Delta^{c^c}_{red}(G)$.
			Similar reasoning is thus applied to the next pairing: conservative and hybrid compression tolerance sets. Let $(a^h_{ij}, b^h_{ij})\in\Gamma^h$ be the set of compression tolerance for the hybrid compressor. We have the following nested assembly:
			\[
				a^c_{ij} = a^h_{ij} ~~~ \mbox{and} ~~~ b^c_{ij} = b^h_{ij} 	~~~ \forall e_{ij} \in E^C
			\]
			\[
				a^c_{ij} = a^h_{ij} ~~~ \mbox{and} ~~~ b^c_{ij} \leq b^h_{ij} 	~~~ \forall e_{ij} \in E^D
			\]
			Where $E^C$ and $E^D$ are the customer market and the intra-dealer market, respectively, with $E^C+E^D = E$. 
			In fact, by definition of the compression tolerance sets in the customer market $E^C$ are the same while for the intra-dealer market we have:
			\[
				a^c_{ij} = a^h_{ij} = 0 ~~~ \mbox{and} ~~~ e_{ij} \leq + \infty 	~~~ \forall e_{ij} \in E^D
			\]
			Similar to the dominance between bilateral and conservative compression, we can thus conclude that: $\Delta^{c^c}_{red}(G) \leq \Delta^{c^h}_{red}(G)$. It is the relaxation of tolerances in the intra-dealer market that allows the hybrid compression to be more efficient than the conservative compression. By virtue of complementarity of this result, the hybrid and non-conservative pairing is straightforward: $\Delta^{c^h}_{red}(G) \leq \Delta^{c^n}_{red}(G)$. As we know from Proposition~\ref{prop:non-conservative-efficiency}, $\Delta^{c^n}_{red}(G) = \Delta(G)$, we thus obtain the general formulation of weak dominance between the 4 compression operators:
			\[
				\Delta^{c^b}_{red}(G) \leq \Delta^{c^c}_{red}(G) \leq\Delta^{c^h}_{red}(G) \leq\Delta^{c^n}_{red}(G) = \Delta(G)
			\]

		\end{proof}

%%%%%%%%%%%%%%%%%%%%%%%%%%%%%%%%%%%%%%%%%%%%%%%%%%%%%%%%%%%%%%%%%%%%%%%%%%%%%%%%%%%%%%
\newpage
\section{A simple example with 3 market participants\label{app-example-3-market-participants}}
%%%%%%%%%%%%%%%%%%%%%%%%%%%%%%%%%%%%%%%%%%%%%%%%%%%%%%%%%%%%%%%%%%%%%%%%%%%%%%%%%%%%%%

			\begin{figure}
				\centering
				% \begin{subfigure}[t]{0.3\textwidth}
					\begin{tikzpicture}%[node distance = 2.8cm]
						\begin{scope}[every node/.style={circle,thick,draw}]
							\node (i) at (0,0) {i};
							\node (j) at (0,3) {j};
							\node (k) at (3,1.5) {k};
						\end{scope}
						\begin{scope}[every node/.style={fill=white,circle},
						    every edge/.style={draw=red,very thick}]
						    \path [->] (i) edge node { 5} (j);
						    \path [->] (j) edge node {$10$} (k);
						    \path [->] (k) edge node {$20$} (i);
						\end{scope}
					\end{tikzpicture}
				\caption{Original configuration the market}\label{fig-example-initial}
			\end{figure}

			To better articulate the different ways in which portfolio compression can take place according to the conservative and non-conservative approach, let us take the following example of a market made of $3$ participants. Figure~\ref{fig-example-initial} graphically reports the financial network: $i$ has an outstanding obligation towards $j$ of notional value $5$ while having one from $k$ of notional value $20$ and $j$ has an outstanding obligation  towards $k$ of notional value $10$. For each participant, we compute the gross and net positions:

			\begin{align*}
				& v^{gross}_i = 25 ~~~~~ v^{net}_i = -15\\
				& v^{gross}_j = 15 ~~~~~ v^{net}_j = +5\\
				& v^{gross}_k = 30 ~~~~~ v^{net}_k = +10
			\end{align*}

			We also obtain the current excess in the market:

			\begin{equation*}
				\Delta(G) = 35 - 15 = 20
			\end{equation*}

			Let us first adopt a conservative approach. In this case, we can only reduce or remove currently existing bilateral positions. A solution is to remove the obligation between $i$ and $j$ and adjust the two other obligation accordingly (i.e., subtract the value of $ij$ from the two other obligations). The resulting market is represented in Figure~\ref{fig-example-compressed}(a). Computing the same measurements as before, we obtain:
			\begin{align*}
				& v'^{gross}_i = 15 ~~~~~ v'^{net}_i = -15\\
				& v'^{gross}_j = 5 ~~~~~~ v'^{net}_j = +5\\
				& v'^{gross}_k = 20 ~~~~~ v'^{net}_k = +10
			\end{align*}

			We also obtain the new excess in the market:

			\begin{equation*}
				\Delta^{cons}_{res}(G) = 20 - 15 = 5
			\end{equation*}

			We see that, after applying the conservative compression operator that removed the $(i, j)$ obligation, we have reduced the excess by $15$. It is not possible to reduce the total excess further without violating the conservative compression tolerances. We thus conclude that, for the conservative approach, the residual excess is $5$ and the redundant excess is $15$.

			Let us now go back to the initial situation of Figure~\ref{fig-example-initial} and adopt a non-conservative approach. We can now create, if needed, new obligations. A non-conservative solution is to remove trades and create 2: one going from $j$ to $i$ of value $5$ and one going from $k$ to $i$ of value $10$. We have created an obligation that did not exist before between $j$ and $i$. The resulting market is depicted in Figure~\ref{fig-example-compressed}(b). Computing the same measurements as before, we obtain:

			\begin{align*}
				& v'^g_i = 15 ~~~~~ v'^n_i = -15\\
				& v'^g_j = 5 ~~~~~~ v'^n_j = +5\\
				& v'^g_k = 10 ~~~~~ v'^n_k = +10
			\end{align*}

			We also obtain the current excess in the market:

			\begin{equation*}
				\Delta^{non-cons}_{res}(G) = 15 - 15 = 0
			\end{equation*}

			We observe that we have eliminated all excess in the resulting market while all the net positions have remained constant. Individual gross positions are now equal to the net positions. Nevertheless the solution has generated a new position (i.e., from $j$ to $i$). We thus conclude that, for the non-conservative approach, the residual excess is $0$ and the redundant excess is $20$.

			The results are summarized in Table \ref{tab:example-market}. Though simple, the above exercise hints at several intuitive mechanisms and results that are developed further in the paper. 

			\begin{figure}
				\centering
				\begin{subfigure}{0.4\textwidth}
					\begin{tikzpicture}
						\begin{scope}[every node/.style={circle,thick,draw}]
							\node (i) at (0,0) {$i$};
							\node (j) at (0,3) {$j$};
							\node (k) at (3,1.5) {$k$};
						\end{scope}
						\begin{scope}[every node/.style={fill=white,circle},
						    every edge/.style={draw=red,very thick}]
						    % \path [->] (A) edge node {$5$} (B);
						    \path [->] (j) edge node {$5$} (k);
						    \path [->] (k) edge node {$15$} (i);
					    %\path [->] (B) edge[bend right=60] node {$1$} $%(E); 
						\end{scope}
					\end{tikzpicture}
				\caption{After conservative compression}
				\end{subfigure}
				\smallskip
				\smallskip
				\begin{subfigure}{0.4\textwidth}
					\begin{tikzpicture}
						\begin{scope}[every node/.style={circle,thick,draw}]
							\node (i) at (1.5,0) {i};
							\node (k) at (3,3) {k};
							\node (j) at (0,3) {j};
							% \node (C) at (3,1.5) {C};
						\end{scope}
						\begin{scope}[every node/.style={fill=white,circle},
						    every edge/.style={draw=red,very thick}]
						    % \path [->] (A) edge node {$5$} (B);
						    \path [->] (j) edge node {$5$} (i);
						    \path [->] (k) edge node {$10$} (i);
					    %\path [->] (B) edge[bend right=60] node {$1$} (E); 
						\end{scope}
					\end{tikzpicture}
				\caption{After non-conservative compression}
				\end{subfigure}
				\caption{Examples of conservative and non-conservative compression approaches.}\label{fig-example-compressed}
			\end{figure}      
            
            \begin{table}
              \centering
              \begin{tabular}{lcc}
               & \textbf{Conservative} & \textbf{Non-conservative}     \\
               \toprule
              Total excess & 20 & 20    \\
              Redundant excess & 15 & 20    \\
              Residual excess & 5 & 0    
              \end{tabular}
              \caption{Table summarizing the results applying conservative and non-conservative compression on the market with 3 participants in Figure \ref{fig-example-compressed}.}\label{tab:example-market}
             \end{table}

%%%%%%%%%%%%%%%%%%%%%%%%%%%%%%%%%%%%%%%%%%%%%%%%%%%%%%%%%%%%%%%%%%%%%%%%%%%%%%%%%%%%%%
\newpage
\section{Further analysis on Conservative Compression\label{app-conservative-compression}}
%%%%%%%%%%%%%%%%%%%%%%%%%%%%%%%%%%%%%%%%%%%%%%%%%%%%%%%%%%%%%%%%%%%%%%%%%%%%%%%%%%%%%%
				In order to reach a directed acyclic graph any algorithm would need to identify and break all closed chains of intermediation. Nevertheless, the sequences of chains to be compressed can affect the results. In fact, if two chains share edges, compressing one chain modifies the value of  obligations also present in the other one. There can be different values of residual excess depending on which closed chain is compressed first.

				Formally, we identify such case as a case of \textit{entangled chains of intermediation}.

				\begin{definition}[Entangled Chains] Two chains of intermediation, $K_1=(N_1,E_1)$ and $K_2=(N_2,E_2)$, are entangled if they share at least one obligation:
				\[
					E_1 \cap E_2 \neq \emptyset
				\]
				\end{definition}

				\begin{figure}
					\centering
					% \begin{subfigure}[t]{0.3\textwidth}
						\begin{tikzpicture}
							\begin{scope}[every node/.style={circle,thick,draw}]
								\node (A) at (0,0) {A};
								\node (B) at (0,3) {B};
								\node (C) at (1.5,1.5) {C};
								\node (D) at (3.5,2.5) {D};
							\end{scope}
							\begin{scope}[every node/.style={fill=white,circle},
							    every edge/.style={draw=blue,very thick}]
							    \path [->] (A) edge node {$5$} (B);
							    \path [->] (B) edge node {$10$} (C);
							    \path [->] (C) edge node {$20$} (A);
							    \path [->] (C) edge node {$10$} (D);
							    \path [->] (D) edge node {$3$} (B);
							\end{scope}
						\end{tikzpicture}
						\caption{Example of market with entangled chains}\label{fig:entangled-chains}
				\end{figure}

				An illustration of entangled chains is provided in Figure~\ref{fig:entangled-chains} where the edge $BC$ is share by two chains of intermediation (i.e., $ABC$ and $BCD$). 

				As such, we formulate the following feature on a graph:

				\begin{definition}(Chain Ordering Proof). 
					A market is chain ordering proof w.r.t. to the conservative compression if the ordering of entangled chains by $\Phi$ does not affect the efficiency of compression.
				\end{definition}

				If the configuration of entangled chains is such that, according to the initial ordering of excess reduction resulting from a compression on each chain, the optimal sequence is not affected by the effects of compression on other entangled chains, the market is said to be chain ordering proof. Under the above Definition, the optimal conservative compression yields a Directed Acyclic Graph (DAG) where the excess is given by the following expression:

				\begin{proposition}\label{prop:conservative-efficiency}
				    Given a market $G=(N,E)$. If there are no entangled chains, we have:
				    \begin{equation}
				    	\Delta_{res}(G) = \Delta(G) - \sum_{K_i\in\Pi}\Phi(E_{K_i}) \nonumber
				    \end{equation}
				    In the presence of entangled chains, if $G=(N,E)$ is chain-ordering proof, we have
				    \begin{equation}
				    	\Delta_{res}(G) > \Delta(G) - \sum_{K_i\in\Pi}\Phi(E_{K_i}) \nonumber
				    \end{equation}
				    Where $\Pi$ is the set of all chains of intermediation in $G$.
				\end{proposition} 
				\begin{proof}
					\noindent
						If there are no entangled chains in $G=(N,E)$, then the following conservative procedure:
					    \begin{enumerate}
					    \item list all closed chains of intermediation $K_i \in \Pi$ and
					    \item maximally compress each chain separately,
					    \end{enumerate}
					    
					    \noindent reaches  maximal efficiency. The residual excess is given after aggregating the excess removed on each closed chain separately:
					\begin{equation}
					    	\Delta_{res}(G) = \Delta(G) - \sum_{K_i\in\Pi}|E_i|min_e\{E_i\}.\nonumber
					\end{equation}
					    If there are entangled chains but the market $G=(N,E)$ is chain ordering proof, compressing chains separately only provides the upper bound as there will be cases where entangled chains will need to be updated (due to the reduction of one or more edges). Hence, we have,
					\begin{equation}
					    	\Delta_{res}(G) > \Delta(G) - \sum_{K_i\in\Pi}|E_i| \min_e\{E_i\}. \nonumber
					    	\end{equation}
				\end{proof}
        
				For illustrative purpose, we present an algorithm that always reaches a global solution under the chain ordering proof assumption in the Appendix~\ref{app-algorithms}.

%%%%%%%%%%%%%%%%%%%%%%%%%%%%%%%%%%%%%%%%%%%%%%%%%%%%%%%%%%%%%%%%%%%%%%%%%%%%%%%%%%%%%%
%%%%%%%%%%%%%%%%%%%%%%%%%%%%%%%%%%%%%%%%%%%%%%%%%%%%%%%%%%%%%%%%%%%%%%%%%%%%%%%%%%%%%%
\newpage
\section{Compression Algorithms\label{app-algorithms}}
%%%%%%%%%%%%%%%%%%%%%%%%%%%%%%%%%%%%%%%%%%%%%%%%%%%%%%%%%%%%%%%%%%%%%%%%%%%%%%%%%%%%%%
	\subsection{Non-Conservative Algorithm}

		In order to provide a rigorous benchmark, we propose a deterministic non-conservative compression algorithm that eliminates all excess. 
		% In particular, the solution of the algorithm minimizes the number of bilateral positions and maximizes their concentration.\\

		\begin{algorithm}[H]
			\KwData{Original Market G=(N,E)}
			\KwResult{$G^*$ such that $\Delta_v(G^*) = 0$}
			Let $N^+ = \{s \mbox{~~s.t.~} v^s_n>0 \mbox{~~and~} s\in N\}$ be ordered such that $v^{net}_1>v^{net}_2$\;
			Let $N^- = \{s \mbox{~~s.t.~} v^{net}_s<0\mbox{~~and~} s\in N\}$ be ordered such that $v^{net}_1 >v^{net}_2$\;
			% initialization\;
			Let i = 1 and j = 1\;
			\While{$i != |N^+|$ and $j != |N^-|$}{
				Create edge $e^*_{ij}=min(v^{net}_i-\sum_{j'<j}e^*_{ij'},v^{net}_j-\sum_{i'<i}e^*_{i'j})$\;
				\If{$v^{net}_i = \sum_{j'<j}e^*_{ij'}$}
				{
					 $i = i + 1$\;
				}
				% {}
				\If{$v^{net}_j = \sum_{i'<i}e^*_{i'j}$}
				{
					 $j = j + 1$\;
				}
			}
			\caption{A perfectly efficient non-conservative compression algorithm with minimal density}
		\end{algorithm}

		From the initial market, the algorithm constructs two sets of nodes $N^+$ and $N^-$ which contain nodes with positive and negative net positions, respectively. Note that nodes with 0 net positions (i.e., perfectly balanced position) will ultimately be isolated. They are thus kept aside from this point on. In addition, those two sets are sorted from the lowest to the highest absolute net position. The goal is then to generate a set of edges such that the resulting network is in line with the net position of each node. Starting from the nodes with the highest absolute net position, the algorithm generates edges in order to satisfy the net position of at least one node in the pair (i.e., the one with the smallest need). For example, if the node with highest net positive position is $i$ with $v^{net}_i$ and the node with lowest net negative position is $j$ with $v^{net}_j$, an edge will be created such that the node with the lowest absolute net positions does not need more edges to satisfy its net position constraint. Assume that the nodes $i$ and $j$ are isolated nodes at the moment of decision, an edge $e_{ij}=min(v^{net}_i,v^{net}_j)$ will thus be generated. In the more general case where $i$ and $j$ might already have some trades, we discount them in the edge generation process:  $e^*_{ij}=\min(v^{net}_i-\sum_{j'<j}e^*_{ij'},v^{net}_j-\sum_{i'<i}e^*_{i'j})$. The algorithm finishes once all the nodes have the net and gross positions equal.

		The characteristics of the market resulting from a compression that follows the above algorithm are the following

		\begin{quote}
			Given a financial network $G$ and a compression operator c() that is defined by the Algorithm 1, the resulting financial network $G_{min}=c(G)$ is defined as:
			\begin{equation}
				e_{ij} = 
				\begin{cases}
				    \min(v_n^i - \sum_{j'<j}e_{ij'},v_n^j - \sum_{i'<i}e_{i'j}),& \text{if } v_n^i \cdot v_n^j < 0\\
				    0,              & \text{otherwise} \nonumber
				\end{cases}
			\end{equation}
			where $i\in N^+ = \{s \mbox{~~s.t.~} v^s_n>0\}$ and $j\in N^-= \{s \mbox{~~s.t.~} v^s_n<0\}$.\\
			Moreover:
			\begin{itemize}
				\item $G_{min}$ is net-equivalent to $G$
				\item $\Delta_v(G_{min}) = 0$
				% \item $G_{min}$ has the minimum link density
				% \item $G_{min}$ has maximum trade concentration
			\end{itemize}
		\end{quote}
%%%%%%%%%%%%%%%%%%%%%%%%%%%%%%%%%%%%%%%%%%%%%%%%%%%%%%%%%%%%%%%%%%%%%%%%%%%%%%%%%%%%%%
	\subsection{Conservative Algorithm}

		As we did for the non-conservative case, we now propose and analyze a conservative algorithm with the objective function of minimizing the excess of a given market with two constraints: (1) keep the net positions constant and (2) the new set of trades is a subset of the previous one.

		\begin{algorithm}[H]
			\KwData{Original Market G=(N,E)}
			\KwResult{$G^*$ such that $\Delta_v(G^*) < \Delta_v(G)$ and $E^* \in E$}
			Let $\Pi$ be set the of all directed closed chains in $G$\;
			Let $G^*=G$\;
			\While{$\Pi \neq \emptyset$}{
				Select $P=(N',E')\in \Pi$ such that $|N'|.min_{e \in E'} (e) = max_{P_i=(N'_i,E'_i) \in \Pi} (|N'_{P_i}|.min_{e \in E'_{P_i}} (e)))$\;
				$e_{ij} = e_{ij} - min_{e \in E'} (e)$ for all $e_{ij}\in E'$\;
				$E^*=E^*\setminus \{e: e = min(E')\}$\;
				$\Pi \setminus \{P\}$
			}
			\caption{A deterministic conservative compression algorithm}
		\end{algorithm}

		$~$\\

		The algorithm works as follows. First, it stores all the closed chains present in the market. Then, it selects the cycle (i.e., closed chain) that will result in the maximum marginal compression (at the cycle level), that is, the cycle where the combination of the number of nodes and the value of the lowest trades is maximized. From that cycle, the algorithm removes the trade with the lowest notional and subtracts this value from the all the trades in the cycle. It then removes the cycle from the list of cycles and iterates the procedure until the set of cycles in the market is empty.

		At each cycle step $t$ of the algorithm, the excess of the market is reduced by:
		\[
			\Delta_t = \Delta_{t-1} - |N'|min_{e \in E'} (e)
		\]

		At the end of the algorithm, the resulting compressed market does not contain directed closed chains anymore: it is a Directed Acyclic Graph (DAG). Hence no further conservative compression can be applied to it.

%%%%%%%%%%%%%%%%%%%%%%%%%%%%%%%%%%%%%%%%%%%%%%%%%%%%%%%%%%%%%%%%%%%%%%%%%%%%%%%%%%%%%%
%%%%%%%%%%%%%%%%%%%%%%%%%%%%%%%%%%%%%%%%%%%%%%%%%%%%%%%%%%%%%%%%%%%%%%%%%%%%%%%%%%%%%%
\newpage
\section{Programming characterization and optimal algorithm\label{app-programming-charac}}
%%%%%%%%%%%%%%%%%%%%%%%%%%%%%%%%%%%%%%%%%%%%%%%%%%%%%%%%%%%%%%%%%%%%%%%%%%%%%%%%%%%%%%
	
	\subsection{Programming formalization}

		Compression can be seen as the solution of a mathematical program which minimizes a non-decreasing function of gross notional under given net-positions. By introducing constraints on counterparty relationships, we will recover the hybrid and conservative compression.

		In particular, let $E\prime$ denote the set of edges after compression and let $f: E\prime \rightarrow \mathbb{R}$ be  a non decreasing function, the general compression problem is to find the optimal set $e\prime_{ij}$ in the following program: 

		\begin{problem}[General compression problem]
		\label{prob:general}
		\begin{equation}
		\label{eq:lpproblem1}
		\begin{array}{lll}
		\mathrm{min} & f(E\sp{\prime})  \\
		& &\\
		\text{s.t.}&   \sum_j \left(e\sp{\prime}_{ij} - e\sp{\prime}_{ji}\right) = v_i, \forall i \in V &\text{ [net position constraint]}\\ & \\
		& a_{ij} \leq e_{ij}\sp{\prime} \leq b_{ij}, \forall (i, j) \in E^\prime \subseteq (N \times N) & \mbox{[compression tolerances]}\\
		\end{array}
		\nonumber
		\end{equation}
		\end{problem}

		\noindent
		with  $a_{ij} \in [0, \infty)$ and $b_{ij} \in [0, \infty]$. We will refer to $E\sp{\prime}$ as the vector of solutions of the problem.

		Problem \ref{prob:general} maps all the compression types by translating the compression tolerances (counterparty constraints) and adopting a specific functional form for $f$. As we are interested in reducing the total amount of notional, we will set $f(E\sp{\prime}) = \sum_{ij} e_{ij}\sp{\prime}$.
		The non-conservative compression problem is obtained by setting $e_{ij} \in [0, \infty)$, as follows:

		\begin{problem}[Non-conservative compression problem]
		\label{prob:nonconservative}
		\begin{equation}	
		\begin{array}{lll}
		\mathrm{min} & \sum_{ij} e^{\prime}_{ij}  \\
		& &\\
		\text{s.t.}&   \sum_j \left(e^{\prime}_{ij} - e^{\prime}_{ji}\right) = v_i, \;\; \forall i \in N &\\ & \\
		&  e_{ij}^{\prime} \in [0, \infty), \forall (i, j) \in N \times N & \\
		\end{array}
		\nonumber
		\end{equation}
		\end{problem}

		In problem \ref{prob:nonconservative} the tolerances are set to the largest set possible. By further reducing these tolerances for the customer sets, we obtain the hybrid compression problem:

		\begin{problem}[Hybrid compression problem]
		\label{prob:hybrid}
		\begin{equation}	
		\begin{array}{lll}
		\mathrm{min} & \sum_{ij} e^{\prime}_{ij} \\
		& &\\
		\text{s.t.}&   \sum_j \left(e^{\prime}_{ij} - e^{\prime}_{ji}\right) = v_i, \forall i \in N &\\ & \\
		&  e_{ij}^{\prime} = e_{ij}, \forall (i, j) \in E^C  & \\
		&  e_{ij}^{\prime} \in [0, \infty), \forall (i, j)  \in E^D & \\
		\end{array}
		\nonumber
		\end{equation}
		\end{problem}

		Last, by further restricting tolerances, we obtain the conservative compression problem:

		\begin{problem}[Conservative compression problem]
		\label{prob:conservative}
		\begin{equation}
		\begin{array}{lll}
		\mathrm{min} & \sum_{ij} e^{\prime}_{ij} \\
		& &\\
		\text{s.t.}&   \sum_j \left(e^{\prime}_{ij} - e^{\prime}_{ji}\right) = v_i, \forall i \in N &\\ & \\
		&  0 \leq e_{ij}^{\prime}  \leq e_{ij}, \forall (i, j) \in E & \\
		\end{array}
		\nonumber
		\end{equation}
		\end{problem}

		All problems can be interpreted as standard linear programs, which can be solved in numerous ways. We propose specific closed form solutions for the non-conservative compression problem. For the conservative and hybrid approaches, the general case where the network is not chain ordering proof, a global solution can be obtained via linear programming techniques. The results presented in this paper were obtained using a network simplex method. we refer the reader to \cite{ahuja1993network} for details on the simplex algorithm, its mathematical properties and the relative proofs.
%%%%%%%%%%%%%%%%%%%%%%%%%%%%%%%%%%%%%%%%%%%%%%%%%%%%%%%%%%%%%%%%
\newpage
\section{Efficiency ratios: invariance under scale transformations\label{app-rescaling-ratios}}
%%%%%%%%%%%%%%%%%%%%%%%%%%%%%%%%%%%%%%%%%%%%%%%%%%%%%%%%%%%%%%%%
We show that the both the excess ratio and the compression efficiency ratio for conservative compression are invariant to scale transformations.

		\begin{lemma}\label{lemma:scaletransformations} Let $G = (N, E)$ a market with associated exposure matrix $e_{ij}$, and $G(\alpha) = (N, E(\alpha))$ a market with exposure matrix $e_{ij}(\alpha) = \alpha \times e_{ij}$, where $\alpha$ is a strictly positive real number. The following relations hold:
		\begin{enumerate}
		\item \label{longproof:start} $v_i^{\text{net}}(\alpha) = \alpha v_i^{\text{net}} \;\; \forall i \in V$;
		\item \label{longproof:first} $x(\alpha) = \alpha x$, where $x = \sum_{ij} e_{ij} \text{ and } x(\alpha) = \sum_{ij}e_{ij}(\alpha)$;
		\item \label{longproof:second} $m(\alpha) = \alpha m$
			\item \label{longproof:third}	$\Delta(G(\alpha)) = \alpha \Delta(G)$;
			\item \label{longproof:fourth} $\epsilon_n(G(\alpha)) = \epsilon_n (G)$;
			\item \label{longproof:fifth} $\rho_c(G(\alpha)) = \rho_c(G)$.

		\end{enumerate}
		\end{lemma}
			
\begin{proof}
Point \ref{longproof:start} holds since $$v_i^{\text{net}}(\alpha) = \sum_j \alpha e_{ij} - \sum_j \alpha e_{ji} = \alpha v_i^{\text{net}},$$
which implies that each net position is simply rescaled by a factor $\alpha$.
Points \ref{longproof:first} and \ref{longproof:second} are easily proven by multiplying by $\alpha$ and hence \ref{longproof:third} and \ref{longproof:fourth} follow straighforwardly by the definition of excess. 

For point \ref{longproof:fifth}, we exploit the programming characterisation of the conservative compression problem and show that the optimal solutions of the programme for $G(\alpha)$ coincides with that of $G$ rescaled by $\alpha$. 

The programme for $G(\alpha)$ can be expressed as follows:
	\begin{equation}
		\begin{array}{lll}
		\mathrm{min} & \frac{1}{\alpha} \sum_{ij} e^\prime_{ij}(\alpha) \\
		& &\\
		\text{s.t.}&   \frac{1}{\alpha} \sum_j \left(e^\prime_{ij} (\alpha)- e^\prime_{ji}(\alpha)\right) = \frac{1}{\alpha} v_i(\alpha) = \frac{1}{\alpha}\alpha v_i^{\text{net}}, \forall i \in N &\\ & \\
		&  0 \leq \frac{1}{\alpha} e_{ij}^\prime (\alpha) \leq  \frac{1}{\alpha} e_{ij}(\alpha), \forall (i, j) \in E & \\
		\end{array}
		\nonumber
		\end{equation}

By posing $e_{ij}^\prime (\alpha) = \alpha e_{ij}^\star$ we observe that $e_{ij}^\star = e_{ij}^\prime$. Point \ref{longproof:fifth} follows by computing the ratio $\rho_c(G(\alpha))$ and applying  \ref{longproof:third}.

\end{proof}

%%%%%%%%%%%%%%%%%%%%%%%%%%%%%%%%%%%%%%%%%%%%%%%%%%%%%%%%%%%%%%%%
\newpage
\section{Sampling statistics\label{app-sampling-stats}}
%%%%%%%%%%%%%%%%%%%%%%%%%%%%%%%%%%%%%%%%%%%%%%%%%%%%%%%%%%%%%%%%
		Table \ref{tab:generalstats-raw} reports the main statistics of the sampled data over time. The total notional of the selected 100 entities varies between 380Bn Euros and 480Bn Euros retaining roughly $30-34\%$ of the original total gross notional. The average number of counterparties across the 100 entities is stable and varies between 45 and 58 individual counterparties. 
		% The average number of links is also stable and varies between 109 and 129 bilateral outstanding contracts.

		\begin{table}[ht]
		\centering
		\begin{tabular}{lccc}
		Time & \begin{tabular}{l}Gross notional\\ of 100 top ref.\\ (E+11 euros) \end{tabular}& \begin{tabular}{l} Share of \\ gross notional \\ of 100 top ref. \end{tabular} & \begin{tabular}{l} Avg \\participants \\ per ref.\end{tabular}\\ 
		\toprule
			Oct-14 	&	3.88	&	0.358	&	 54\\
			Nov-14 	&	4.16	&	0.349	&	 55  \\ 
			Dec-14 	&	4.4		&	0.357	&	 58  \\ 
			Jan-15 	&	4.73	&	0.361	&	 57 \\ 
			Feb-15 	&	4.67	&	0.355	&	 57  \\ 
			Mar-15 	&	4.35	&	0.351	&	 51  \\ 
			Apr-15 	&	3.87	&	0.338	&	 46 \\ 
			May-15 	&	3.91	&	0.337	&	 45 \\ 
			Jun-15 	&	3.86	&	0.343	&	 47  \\ 
			Jul-15 	&	3.9		&	0.347	&	 50  \\ 
		  	Aug-15 	&	3.9		&	0.344	&	 52  \\ 
			Sep-15 	&	3.94	&	0.350	&	 53  \\ 
			Oct-15 	&	4.08	&	0.349	&	 55  \\ 
			Nov-15 	&	4.18	&	0.351	&	 55  \\ 
			Dec-15 	&	4.24	&	0.348	&	 55 \\ 
			Jan-16 	&	4.39	&	0.351	&	 55  \\ 
			Feb-16 	&	4.33	&	0.348	&	 56  \\ 
			Mar-16 	&	3.94	&	0.350	&	 49 \\ 
			Apr-16 	&	4.37	&	0.352	&	 49 \\ 
		\bottomrule
		\end{tabular}
		\caption{General coverage statistics of the dataset over time: total outstanding gross notional of the sampled markets, share of sampled market's gross notional against the full dataset and average number of participant in each sampled market.}\label{tab:generalstats-raw}
		\end{table}

%%%%%%%%%%%%%%%%%%%%%%%%%%%%%%%%%%%%%%%%%%%%%%%%%%%%%%%%%%%%%%%%
\newpage
\section{Excess and efficiency in bilaterally compressed markets\label{app-bilateral-market-compression-analysis}}
%%%%%%%%%%%%%%%%%%%%%%%%%%%%%%%%%%%%%%%%%%%%%%%%%%%%%%%%%%%%%%%%
		In derivatives market like CDS markets, participants, specially dealers, reduce some positions by writing a symmetric contract in the opposite direction with the same counterparty. Analyzing the bilaterally compressed market thus allows us to quantify excess and compression efficiency beyond the redundancy incurred by this specific behavior.

		As we have seen, bilateral excess, on average, accounts for half the excess of the original markets. In order to understand excess and compression beyond bilateral offsetting, we analyze further the bilaterally compressed markets. First, we obtain dealer-customer network characteristics reported in Table~\ref{tab:bowtiestats-bil} after bilateral compression. While the participant-based statistics mirror Table~\ref{tab:bowtiestats-raw}, there is a reduction in all obligation-related statistics except the intra-customer density which remains the same: the average number of obligations is reduced by $25$ percentage points. while the intra-dealer share of notional is only affected by 5 percentage points. Hence, we see that, despite the density reduction, the bulk of the activity remains in the intra-dealer activity after bilateral compression.\footnote{Note that the average intra-customer density is equal to Table~\ref{tab:bowtiestats-raw}. In theory, we should have doubled the value as the density of the bilaterally netted intra-customer segment should be seen as the density of a undirected graph. We kept the previous definition to highlight the fact that the intra-customers obligations are not affected by the bilateral compression and avoid a misinterpretation of density increase.}

		\begin{sidewaystable}
			\centering
			\begin{tabular}{lccccccccc}

			Time & \begin{tabular}{c}Avg num.\\ dealers\end{tabular} & \begin{tabular}{c}Avg num. \\ customers \\buying\end{tabular} & \begin{tabular}{c}Avg num.\\ customers \\selling\end{tabular} & \begin{tabular}{c}Avg num.\\ obligations\end{tabular} &
			\begin{tabular}{c}Avg. share\\ intra-dealer \\ notional \end{tabular} & \begin{tabular}{c}Avg. \\ density\end{tabular} & \begin{tabular}{c}Avg.\\ intra-dealer \\ density\end{tabular} & \begin{tabular}{c}Avg.\\ intra-customer \\ density\end{tabular} \\ 
			  \toprule
			Oct-14	&		18 &              16 &               20 &             115.87 &                              0.767 &              0.075 &                    0.221 &                      0.0010 \\
			Nov-14	&	19 &              16 &               21 &             121.61 &                              0.779 &              0.077 &                    0.227 &                      0.0006 \\
			Dec-14	&	19 &              17 &               21 &             128.24 &                              0.777 &              0.076 &                    0.221 &                      0.0005 \\
			Jan-15	&	19 &              17 &               21 &             127.94 &                              0.778 &              0.075 &                    0.219 &                      0.0006 \\
			Feb-15	&	19 &              17 &               21 &             126.31 &                              0.782 &              0.075 &                    0.221 &                      0.0004 \\
			Mar-15	&	18 &              15 &               17 &             114.74 &                              0.786 &              0.078 &                    0.225 &                      0.0007 \\
			Apr-15	&	18 &              13 &               15 &             106.28 &                              0.786 &              0.079 &                    0.229 &                      0.0005 \\
			May-15	&	18 &              12 &               15 &             106.18 &                              0.782 &              0.078 &                    0.224 &                      0.0008 \\
			Jun-15	&	18 &              13 &               14 &             105.20 &                              0.783 &              0.076 &                    0.216 &                      0.0010 \\
			Jul-15	&	19 &              14 &               14 &             107.05 &                              0.766 &              0.072 &                    0.211 &                      0.0009 \\
			Aug-15	&	19 &              15 &               17 &             111.49 &                              0.776 &              0.073 &                    0.208 &                      0.0011 \\
			Sep-15	&	19 &              16 &               17 &             114.12 &                              0.755 &              0.071 &                    0.204 &                      0.0018 \\
			Oct-15	&	19 &              16 &               18 &             117.22 &                              0.766 &              0.072 &                    0.200 &                      0.0013 \\
			Nov-15	&	19 &              17 &               19 &             120.52 &                              0.762 &              0.072 &                    0.198 &                      0.0017 \\
			Dec-15	&	19 &              17 &               19 &             120.76 &                              0.772 &              0.072 &                    0.198 &                      0.0012 \\
			Jan-16	&	20 &              17 &               18 &             121.06 &                              0.774 &              0.072 &                    0.198 &                      0.0013 \\
			Feb-16	&	19 &              17 &               18 &             121.05 &                              0.763 &              0.071 &                    0.197 &                      0.0012 \\
			Mar-16	&	18 &              14 &               17 &             108.03 &                              0.739 &              0.070 &                    0.205 &                      0.0018 \\
			Apr-16	&	19 &              14 &               17 &             109.29 &                              0.759 &              0.071 &                    0.204 &                      0.0019 \\

			   \bottomrule
			\end{tabular}
			\caption{Dealers/customers statistics after bilateral compression.}\label{tab:bowtiestats-bil}
		\end{sidewaystable}

		In terms of excess, Table~\ref{tab:excess-stats-bil} complements the results from the bilateral compression efficiency and reports statistics similar to Table~\ref{tab:excess-stats-raw}.\footnote{The relationship between the bilateral compression efficiency, $\rho_b$ and the relative excesses in the original market, $\epsilon^o$, and the bilaterally compressed market, $\epsilon^b$, is given by $\rho_b=(1-\frac{1-\epsilon^{o}}{1-\epsilon^{b}})\frac{1}{\epsilon^{o}}$. This expression directly follows from the definition of each parameter.} At the extremes, we note again high degrees of variability: for example, in mid-January 2016, the minimum level of excess was $0.261$ while the maximum was $0.809$. Nevertheless, results on the means and medians are stable over time and alway higher than $0.5$. We thus see that, in general, around half of the gross notional of bilaterally compressed market remains in excess vis-a-vis market participants' net position. Note that the gross notional used here is the total notional left after bilateral compression on the original market.

		\begin{table}
			\centering
			\begin{tabular}{lrrrrrrr}
				\textbf{Total Excess}  & Oct-14 & Jan-15 & Apr-15 & Jul-15 & Oct-15 & Jan-16 & Apr-16 \\ 
				\toprule
				min          &  0.422 &  0.423 &  0.290 &  0.257 &  0.366 &  0.261 &  0.293 \\
				max          &  0.811 &  0.811 &  0.798 &  0.809 &  0.820 &  0.809 &  0.781 \\
				mean         &  0.614 &  0.621 &  0.614 &  0.602 &  0.597 &  0.570 &  0.558 \\
				stdev        &  0.087 &  0.087 &  0.091 &  0.095 &  0.097 &  0.112 &  0.098 \\
				first quart. &  0.562 &  0.558 &  0.562 &  0.544 &  0.531 &  0.489 &  0.503 \\
				median       &  0.617 &  0.618 &  0.614 &  0.613 &  0.594 &  0.569 &  0.566 \\
				third quart. &  0.670 &  0.684 &  0.674 &  0.663 &  0.654 &  0.653 &  0.635 \\
			\end{tabular}
			\caption{Excess statistics after bilateral compression}
			\label{tab:excess-stats-bil}
		\end{table}

		Table~\ref{tab:compression-results-bil} reports the results related to the efficiency of conservative and hybrid compression applied to the already bilaterally compressed market. On the extremes, both the conservative and the hybrid compression perform with various degrees of efficiency: the minimum amount of excess reduction via conservative compression (resp. hybrid compression) oscillates around $15\%$ (resp. $35\%$) while the maximum amount of excess oscillates around $90\%$ (resp. $97\%$). This shows that compression can perform very efficiently and very poorly with both approaches. However, the fact that conservative compression reaches $90\%$ of excess removal shows the possibility of having very efficient compression despite restrictive compression tolerances. The mean and the median of both approaches are stable over time: both around $60\%$ for the conservative compression and $75\%$ for the hybrid compression. Overall, we find that each compression algorithm is able to remove more than half of the excess from the market. The hybrid compression allows for greater performances as a result of relaxing intra-dealer compression tolerances.

		\begin{table}
			\centering
			\begin{tabular}{lrrrrrrr}
				\textbf{Conservative ($\rho_c$)}  & Oct-14 & Jan-15 & Apr-15 & Jul-15 & Oct-15 & Jan-16 & Apr-16 \\ 
				\toprule
				min          &  0.160 &  0.203 &  0.140 &  0.163 &  0.165 &  0.119 &  0.098 \\
				max          &  0.894 &  0.927 &  0.923 &  0.878 &  0.912 &  0.911 &  0.878 \\
				mean         &  0.568 &  0.622 &  0.599 &  0.592 &  0.555 &  0.552 &  0.525 \\
				stdev        &  0.166 &  0.160 &  0.164 &  0.158 &  0.175 &  0.183 &  0.172 \\
				first quart. &  0.456 &  0.505 &  0.512 &  0.489 &  0.435 &  0.437 &  0.409 \\
				median       &  0.562 &  0.636 &  0.594 &  0.591 &  0.537 &  0.550 &  0.546 \\
				third quart. &  0.685 &  0.729 &  0.728 &  0.705 &  0.680 &  0.687 &  0.643 \\
				\\
				\textbf{Hybrid ($\rho_h$)}  & Oct-14 & Jan-15 & Apr-15 & Jul-15 & Oct-15 & Jan-16 & Apr-16 \\ 
				\toprule
				min          &  0.370 &  0.460 &  0.377 &  0.281 &  0.259 &  0.281 &  0.135 \\
				max          &  0.971 &  0.973 &  0.968 &  0.963 &  0.977 &  0.974 &  0.981 \\
				mean         &  0.724 &  0.763 &  0.760 &  0.755 &  0.738 &  0.735 &  0.752 \\
				stdev        &  0.149 &  0.130 &  0.130 &  0.130 &  0.146 &  0.140 &  0.148 \\
				first quart. &  0.623 &  0.691 &  0.678 &  0.674 &  0.626 &  0.642 &  0.679 \\
				median       &  0.735 &  0.785 &  0.781 &  0.778 &  0.775 &  0.756 &  0.784 \\
				third quart. &  0.846 &  0.866 &  0.859 &  0.866 &  0.849 &  0.851 &  0.845 \\
			\end{tabular}
			\caption{Statistics of compression efficiency after bilateral compression}
			\label{tab:compression-results-bil}
		\end{table}

%%%%%%%%%%%%%%%%%
\end{document}